\documentclass[11pt,letterpaper]{article}
\usepackage[T1]{fontenc}
\usepackage{amsmath,amsthm,amssymb,mathtools}
\usepackage{mathpazo,microtype}
\usepackage{booktabs,tabularx}
\usepackage{needspace}
\usepackage[margin=1in]{geometry}
\usepackage{xcolor}
\definecolor{deepred}{rgb}{0.58,0,0}
\definecolor{darkgreen}{rgb}{0,0.31,0.11}
\usepackage[pagebackref,colorlinks=true,linkcolor=deepred,citecolor=darkgreen,urlcolor=blue]{hyperref}
\usepackage[capitalise,nameinlink]{cleveref}
\usepackage{aliascnt}
\allowdisplaybreaks[1]
\newtheorem{theorem}{Theorem}[section]
\newaliascnt{lemma}{theorem}
\newtheorem{lemma}[lemma]{Lemma}
\aliascntresetthe{lemma}
\newaliascnt{proposition}{theorem}
\newtheorem{proposition}[proposition]{Proposition}
\aliascntresetthe{proposition}
\newaliascnt{corollary}{theorem}
\newtheorem{corollary}[corollary]{Corollary}
\aliascntresetthe{corollary}
\newaliascnt{fact}{theorem}
\newtheorem{fact}[fact]{Fact}
\aliascntresetthe{fact}
\theoremstyle{definition}

\crefname{fact}{Fact}{Facts}
\crefname{lemma}{Lemma}{Lemmas}
\crefname{corollary}{Corollary}{Corollaries}
\crefname{proposition}{Proposition}{Propositions}
\crefname{appendix}{Appendix}{Appendices}
\usepackage{thm-restate}

\newcommand{\E}{\mathbb E}
\newcommand{\R}{\mathbb R}
\newcommand{\Q}{\mathbb Q}
\newcommand{\1}{\mathbf 1}
\DeclareMathOperator{\per}{per}
\DeclareMathOperator{\Cov}{Cov}
\DeclareMathOperator{\poly}{poly}
\DeclareMathOperator{\bits}{bits}
\hypersetup{pdftitle={Subexponential Approximation of the Permanent in Deterministic Polynomial Time},pdfauthor={First Name Last Name, First Name Last Name, First Name Last Name}}
\title{Subexponential Approximation of the Permanent\\in Deterministic Polynomial Time}
\author{%
  Sergei Kudria\\
  {\small CUHKSZ}\\
  {\small\texttt{sergeikudria@link.cuhk.edu.cn}}
  \and
  Jason Luo\\
  {\small MIT}\\
  {\small\texttt{luojason@mit.edu}}
  \and
  Mahbod Majid\thanks{Supported by the 2026 Apple Scholars in AI/ML PhD fellowship.}\\
  {\small MIT}\\
  {\small\texttt{mahbod@mit.edu}}}
\date{}
\makeatletter
\renewcommand{\@maketitle}{%
  \newpage\null\vskip 1em
  \begin{center}
    {\LARGE\@title\par}
    \vskip 1.5em
    {\normalsize\lineskip .5em
      \begin{tabular}[t]{c}\@author\end{tabular}\par}
  \end{center}\vskip 1em}
\makeatother
\begin{document}
\hypersetup{pageanchor=false}
\pagenumbering{gobble}
\maketitle
\begin{abstract}
We give the first deterministic polynomial time algorithm that approximates the permanent of arbitrary nonnegative rational matrices within a subexponential factor. For a matrix of order $n$, the approximation factor is
\[
 \exp\!\left(O\!\left(\frac{n(\log\log n)^2}{\log n}\right)\right)=\exp(o(n)).
\]
All previously known deterministic polynomial time guarantees for unrestricted inputs had approximation factors $\exp(\Omega(n))$.

Our proof uses convex optimization to tighten an upper bound on the permanent. The bound is based on weighted sums over all matchings in a bipartite graph representing the matrix, and correlations between unmatched vertices control its error. We approximate these sums deterministically using correlation decay and a bound on the effect of vertex deletion.
\end{abstract}
\clearpage
\tableofcontents
\clearpage
\pagenumbering{arabic}
\hypersetup{pageanchor=true}
\section{Introduction}
The permanent is one of the basic examples through which we understand the power of randomness in approximate counting. For a matrix $A\in\Q_{\geq0}^{n\times n}$, it is defined by
\[
 \per(A)=\sum_{\pi\in S_n}\prod_{i=1}^n A_{i,\pi(i)},
\]
where $S_n$ is the set of permutations of $[n]=\{1,\ldots,n\}$. A nonzero entry $A_{ij}$ can be viewed as the weight of an edge joining row $i$ to column $j$. The permanent is then the total weight of perfect matchings: ways of assigning every row to a distinct column. For zero-one matrices it counts feasible assignments, and for general weights it also normalizes the probability distribution on assignments in which each assignment has probability proportional to the product of its edge weights.

Exact evaluation is $\#\mathrm P$-complete even for zero-one matrices~\cite{Valiant79}. In contrast, Jerrum, Sinclair, and Vigoda~\cite{JSV04} gave a randomized fully polynomial time approximation scheme: a relative approximation with running time polynomial in the input length and inverse accuracy. A deterministic scheme remains open. Even substantially weaker deterministic guarantees are of interest, since they measure progress toward removing randomness from a central approximate counting problem.

The deterministic problem has seen a sequence of improvements over nearly three decades. Beginning in 1998, Linial, Samorodnitsky, and Wigderson~\cite{LSW00} used matrix scaling to obtain an $e^n$ approximation. Subsequent work reduced this factor~\cite{Gurvits05,Samorodnitsky08,GS14}. Anari and Rezaei~\cite{AR21} reached $2^{n/2}$, and recent preprints of Anari~\cite{Anari26} and Narang and Perkins~\cite{NP26} improve the base below $\sqrt2$. \Cref{tab:history} summarizes this progression. Despite these improvements, the best guarantees for unrestricted inputs remained exponential in $n$.

\begin{table}[!ht]
\centering
\small
\renewcommand{\arraystretch}{1.22}
\begin{tabularx}{\textwidth}{@{}l >{\raggedright\arraybackslash}X >{\raggedright\arraybackslash}p{0.34\textwidth}@{}}
\toprule
\textbf{Year} & \textbf{Work} & \textbf{Approximation factor}\\
\midrule
1998 & Linial--Samorodnitsky--Wigderson~\cite{LSW00}
     & $e^n$\\
2005 & Gurvits~\cite{Gurvits05}
     & $e^n/n^k$, any fixed $k>0$\\
2006 & Samorodnitsky~\cite{Samorodnitsky08}
     & $\exp(n-\Omega(n/\log n))$\\
2014 & Gurvits--Samorodnitsky~\cite{GS14}
     & $c_0^n$, $c_0\approx1.9022$\\
2018 & Anari--Rezaei~\cite{AR21}
     & $2^{n/2}$\\
2026 & Anari~\cite{Anari26} and Narang--Perkins~\cite{NP26}
     & $c^n$, absolute $1<c<\sqrt2$\\
\midrule
\textbf{This work} & \textbf{\Cref{thm:polynomial}}
     & $\boldsymbol{\exp(o(n))}$\\
\bottomrule
\end{tabularx}
\caption{Progress in deterministic polynomial time approximation of the permanent of arbitrary nonnegative rational $n\times n$ matrices. Factors bound $U/L$ for estimates $L\leq\per(A)\leq U$ when the permanent is positive. Asymptotic bounds are for sufficiently large $n$. Years indicate first dissemination. The 2014 base refines the $2^n$ bound originally stated by Gurvits and Samorodnitsky~\cite{GS14}, as noted by Anari--Rezaei~\cite[Section~1.1, footnote~3]{AR21}. The polynomial degree in the 2005 row may depend on the fixed $k$.}
\label{tab:history}
\end{table}

\clearpage

\begin{samepage}
We give the first deterministic polynomial time approximation within a subexponential factor for arbitrary nonnegative rational matrices. In particular, our guarantee is $\exp(o(n))$, so it eventually improves on every fixed exponential factor $c^n$ with $c>1$.\par
\end{samepage}

\paragraph{Computational model.}
Each entry of $A$ is given as a nonnegative integer numerator and a positive integer denominator, both written in binary. Let $S$ denote the total length of this description. All running times count bit operations on a deterministic Turing machine. Thus polynomial time means polynomial in the description length. For example, an entry $2^{-b}$ requires $O(b)$ bits. Rational parameters and output endpoints use the same representation. We write $\bits(x)$ for the encoding length of a rational number $x$ and $\poly$ for a polynomial of absolute degree.

With the computational model in place, we can now state our main result.
A bipartite matching test on the nonzero entries of $A$ detects in
polynomial time whether $\per(A)=0$. In that case we return zero.
Henceforth we assume $\per(A)>0$.

\begin{restatable}[Polynomial time approximation]{theorem}{polynomialapproximation}\label{thm:polynomial}
There is a deterministic algorithm that, given an explicitly encoded $A\in\Q_{\geq0}^{n\times n}$ of total bit length $S$, returns positive rational numbers $L,U$ such that
\[
 L\leq\per(A)\leq U,
 \qquad
 \log\frac UL\lesssim
\frac{n(\log\log n)^2}{\log n}.\mbox{\footnotemark}
\]
\footnotetext{The notation $f\lesssim g$ means $f\leq Cg$ for an absolute constant $C$.}
The algorithm uses $\poly(S)$ bit operations.
\end{restatable}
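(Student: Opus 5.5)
The plan follows the abstract. First, an upper bound on $\per(A)$ expressed through the monomer--dimer (matching) partition function of the bipartite graph $G_A$. Then convex optimization to make this bound tight. Then a deterministic approximation of the matching sums that the bound requires.

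\textbf{Step 1: preprocessing.} Use deterministic Sinkhorn scaling, as in Linial--Samorodnitsky--Wigderson, to reduce to a nearly doubly stochastic matrix. This changes $\per(A)$ by an exactly known factor. Next, truncate entries below $n^{-C}$; this costs only a factor $\exp(O(n^{1-c}))$ once the matrix is nearly doubly stochastic. Entries are rounded to $\poly(n)$ bits, so all later numerics are polynomial in $S$.

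\textbf{Step 2: the upper bound.} Attach a weight $\lambda$ (or vertex activities $x_v>0$) to unmatched vertices. Consider
\[
 Z(x)=\sum_{M}\prod_{e\in M}A_e\prod_{v\notin V(M)}x_v ,
\]
the sum over all matchings $M$. Clearly $\per(A)\le Z(x)$, with the perfect matchings as the $x$-free term. Better, $\per(A)\le \inf_x Z(x)\cdot(\text{correction})$ for a suitable normalization. The key inequality to establish is a reverse bound
\[
 \log Z(x)-\log\per(A)\lesssim \text{(sum of correlations between unmatched vertices)} .
\]
I would prove it by peeling off unmatched pairs one at a time and applying a Bethe/Schrijver-type lower bound to the resulting ratios. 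The ratios $Z_{G-u-v}/Z_G$ are then controlled by pair correlations. The variables $x$ enter $\log Z$ through a log-convex expression in $\log x$, so minimizing the bound over $x$ is a convex program. We solve it deterministically by the ellipsoid method, to accuracy $1/\poly(n)$, using only approximate gradient oracles. The gradient is the vector of unmatched-vertex marginals. The optimum sets those marginals so that the correlation term totals $O(n(\log\log n)^2/\log n)$.

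\textbf{Step 3: deterministic evaluation of $Z$ and its marginals.} For monomer--dimer models, Bayati et al.\ give correlation decay at activity $\lambda$ with rate roughly $1-\Theta(1/\sqrt{\lambda\Delta})$. Use the self-avoiding-walk tree recursion truncated at depth $\ell\approx \log n/\log\log n$, which gives time $n^{O(1)}$ whenever the effective degree is $(\log n)^{O(1)}$. High-degree vertices are the problem. To handle them, we use the vertex-deletion bound, stating that deleting a vertex changes $\log Z$ by a controlled amount. We delete or sparsify up to $n/\log n\cdot\poly\log\log n$ vertices, with total error within budget. The parameters to balance are the depth $\ell$, the activity, and the number of deletions. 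This balance is where the $(\log\log n)^2/\log n$ arises: one $\log\log n$ factor from the required decay per level and one from the truncation depth.

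\textbf{Main obstacle.} I expect Step 2's reverse inequality to be the hardest. It must show that at the convex optimum the correlations between unmatched vertices sum to only $o(n)$, uniformly over arbitrary nonnegative matrices. My first attempt would be a potential argument: the first-order optimality conditions equalize the marginals, and a second-moment (covariance) bound on the number of unmatched vertices then limits how much mass non-perfect matchings can carry.
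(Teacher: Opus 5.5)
\paragraph{The analytic core is missing.} Your choice of variables is fine. Since $\prod_{v\notin V(M)}x_v=\prod_v x_v\prod_{uv\in M}(x_ux_v)^{-1}$, your $Z(x)$ is exactly the paper's bound $e^{\Phi_A(\log w)}$ at the row/column scaling $w_{uv}=A_{uv}/(x_ux_v)$. The problem is that the program is unconstrained, and then it is degenerate: sending $x\to0$ gives $\inf_xZ(x)=\per(A)$. All the difficulty then moves into evaluating $Z$ at unbounded edge weights, where correlation decay is useless.

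Two ideas are missing. First, a per-vertex budget $\sum_{u\sim v}w_{uv}\le\Lambda$. This is what makes the recursion contract at rate $1-\Theta(1/\sqrt\Lambda)$ regardless of degree. Second, a proof that the constrained optimum is within $O(n/\sqrt\Lambda)$ of $\log\per(A)$. In the paper that proof runs as follows.
\begin{itemize}
\item The optimality conditions at the constrained optimum give a nonnegative edge-supported $\Delta$. Its row and column sums are the monomer probabilities $q_i,q_j$, and $\Delta_{ij}\le w_{ij}(q_i+q_j)/\Lambda$. The same structure arises for your scalings, with $\Delta_{ij}=w_{ij}(\lambda_i+\lambda_j)$ from the KKT multipliers.
\item The Heilmann--Lieb bound $\mu_{ij}\ge w_{ij}q_iq_j$ and weighted Cauchy--Schwarz then give $n\ge\Lambda d^2/(2n)$. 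So the expected number of unmatched rows satisfies $d\le\sqrt2\,n/\sqrt\Lambda$.
\item Joint convexity of the normalized objective in $(x,\log\Lambda)$ shows that the optimum decreases at rate at most $d$ in $\log\Lambda$. Integrating to $\Lambda=\infty$ bounds the gap by $2\sqrt2\,n/\sqrt\Lambda$.
\end{itemize}
Your suggested route, peeling unmatched pairs with Bethe/Schrijver-type bounds, has no visible mechanism for $o(n)$ logarithmic error. Those inequalities are themselves accurate only up to factors like $2^{n/2}$. Your stated target, that correlations between unmatched vertices sum to $o(n)$, also misplaces their role. Positive correlation enters as a \emph{lower} bound that forces budget-saturating edges to be used; the quantity that must be small is $d$.

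\paragraph{Step 3 fails on dense inputs.} For the all-ones matrix the optimal weights are uniform, $\Lambda/n$ on every edge. So every vertex has degree $n$, and every edge lies far below any useful threshold. Deleting a sublinear number of vertices leaves all degrees of order $n$. Dropping the small edges replaces $\log Z$, which is of order $n\log\Lambda$, by $0$. A bound on how much one deletion changes $\log Z$ (up to $\log(1+\Lambda)$ each) cannot repair this.

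The paper instead expands only edges of weight above $\kappa\asymp\xi/\Lambda$, of which there are at most $\Lambda/\kappa$ per vertex. On all other edges it substitutes stored approximations of $q_G(u)$ for the undeleted graph, in place of $q_{G-S-v}(u)$. Deletion stability controls the resulting error: $\sum_{v\ne u}|q_{G-u}(v)-q_G(v)|\le1-q_G(u)$, proved by a coupling whose disagreement is a single alternating path. After contraction this contributes $O(\kappa\sum_j(j+1)\rho^j)=O(\kappa\Lambda)$ at the root.

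\paragraph{Step 1 is unjustified and unnecessary.} The natural estimate bounds the change in the permanent only additively, by about $n^{2-C}$. But the permanent of a doubly stochastic matrix can be as small as $n!/n^n$, roughly $e^{-n}$, so no $\exp(O(n^{1-c}))$ factor follows. The truncation is also not needed, since the analysis above never uses bounds on the entries.
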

The error per row, $n^{-1}\log(U/L)$, tends to zero. In particular, it answers a question of Anari and Rezaei~\cite[Section~1.1]{AR21}: the infimum of constants $\alpha>1$ for which a polynomial time $\alpha^n$ approximation is possible is $1$.

The algorithm gives the following more general tradeoff.
\begin{restatable}[Approximation versus running time]{theorem}{approximationtradeoff}\label{thm:main}
Let $A\in\Q_{\geq0}^{n\times n}$ be given explicitly with total bit length $S$. The analyst may choose any rational parameters $\lambda\geq1$ and $0<\eta\leq1/100$. For each such choice, a deterministic algorithm returns positive rational numbers $L,U$ satisfying
\begin{equation}\label{eq:main-error}
 L\leq\per(A)\leq U,
 \qquad
 \log\frac UL\leq\frac{2n}{\sqrt{e\lambda}}+8\eta n.
\end{equation}
Its running time in bit operations is
\begin{equation}\label{eq:main-time}
 \poly(S+n+\bits\lambda+\bits\eta,\lambda,\eta^{-1})
 \exp\!\left(O\!\left(\sqrt\lambda\,
 \log^2\frac{\lambda}{\eta}\right)\right).
\end{equation}
The polynomial degree and the implicit constant are absolute.
\end{restatable}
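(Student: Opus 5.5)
The plan is to sandwich $\per(A)$ between quantities derived from a monomer–dimer (all-matchings) partition function on the bipartite graph of $A$, and then to approximate that partition function deterministically. For positive diagonal scalings $x,y\in\R_{>0}^n$, the matrix $B=\mathrm{diag}(x)A\,\mathrm{diag}(y)$ satisfies $\per(B)=\per(A)\prod_i x_iy_i$. Define
\[
 Z_\lambda(B)=\sum_{M}\lambda^{|M|}\prod_{(i,j)\in M}B_{ij},
\]
where $M$ ranges over all matchings. Since perfect matchings are among the terms, $\lambda^n\per(B)\leq Z_\lambda(B)$. This gives the family of upper bounds
\[
 \per(A)\leq \Phi(x,y):=\frac{Z_\lambda(B)}{\lambda^n\prod_i x_iy_i}.
\]
In the variables $u=\log x$, $v=\log y$, the quantity $\log\Phi$ is a log-sum-exp of affine functions, so it is convex. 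The first step is to minimize it approximately, to within additive $\eta n$. I would use an ellipsoid or cutting-plane method over a box of polynomial bit size; the box is obtained from an initial Sinkhorn-type scaling and the standard bounds on the entries of optimal scalings. The method only needs approximate values and gradients of $\log Z_\lambda$. The gradient coordinates are the marginals $\Pr[i\text{ matched}]-1$ and $\Pr[j\text{ matched}]-1$ under the Gibbs measure $\mu$ on matchings of $B$ with activity $\lambda$.

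The second step is the lower bound, which I expect to be the main obstacle. Write $\per(B)\lambda^n=Z_\lambda(B)\,\mu(\text{$M$ perfect})$, so it suffices to show $\log\mu(\text{perfect})\geq-2n/\sqrt{e\lambda}-O(\eta n)$ at an approximate minimizer. At a minimizer, stationarity forces the unmatched probabilities of all vertices to be small and balanced. I would bound $\mu(\text{perfect})$ by a chain rule over the rows, $\prod_i\mu(i\text{ matched}\mid\text{rows }<i\text{ matched})$. Each conditional unmatched probability is compared with its unconditional value, and the discrepancy is an error term controlled by correlations between unmatched vertices, which is the quantity the abstract says governs the error. For the monomer–dimer measure I would bound each conditional unmatched probability using the Heilmann–Lieb / Godsil recursion. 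With total edge weight at a vertex normalized by the scaling, the recursion $p=1/(1+\lambda\sum_j B_{ij}p_j)$ yields a worst case of order $1/\sqrt{\lambda}$. Optimizing the tradeoff between the penalty $\log(1/\lambda)$ per missing edge and the probability of a defect should produce the constant $2/\sqrt{e\lambda}$ from $\max_t t e^{-t^2/\dots}$-type calculus. Near-optimality of the scaling contributes the extra $O(\eta n)$ terms.

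The third step is the deterministic evaluation of $Z_\lambda$ and its marginals to relative accuracy $e^{\pm\eta}$ per vertex. I would use Weitz/Bayati–Gamarnik–Katz–Nair–Tetali correlation decay on the matching self-avoiding-walk tree. The contraction rate is $1-\Theta(1/\sqrt{\lambda})$, so depth $\ell=O(\sqrt\lambda\log(\lambda/\eta))$ suffices. The cost is the tree size, which a priori depends on degrees. To remove the degree dependence, I would sparsify: drop entries of $B$ below $\eta/(\lambda\cdot\text{poly})$ relative to the row total, and bucket the rest geometrically into $O(\log(\lambda/\eta))$ classes. A vertex-deletion bound is then needed to show that removing a vertex changes $\log Z_\lambda$ by at most $O(\log\lambda)$ and changes marginals of distant vertices only slightly; this shows that truncation and pruning of negligible branches cost $O(\eta)$ per vertex. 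The effective branching is then $\exp(O(\log(\lambda/\eta)))$, and the tree computation costs $\exp(O(\sqrt\lambda\log^2(\lambda/\eta)))$, matching \eqref{eq:main-time}. $\log Z_\lambda$ is recovered by telescoping over vertex deletions, with $n$ marginal computations of polynomial overhead.

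Finally, $U$ is the computed upper bound from the near-minimizer, inflated by the evaluation error, and $L$ is $U$ divided by $\exp(2n/\sqrt{e\lambda}+8\eta n)$, with all error budgets summing to $8\eta n$. Rounding all quantities to $\poly$-bit rationals keeps the arithmetic within the stated time.
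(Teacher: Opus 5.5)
There are genuine gaps at each of your three steps, starting with the formulation. As posed, your convex program has no minimizer. At every finite point, $\partial_{u_i}\log\Phi=\Pr[i\text{ matched}]-1=-\Pr[i\text{ unmatched}]<0$. Along $u\mapsto u+t\mathbf 1$ one has $\Phi\to\per(A)$ as $t\to\infty$, so the infimum is $\per(A)$ but is approached only as the vertex weight sums diverge. Consequently there are no optimal scalings to bound, and stationarity never holds. Worse, the weight sums that set the contraction rate $1-\Theta(1/\sqrt\Gamma)$, and hence the counting cost, are unbounded.

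The missing ingredient is a vertex budget $\sum_{e\ni v}w_e\le\Lambda$ (the paper takes $\Lambda=6\lambda$). It makes the minimum attained and puts multipliers into the optimality conditions.

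Even with the budget, the lower bound is the heart of the theorem, and your chain rule over rows does not deliver it. Conditioning on many rows being matched changes the monomer probabilities drastically. For the all-ones matrix at uniform weight $\Lambda/n$, a row is unmatched with probability of order $1/\sqrt\Lambda$. Given that rows $1,\dots,n-1$ are matched, row $n$ is unmatched with probability $n/(n+\Lambda)$. Since same-side monomer events are negatively correlated, conditioning on matched rows pushes these probabilities up, not down.

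The paper instead tracks the optimum as a function of $a=\log\Lambda$:
\begin{itemize}
\item The normalized objective $F_A(x,a)$ is jointly convex and has the subgradient $(0,-d_a)$ at a minimizer, which gives $\operatorname{OPT}_A(a)-\operatorname{OPT}_A(b)\le(b-a)d_a$.
\item The multipliers produce $\Delta\ge0$ whose row and column sums are the unmatched probabilities $r_i,c_j$, with $\Delta_{ij}\le w_{ij}(r_i+c_j)/\Lambda$.
\item Heilmann--Lieb gives $\mu_{ij}\ge w_{ij}r_ic_j$, and weighted Cauchy--Schwarz then yields $d_a\le\sqrt2\,ne^{-a/2}$.
\item Integrating to $b=\infty$ gives the gap $2\sqrt2\,n/\sqrt\Lambda$.
\end{itemize}
The constant $2/\sqrt{e\lambda}$ is only a loose bound on $2\sqrt2/\sqrt{6\lambda}$, not the outcome of a calculus optimization. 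Your restriction to diagonal scalings is not itself a problem: once the budget is imposed, this argument adapts to scalings directly.

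The counting step also fails as described.
\begin{itemize}
\item Dropping entries below a threshold independent of $n$ can delete all of a vertex's weight, for example a vertex with $n$ edges of weight $\Lambda/n$.
\item A threshold depending on $n$ instead leaves $\poly(n)$ children per node, hence a tree of size $n^{\Theta(\sqrt\lambda\log(\lambda/\eta))}$.
\item Geometric bucketing does not reduce the number of distinct subtrees, because neighbors in one bucket have different unmatched probabilities.
\item A pointwise statement that distant marginals change slightly does not control the aggregate effect of many small edges.
\end{itemize}
The paper keeps the small edges but does not recurse on them. Only edges of weight above $\kappa\asymp\xi/\Lambda$ spawn children. Each small edge $vu$ instead reads a stored estimate of $q_G(u)$ on the undeleted graph, computed at one lower depth. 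Ignoring the $h$ deletions along the branch is justified by the $\ell^1$ deletion-stability bound $\sum_{v\notin S}|q_{G-S}(v)-q_G(v)|\le|S|$. That bound comes from a coupling whose disagreement is a single alternating path. It costs at most $2\kappa(h+1)$ in the contraction coordinate at each level, which sums against the contraction to $O(\kappa\Lambda)$.
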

Increasing $\lambda$ reduces the first error term in \eqref{eq:main-error}, at the cost of a larger running time bound in \eqref{eq:main-time}. The parameter $\eta$ controls numerical accuracy: as we explain in \cref{sec:main-proof}, errors in optimization, estimation, and rounding together contribute at most $8\eta n$ to the logarithmic interval width. For fixed $\lambda$ and $\eta$, the running time is polynomial in the input length $S$.

Taking $\sqrt\lambda$ of order $\log n/(\log\log n)^2$ and $\eta$ of order $1/\sqrt\lambda$ gives \cref{thm:polynomial}. With an additional sparsity assumption, the same method gives a stronger guarantee: if every row and column has at most $\Delta$ nonzero entries, for fixed $\Delta$, the logarithmic width improves to $O_\Delta(n\log\log n/\log n)$, as stated in \cref{cor:bounded-degree}. Here $O_\Delta$ hides a multiplicative constant $C_\Delta$ that may depend on the fixed degree bound $\Delta$, but not on $n$ or the matrix entries.

For any requested accuracy $0<\epsilon\le1$, a relative approximation returns positive rational endpoints $L\le\per(A)\le U$ with $U/L\le1+\epsilon$, or equivalently logarithmic width at most \mbox{$\log(1+\epsilon)$}. A deterministic fully polynomial time approximation scheme (FPTAS) achieves this in time polynomial jointly in the input encoding length and $\epsilon^{-1}$.

In \cref{sec:amplification}, we use amplification by disjoint copies~\cite{LSW00,AR21} to prove that a deterministic polynomial time $\exp(O(n^{1-\delta}))$ approximation, for any fixed $\delta>0$, would imply a deterministic FPTAS. $r$ disjoint copies have permanent $\per(A)^r$. Taking an $r$th root reduces the logarithmic error to $O(n^{1-\delta}/r^\delta)$, so polynomially many copies in $n$ and inverse accuracy give a relative approximation scheme. This amplification does not turn our logarithmic saving into an FPTAS. In \cref{cor:subcubic}, we use the reduction in \cref{app:gadgets} to show that a deterministic polynomial time $\exp(O(n^{1-\delta}))$ approximation even for zero-one matrices with at most three ones per row and column would imply a deterministic fully polynomial time approximation scheme for general nonnegative rational matrices.

\section{Technical overview}\label{sec:overview}
We will approximate the permanent by a sum over all matchings at new
edge weights. The weights must make this sum close to the permanent,
after accounting for the change of weights, and allow efficient
computation. We first show how these ingredients give the main
theorem, then develop the arguments that establish them.

\subsection{Constructing an upper bound by choosing edge weights}\label{sec:overview-plan}
Our plan is to approximate the sum over perfect matchings by a sum over
all matchings. We begin by explaining how changing the edge weights
can improve this approximation and how to account for the change in
our upper bound on the permanent.

For positive edge weights $w$, let
\[
 Z(w)=\sum_{M\text{ matching}}\prod_{e\in M}w_e
\]
be the matching partition function, where the empty matching contributes
one. Since perfect matchings are included, $Z(A)$ upper bounds $\per(A)$.
Gamarnik and Katz~\cite{GK10} used this approach for zero-one matrices.
After uniformly increasing the edge weights, they use expansion of the
support graph to bound the total weight of imperfect matchings relative
to that of perfect matchings.
Conditioning on whether a vertex is unmatched or matched to each of its
neighbors gives a deletion recursion.
As \cref{sec:overview-counting} explains, bounding the sum of edge weights
at each vertex makes errors contract through the recursion, allowing
efficient approximation. We therefore seek an accurate upper bound
under this constraint.

Consider $A=\operatorname{diag}(\delta,1)$ with $0<\delta<1$. Multiplying every entry by $t>1$ favors larger matchings, because a matching with $m$ edges gains a factor $t^m$. Dividing the resulting sum by $t^2$ gives an upper bound on the original permanent, but its approximation factor is
\[
 \frac{Z(tA)}{t^2\per(A)}
 =\left(1+\frac1{t\delta}\right)\left(1+\frac1t\right).
\]
Keeping this factor bounded requires $t$ at least of order $1/\delta$. The larger edge then has weight of that order, which can increase the time needed to approximate $Z(tA)$.

We can do better by adjusting the edges separately. Give both edges weight $3$. The four matchings now have weights $1,3,3,9$, so their sum is $16$. The perfect matching originally had weight $\delta$ and now has weight $9$. Multiplying the final sum by $\delta/9$ restores its original contribution and gives the upper bound $16\delta/9$. The new edge weights remain bounded even when $\delta$ is very small.

For a general matrix, different perfect matchings change by different factors. We work on the edges $\mathcal E$ that occur in a perfect matching. All matchings below use edges in $\mathcal E$. Write $A(M)$ and $w(M)$ for the products of their original and new edge weights. Multiplying by the largest ratio $A(M)/w(M)$ gives enough compensation for every perfect matching at once. Thus
\begin{equation}\label{eq:compensated-upper}
 \per(A)\leq Z(w)\max_{M\text{ perfect}}\frac{A(M)}{w(M)}.
\end{equation}
There are two sources of overestimation. We include imperfect matchings, and the largest ratio may compensate some perfect matchings more than necessary. Our task is to choose weights that control the combined loss.

Tracking the change in the permanent under reweighting is also central
to the matrix-scaling algorithm of Linial, Samorodnitsky, and
Wigderson~\cite{LSW00}. Row and column scaling changes every perfect
matching by the same factor. Independent edge changes can give different
ratios $A(M)/w(M)$, so \eqref{eq:compensated-upper} uses the largest ratio.

Taking logarithms turns this choice into convex optimization. Let $y_e=\log w_e$ and define the logarithm of the upper bound by
\begin{equation}\label{eq:objective}
 \Phi_A(y)=\log Z(e^y)
 +\max_{M\text{ perfect matching}}\sum_{e\in M}(\log A_e-y_e).
\end{equation}
The maximum over perfect matchings is the \emph{assignment term}.
It equals the logarithm of the compensation factor in \eqref{eq:compensated-upper}.
Both terms are convex (\cref{fact:lipschitz}).
Minimizing $\Phi_A$ therefore searches for the
best upper bound of the form \eqref{eq:compensated-upper}.

\subsection{The approximation theorem from two ingredients}\label{sec:overview-ingredients}
We have constructed the convex objective $\Phi_A$, whose exponential upper bounds
the permanent. We now identify the guarantees needed to make this bound
accurate and efficient to compute, and show how they give the main theorem.

We impose a budget $\Lambda\geq1$ on the sum of auxiliary edge weights at each vertex: $\sum_{e\ni v}w_e\le\Lambda$ for every $v$. Increasing the budget $\Lambda$ can improve the upper bound but increases the running time.

Write $\Lambda=e^a$, so increasing $a$ by $h$ multiplies the budget $\Lambda$ by $e^h$. We use $a$ to quantify the improvement as $\Lambda$ increases. Define
\begin{equation}\label{eq:cap-domain}
 \mathcal D_a=\left\{y:\sum_{e\ni v}e^{y_e}\leq e^a
                  \text{ for every vertex }v\right\},
 \qquad
 \operatorname{OPT}_A(a)=\min_{y\in\mathcal D_a}\Phi_A(y).
\end{equation}
The analytic ingredient, \cref{lemma:envelope}, is
\begin{equation}\label{eq:overview-envelope}
 \log\per(A)\leq\operatorname{OPT}_A(a)
 \leq\log\per(A)+2\sqrt2\,ne^{-a/2}.
\end{equation}
Thus minimizing $\Phi_A$ over $\mathcal D_a$ gives an approximation factor $\exp(O(n/\sqrt\Lambda))$.
For $1\le\Lambda\le n^2/16$, the all-ones example in \cref{app:all-ones}
attains this order of logarithmic error.

We use the distribution in which each matching has probability proportional to the product of its edge weights. For weights meeting the budget $\Lambda$, the algorithmic ingredient, \cref{thm:counting}, computes $\log Z(w)$ to additive error $\xi n$ and the probabilities that individual edges are used to relative error $\xi$, for $0<\xi<1/4$, in time
\begin{equation}\label{eq:overview-counting-time}
 \poly(\text{input length})
 \exp\!\left(O\!\left(\sqrt\Lambda
             \log^2\frac{\Lambda}{\xi}\right)\right).
\end{equation}
The polynomial factor has a fixed exponent, independent of $\Lambda$, $\xi$,
and the number of neighbors at a vertex.

The analytic and algorithmic ingredients thus pull in opposite directions:
increasing the budget $\Lambda$ improves the approximation factor but
increases the running time. Ideally, we would take $\Lambda=n^c$ for a fixed $c>0$,
so that the analytic bound gives a factor $\exp(O(n^{1-c/2}))$.
A polynomial time algorithm with
this guarantee would yield a fully polynomial time approximation scheme
by the amplification argument in \cref{sec:amplification}. However,
$\sqrt\Lambda=n^{c/2}$ appears in the exponent of
\eqref{eq:overview-counting-time}, so the running time bound does not
guarantee polynomial time for this choice.

\begin{samepage}
By \eqref{eq:overview-envelope}, minimizing $\Phi_A$ over
$\mathcal D_{\log\Lambda}$ gives the desired approximation factor.
\Cref{prop:optimization} finds weights $w$ with
$\log w\in\mathcal D_{\log\Lambda}$ and objective value within any
requested additive error of $\operatorname{OPT}_A(\log\Lambda)$.
For objective error $\varepsilon n$, the number of applications of
\cref{thm:counting} is polynomial in the input length and $\varepsilon^{-1}$.
The required tolerances preserve the parameter dependence in the running time below.
\Cref{sec:overview-optimization} explains the search, and
\cref{sec:main-proof} accounts for its running time.
\par\end{samepage}

Once the weights are chosen, we evaluate the upper bound in \eqref{eq:compensated-upper}. We use \cref{thm:counting} to approximate $\log Z(w)$. For the compensation factor $\max_{M\text{ perfect}} A(M)/w(M)$, we approximate each edge cost $\log A_e-\log w_e$ to the required accuracy and compute a perfect matching of maximum total cost in polynomial time.

We choose both the optimization error and the logarithmic evaluation error
as $O(n/\sqrt\Lambda)$, matching the analytic error. Exponentiating the
computed upper bound on $\Phi_A(\log w)$, with upward rounding, gives $U$.
The gap at the constrained optimum, the loss from approximate minimization, and the evaluation error add on the logarithmic scale, so
\[
 \per(A)\le U\le\exp(O(n/\sqrt\Lambda))\per(A).
\]
Dividing $U$ by the guaranteed approximation factor gives the lower endpoint $L$.

Finding the weights and evaluating their upper bound has total running time
\[
 \poly(\text{input length})\exp(O(\sqrt\Lambda\log^2\Lambda)).
\]
Taking $\Lambda$ of order $(\log n)^2/(\log\log n)^4$ makes the exponent $O(\log n)$, so the running time is polynomial. At this choice, the logarithmic approximation error is $O(n(\log\log n)^2/\log n)$, which gives \cref{thm:polynomial}. \Cref{sec:main-proof} gives the rational endpoint construction and the constants in \cref{thm:main}.

\subsection{Bounding the approximation error using unmatched rows}\label{sec:overview-deficiency}
We have seen how \eqref{eq:overview-envelope} and \cref{thm:counting}
would give the main theorem. We now show that
$\operatorname{OPT}_A(a)-\log\per(A)=O(n/\sqrt\Lambda)$ for budget
$\Lambda=e^a$. We will bound this gap using the expected
number of unmatched rows at an optimizer.

Choose weights $w$ whose logarithms minimize $\Phi_A$ over $\mathcal D_a$.
For a matching $M$ drawn with probability $w(M)/Z(w)$, let
$d_a=\mathbb E[n-|M|]$ be the expected number of unmatched rows.
The estimate we need is
$d_a\le\sqrt2\,ne^{-a/2}$. To see why it suffices, suppose we also know
\begin{equation}\label{eq:overview-optimum-rate}
 0\le\operatorname{OPT}_A(a)-\operatorname{OPT}_A(b)
 \le(b-a)d_a\qquad(b>a).
\end{equation}
Thus $d_a$ bounds the decrease in $\operatorname{OPT}_A$ per unit increase in $a$.

The limiting optimum is $\log\per(A)$. Scaling the input by $t$
gives $\Phi_A(\log(tA))=\log Z(tA)-n\log t$, which tends to
$\log\per(A)$ because the leading coefficient of $Z(tA)$ is $\per(A)$.
For fixed $t$, these weights are feasible for every sufficiently large
budget $\Lambda=e^a$. If both estimates hold for every $a\ge0$, we can bound the total decrease from
$\operatorname{OPT}_A(a)$ to this limit. Applying
\eqref{eq:overview-optimum-rate} on successive small intervals and adding
the resulting bounds gives
\[
 \operatorname{OPT}_A(a)-\log\per(A)
 \le\int_a^\infty\sqrt2\,ne^{-t/2}\,dt
 =2\sqrt2\,ne^{-a/2}.
\]
\paragraph{How the optimum changes with the budget.}
The integral reduces the approximation bound to two estimates about the
optimal weights. We first prove \eqref{eq:overview-optimum-rate}: raising
the budget from $e^a$ to $e^b$ can lower the optimum by at most
$(b-a)d_a$, even after reoptimizing the weights. We start by scaling all
weights by the same factor, then use convexity to account for changes
in their ratios. Let $x$ specify proposed weights $e^{x_e}$.
Dividing by their largest sum at a vertex
and multiplying by $e^a$ gives
\[
 H(x)=\max_v\log\sum_{e\ni v}e^{x_e},\qquad
 w_e(x,a)=\frac{e^a e^{x_e}}{e^{H(x)}}.
\]
The resulting weights have largest sum $e^a$ at a vertex.
The vector $x\in\mathbb R^{\mathcal E}$ is unrestricted for every $a$. We evaluate
our logarithmic upper bound at these weights by setting
\[
 F_A(x,a)=\Phi_A(\log w(x,a)).
\]
For fixed $x$, increasing $a$ by $h$ multiplies every weight by $e^h$.
The weight of a matching $M$ is multiplied by $e^{h|M|}$, while the
compensation factor in \eqref{eq:compensated-upper} is divided by $e^{hn}$,
because every perfect matching has $n$ edges. Put $w=w(x,a)$ and give $M$
probability $w(M)/Z(w)$. Dividing the new upper bound by the old one gives
\[
 F_A(x,a+h)-F_A(x,a)
 =\log\frac{\sum_M e^{-h(n-|M|)}w(M)}{Z(w)}
 =\log\mathbb E\!\left[e^{-h(n-|M|)}\right].
\]
For $h\ge0$, contributions from perfect matchings stay unchanged,
while contributions from matchings missing more edges decrease faster.

This also explains why normalization preserves the optimum.
If $H(x)\le a$, the proposed weights already satisfy the budget.
Normalization scales them by $e^{a-H(x)}\ge1$, which cannot increase
the upper bound. Since every normalized vector is feasible,
\[
 \min_xF_A(x,a)=\operatorname{OPT}_A(a).
\]
\begin{samepage}
Let $x_a\in\operatorname*{arg\,min}_x F_A(x,a)$ and put $w=w(x_a,a)$.
Hold $x_a$ fixed in the formula for $F_A(x,a+h)-F_A(x,a)$.
Its expectation equals one at $h=0$. Differentiating the finite sum multiplies each term by
$-(n-|M|)$, so the rule $(\log f)'=f'/f$ gives
\[
 \begin{aligned}
 \partial_aF_A(x_a,a)
 &=\left.\frac{d}{dh}F_A(x_a,a+h)\right|_{h=0}\\
 &=-\frac{\sum_M(n-|M|)w(M)}{Z(w)}=-d_a.
 \end{aligned}
\]
\par\end{samepage}

To prove \eqref{eq:overview-optimum-rate},
we must also allow the ratios between edge weights to change at the larger budget.
\Cref{lemma:normalized-convexity} shows that $F_A$ is jointly convex
in $x$ and $a$. The subgradient calculation in \cref{sec:envelope}
shows that, at a minimum in $x$, a supporting affine function has zero
slope in the $x$-coordinates and slope $-d_a$ in $a$.
Joint convexity therefore gives, for every alternative choice $z$,
\[
 F_A(z,b)\ge F_A(x_a,a)-(b-a)d_a
 =\operatorname{OPT}_A(a)-(b-a)d_a.
\]
Minimizing this inequality over $z$, and using that a larger budget cannot increase the
optimum, proves \eqref{eq:overview-optimum-rate}.

\paragraph{Why few rows remain unmatched.}
We have proved \eqref{eq:overview-optimum-rate}, which bounds the decrease
in $\operatorname{OPT}_A$ as the budget grows.
It remains to show that the optimal weights leave at most
$\sqrt2\,n/\sqrt\Lambda$ rows unmatched on average, where $\Lambda=e^a$.
A large budget alone does not force matchings to be large: feasible
weights could still be tiny on most edges. We need to use the fact that
the weights minimize our upper bound.

Fix optimal weights $w$ under budget $\Lambda$, and draw $M$ with
probability $w(M)/Z(w)$.
Write $\mu_{ij}=\Pr(ij\in M)$. Each matching leaves exactly $n-|M|$
rows unmatched. This count is the sum of one indicator for each row,
equal to one when that row is unmatched and zero otherwise.
By linearity of expectation,
\[
 d_a=\mathbb E[n-|M|]
 =\mathbb E\!\left[\sum_i\mathbf1_{\{i\text{ unmatched}\}}\right]
 =\sum_i\Pr(i\text{ unmatched}).
\]
Each indicator has expectation equal to the probability that its row
is unmatched. Counting unmatched columns in the same way gives
$d_a=\sum_j\Pr(j\text{ unmatched})$.
Since a matching uses at most one edge in each row,
$\sum_j\mu_{ij}=\Pr(i\text{ matched})$.
To bring this row sum up to one, we need to add
$\Pr(i\text{ unmatched})$. Column $j$ similarly needs an additional
$\Pr(j\text{ unmatched})$.

We introduce a matrix $\Delta$ to relate these missing probabilities
to the edge weights. Its entry $\Delta_{ij}\ge0$ is an amount added to
$\mu_{ij}$, so that every row and column of $\mu+\Delta$ sums to one.
Because $w$ minimizes $\Phi_A(\log w)$ under budget $\Lambda$,
such a nonnegative $\Delta$ exists on the existing edges and satisfies
\begin{gather}
 \sum_j\Delta_{ij}=\Pr(i\text{ unmatched}),\qquad
 \sum_i\Delta_{ij}=\Pr(j\text{ unmatched}),\notag\\
 \Delta_{ij}\le\frac{w_{ij}}\Lambda
 \bigl(\Pr(i\text{ unmatched})+\Pr(j\text{ unmatched})\bigr).
 \label{eq:overview-delta-bound}
\end{gather}
The proof of \cref{lemma:optimizer-rate} defines $\Delta$ in
\eqref{eq:normalized-optimality} and proves these row and column sums
and the bound on each entry.
Summing the first equality gives $\sum_{ij}\Delta_{ij}=d_a$.
Thus bounding the total of these additions will bound the expected
number of unmatched rows. The last inequality connects each addition
to its edge weight, allowing us to use the correlation bound below.

We next relate the weight of an edge $ij$ to the probability that both
endpoints remain unmatched. If a matching $N$ leaves $i$ and $j$
unmatched, adding $ij$ gives a matching $N\cup\{ij\}$ that uses this edge.
Conversely, removing $ij$ from any matching that uses it leaves both
endpoints unmatched. These operations are inverse to each other.
Since matching weights are products of edge weights,
$w(N\cup\{ij\})=w_{ij}w(N)$. Summing over the corresponding matchings gives
\[
 \sum_{M:\,ij\in M}w(M)
 =w_{ij}\sum_{N:\,i,j\text{ unmatched in }N}w(N).
\]
Dividing both sides by $Z(w)$, the total weight of all matchings,
turns the sums into probabilities:
\[
 \mu_{ij}=\Pr(ij\in M)=w_{ij}\Pr(i,j\text{ both unmatched}).
\]
The two unmatched events are positively correlated, by Heilmann and
Lieb~\cite{HL72} (\cref{fact:correlation}), so their joint probability
is at least $\Pr(i\text{ unmatched})\Pr(j\text{ unmatched})$.
Combining this correlation bound with \eqref{eq:overview-delta-bound} gives
\[
 \begin{aligned}
 \mu_{ij}&\ge w_{ij}\Pr(i\text{ unmatched})\Pr(j\text{ unmatched})\\
 &\ge\Lambda\Delta_{ij}
 \frac{\Pr(i\text{ unmatched})\Pr(j\text{ unmatched})}
 {\Pr(i\text{ unmatched})+\Pr(j\text{ unmatched})}.
 \end{aligned}
\]
Thus, the weights required to accommodate the unmatched probabilities
also force edges to be used. Summing and applying weighted
Cauchy--Schwarz (\cref{fact:weighted-cauchy}) gives
\begin{equation}\label{eq:overview-deficiency-cauchy}
 n-d_a=\sum_{ij}\mu_{ij}
 \ge\frac{\Lambda\bigl(\sum_{ij}\Delta_{ij}\bigr)^2}
 {\displaystyle\sum_{ij}\Delta_{ij}
 \left(\frac1{\Pr(i\text{ unmatched})}
 +\frac1{\Pr(j\text{ unmatched})}\right)}
 =\frac{\Lambda d_a^2}{2n}.
\end{equation}
Here $\sum_{ij}\Delta_{ij}=d_a$. The row terms in the denominator sum to
\begin{equation}\label{eq:overview-denominator-rows}
 \sum_{ij}\frac{\Delta_{ij}}{\Pr(i\text{ unmatched})}
 =\sum_i\frac{\sum_j\Delta_{ij}}{\Pr(i\text{ unmatched})}
 =\sum_i1=n.
\end{equation}
Equation~\eqref{eq:overview-denominator-rows} and its column analogue
give the denominator $2n$. Since $d_a\ge0$,
\eqref{eq:overview-deficiency-cauchy} implies
\[
 \frac{\Lambda d_a^2}{2n}\le n-d_a\le n
 \qquad\Longrightarrow\qquad
 d_a\le\frac{\sqrt2\,n}{\sqrt\Lambda}.
\]

\begin{samepage}
\subsection{Counting through correlation decay}\label{sec:overview-counting}
Minimizing $\Phi_A$ over $\mathcal D_{\log\Lambda}$ gives an upper bound
within a factor $\exp(O(n/\sqrt\Lambda))$ of the permanent. We now estimate
$\log Z$ and edge probabilities whenever the edge weights at each vertex
sum to at most the budget $\Lambda$. The goal is
the running time in \eqref{eq:overview-counting-time}, with the number
of neighbors at a vertex kept out of the exponent.
We follow the recursive method of Bayati et al.~\cite{BGKNT07},
and adapt the contraction analysis of Sinclair et al.~\cite{SSSY17}
to unequal edge weights.
\par\end{samepage}

On a graph $G$ with at most $2n$ vertices and weights $w$, draw $M$ with
probability $w(M)/Z_G(w)$ and write $q_G(v)=\Pr(v\text{ unmatched in }M)$.
A subscript $G-S$ uses the matching distribution after deleting $S$, with
the remaining weights unchanged. We use the deletion argument of
Bayati et al.~\cite[Propositions~2.2 and~3.1]{BGKNT07}.
The same decomposition allows unequal edge weights, as in
\cref{fact:deletion-recursion}.

Matchings leaving $v$ unmatched are exactly the matchings of $G-v$,
with unchanged weights. A matching containing $uv$ consists of that
edge and a matching of $G-\{u,v\}$, with an extra factor $w_{uv}$ in
its weight. Dividing the total weight in each case by $Z_G(w)$ gives
\[
 q_G(v)=\frac{Z_{G-v}(w)}{Z_G(w)},\qquad
 \mu_{uv}=w_{uv}\frac{Z_{G-\{u,v\}}(w)}{Z_G(w)}
 =w_{uv}q_G(u)q_{G-u}(v).
\]
Deleting vertices one at a time makes these ratios cancel.
Since the empty graph has partition function one, $Z_G(w)$ is the
reciprocal of the product of the resulting unmatched probabilities.
The second identity recovers each edge probability from two unmatched
probabilities.

To compute $q_G(v)$, separate all matchings according to whether $v$
is unmatched or matched to a neighbor $u$. These cases give
\[
 Z_G(w)=Z_{G-v}(w)+\sum_{u\sim v}w_{vu}Z_{G-\{v,u\}}(w).
\]
Dividing by $Z_{G-v}(w)$ and using the first identity above gives
\begin{equation}\label{eq:overview-recursion}
 q_G(v)=\left(1+\sum_{u\sim v}w_{vu}q_{G-v}(u)\right)^{-1}.
\end{equation}
Repeated substitution in \eqref{eq:overview-recursion} produces a tree
whose nodes are paths starting at $v$ with no repeated vertices.
A child extends its parent's path by one edge to a vertex not yet visited.
This is the \emph{self-avoiding walk tree} used in Godsil's
identity~\cite{Godsil81}. Sinclair et al.~\cite[Theorem~2.1]{SSSY17}
state the equality of unmatched probabilities for a common edge weight.
Here each tree edge inherits the weight of the graph edge that extends
the path. The full tree gives $q_G(v)$ at its root, since it follows
the same recursion \eqref{eq:overview-recursion}.

\begin{samepage}
We stop at depth $\ell$, substitute $1$ for unresolved probabilities, and evaluate
upward. The value $1$ treats each unresolved vertex as certainly unmatched,
so it gives an upper bound on its unmatched probability.
In \eqref{eq:overview-recursion}, increasing an input enlarges the
denominator and decreases the output. Each level therefore reverses
the direction of the error: the root estimate is an upper bound on
$q_G(v)$ when $\ell$ is even and a lower bound when $\ell$ is odd.
\par\end{samepage}

We next bound the size of this error.
Vertex deletion preserves $\sum_{e\ni v}w_e\le\Lambda$, so all
unmatched-vertex probabilities in the recursion lie in $[1/(1+\Lambda),1]$.
\cref{lem:message-contraction} measures error
by the difference between transformed values
$\psi(q)=\log((2-q)/q)$. Each recursion step shrinks the largest input
error by \mbox{$\rho=1-\Theta(1/\sqrt\Lambda)$}. This shrinking influence
of distant inputs is correlation decay~\cite{BGKNT07,SSSY17}.

To choose the depth, we compare the remaining error with the requested
accuracy $\xi$. On the allowed interval, $\psi$ ranges from $0$ to
$\log(1+2\Lambda)$, so the error in $\psi$ at each leaf is at most
$\log(1+2\Lambda)$.
Write $\widehat q$ for the value computed at the root.
After $\ell$ levels, repeated contraction gives
\[
 |\psi(\widehat q)-\psi(q_G(v))|
 \le\rho^\ell\log(1+2\Lambda)
 \le2\Lambda\exp\!\left(-\frac{c\ell}{\sqrt\Lambda}\right),
\]
for an absolute constant $c>0$. Here we used
$\rho\le e^{-c/\sqrt\Lambda}$ and $\log(1+2\Lambda)\le2\Lambda$.
To make the last expression at most $\xi/2$, it is enough that
$c\ell/\sqrt\Lambda\ge\log(4\Lambda/\xi)$.
We can therefore choose
\[
 \ell=\left\lceil C\sqrt\Lambda\log\frac{\Lambda}{\xi}\right\rceil
\]
for a sufficiently large absolute constant $C$.

The derivative bound $|d\log q/d\psi|\le1$ in
\cref{lem:message-contraction} shows that this also gives
$|\log\widehat q-\log q_G(v)|\le\xi/2$.
Exponentiating, the ratio $\widehat q/q_G(v)$ lies between
$e^{-\xi/2}$ and $e^{\xi/2}$, so the relative error is at most $\xi$
for $0<\xi<1/4$. This holds for any choice of leaf values in the
allowed interval.

The obstacle is branching. A vertex can have $n$ neighbors, so expanding
every edge can cost $n^{\Theta(\ell)}$. Choose a threshold $\kappa>0$
and expand recursively only edges of weight greater than $\kappa$.
There are at most $\Lambda/\kappa$ such edges at a vertex. Smaller edges
must still contribute to the denominator. For a star with $n$ edges of weight $1/n$, for example, the weights of
all matchings sum to $2$, even though each edge is small.

For smaller edges, we reuse computations on the input graph $G$, which
may already have some vertices deleted. For each depth,
store an approximate unmatched probability for every vertex of $G$.
On a large edge, recurse with the current vertex deleted. On a
small edge, use the stored answer for its neighbor in $G$
at one smaller depth. For example, following a large edge $vu$ deletes
$v$. A small edge $uz$ then asks for $q_{G-v-u}(z)$, for which we reuse
an approximation to $q_G(z)$. Initialize depth zero answers at one and
compute the arrays in increasing depth. Both edge types decrease the
remaining depth by one, so stored answers are available. Only large edges
generate additional recursive branches.

There are two errors to control. Replacing each computed answer by the
probability it approximates gives an update using the current smaller
graph on large edges and $G$ on small edges. Contraction in
$\psi$ bounds this change, including the errors in stored answers.
The remaining error comes from forgetting deletions on small edges and
is controlled by
\begin{equation}\label{eq:overview-deletion}
 \sum_{v\ne u}|q_{G-u}(v)-q_G(v)|\leq1-q_G(u)\leq1.
\end{equation}
To prove this bound, we construct a joint distribution of two matchings,
one in $G$ and one in $G-u$, with each having its graph's matching
distribution. Such a joint distribution is called a \emph{coupling}.
The random matchings are used only in the proof. The algorithm uses
\eqref{eq:overview-deletion} to bound the error of its deterministic
recursion and never samples these matchings.

The construction in \cref{lem:matching-coupling} specializes the
recursive method of Chen and Gu~\cite[Algorithm~1]{CG24}.
The matching in $G$ leaves $u$ unmatched with probability $q_G(u)$
and uses the edge $uv$ with probability $\mu_{uv}$ for each neighbor $v$.
If $u$ is unmatched, use the same matching from $G-u$ in both graphs.
If $u$ is matched to $v$, recursively couple matchings in $G-u$ and
$G-u-v$, then add $uv$ to the latter. Each matched case extends the
disagreement by one edge, giving at most one alternating path from $u$.
Only the other endpoint can change
unmatched status outside $u$. Each probability difference is at most the
chance of a disagreement there. Summing these bounds gives the expected
number of disagreements, at most $1-q_G(u)$ regardless of the number of neighbors.
For bipartite graphs, the preliminaries derive positive correlation of
unmatched events on opposite sides from this coupling.
The deletion bound holds on general graphs, so \cref{thm:counting}
applies to them as well.

Applying \eqref{eq:overview-deletion} at each of $h$ successive deletions
and adding gives total change at most $h$ on surviving vertices.
The denominator error is therefore at most $h\kappa$. A denominator $1+t$
has transformed output $\log(1+2t)$, whose derivative is at most $2$.
Thus the error is also $O(h\kappa)$ in $\psi$. At level $j$, at most
$j+1$ vertices have been deleted, and propagation to the root contributes
$\rho^j$. An induction therefore bounds the root error by
\[
 O\!\left(\sqrt\Lambda\,\rho^\ell
       +\kappa\sum_{j\geq0}(j+1)\rho^j\right)
 =O\!\left(\sqrt\Lambda\,\rho^\ell+\kappa\Lambda\right).
\]
The first term comes from stopping the recursion. For the second,
$\sum_{j\ge0}(j+1)\rho^j=(1-\rho)^{-2}=O(\Lambda)$.
Taking $\kappa$ of order $\xi/\Lambda$ makes this error $O(\xi)$.
The branching is then polynomial in $\Lambda$ and $\xi^{-1}$, giving
\eqref{eq:overview-counting-time} at the chosen depth. \Cref{sec:counting} gives the induction and bit complexity.

The recursion and contraction are standard. Deletion stability permits
the shared computations on small edges and removes the degree from the exponent.

\subsection{Finding the weights by subgradient steps}\label{sec:overview-optimization}
We have proved \eqref{eq:overview-envelope} and explained how to obtain
the estimates in \cref{thm:counting}. It remains to find weights $w$ with
$\log w\in\mathcal D_{\log\Lambda}$ and objective value at most $\varepsilon n$ above
$\operatorname{OPT}_A(\log\Lambda)$, for $0<\varepsilon\le1$. We minimize
the normalized objective $F_A(x,\log\Lambda)$ using the estimated edge
probabilities to compute subgradient steps. We follow the projected subgradient method
and averaging analysis in Bubeck~\cite[Section~3.1]{Bubeck15}, adapting
the directions and rounding to these estimates.

To bound the number of subgradient steps, we need a bounded box containing
a nearly optimal point. Start from optimal feasible weights $w$ and set
\[
 v_e=\max\{w_e,\tau\},\qquad z_e=\log(v_e/\Lambda),
\]
where $\tau$ is a sufficiently small constant multiple of $\varepsilon/n$.
Since $\tau\le v_e\le\Lambda$, the vector $z$ lies in $[-W,0]^m$ for
$W=O(\log(n\Lambda/\varepsilon))$ and $m=|\mathcal E|$.
Normalizing $z$ gives weights $\Lambda v_e/B$, where
$B=\max_u\sum_{e\ni u}v_e\le\Lambda+n\tau$.
Raising the small weights increases $\log Z$ by at most $m\tau$ and
cannot increase the assignment term. Restoring the budget adds at most
$n\log(1+n\tau/\Lambda)$ to the objective.
\Cref{lem:comparison-point} proves that the total increase is at most
$2n^2\tau=O(\varepsilon n)$. Thus the box contains a point whose
normalized objective value is at most
$\operatorname{OPT}_A(\log\Lambda)+O(\varepsilon n)$.

\begin{samepage}
A subgradient $g$ of a convex function $f$ at $x$ gives an affine lower
bound: $f(z)\ge f(x)+\langle g,z-x\rangle$ for every $z$.
At $a=\log\Lambda$, \cref{lemma:normalized-convexity} gives a subgradient
$g$ of $F_A(\cdot,a)$:
\[
 g=\mu-P+d p.
\]
Here $P$ is the incidence vector of a perfect matching maximizing
$\langle\log A-x,P\rangle$, $\mu$ gives the probabilities that individual
edges belong to the matching at the normalized weights,
and $d=n-\sum_e\mu_e$. Choose a vertex at which the edge weights sum
to $\Lambda$. The vector $p$ assigns probability $w_e/\Lambda$ to each
of its incident edges and zero elsewhere. The term $\mu$ comes from
differentiating $\log Z$ in the logarithmic weights, which gives the
edge probabilities. Both $\mu+dp$ and $P$ have
nonnegative entries summing to $n$, so the subgradient norm is at most $2n$.
\par\end{samepage}

\begin{samepage}
Starting at $x_0=0$, let $g_t$ be
this subgradient at $x_t$ and take steps
\[
 x_{t+1}=\operatorname{clip}_{[-W,0]^m}(x_t-\alpha g_t),
\]
where clipping truncates each coordinate to the interval.
\par\end{samepage}

\cref{fact:subgradient-averaging}, with radius at most $\sqrt m\,W$
and subgradient norm at most $2n$, specifies a step size $\alpha$.
After $O(mW^2/\varepsilon^2)$ steps, the average iterate has objective
value within $O(\varepsilon n)$ of the optimum. We return its normalized
weights and evaluate the permanent upper bound once at the end.

With rational arithmetic, we obtain approximate directions from assignment
solutions and estimated edge probabilities. We round the directions toward
zero on a common rational grid, then compute the clipped steps exactly.
This keeps the iterates' encoding lengths polynomial.
\cref{cor:inexact-averaging} accounts for the errors in the directions.
Since $\sum_e\mu_e\le n$,
relative accuracy $\xi$ gives total error in these edge probabilities at most $\xi n$.
Across a box of width $W$, the resulting error in the inner product
$\langle g,z-x\rangle$ in the subgradient inequality is $O(\xi nW)$.
Thus relative error of order $\varepsilon/W$ keeps the objective error
within $O(\varepsilon n)$.
\Cref{sec:optimization} gives the numerical choices and bit complexity.

\section{Related work}\label{sec:related-work}

\paragraph{The permanent and convex approximation.}
Several deterministic approaches to permanent approximation use convex optimization. For the Bethe permanent, Schrijver's inequality~\cite{Schrijver98} and Gurvits's variational formulation~\cite{Gurvits11} give a lower bound on the permanent. Vontobel~\cite{Vontobel13} established its concave optimization formulation and graph-cover interpretation. Anari and Rezaei~\cite{AR21} proved that the Bethe permanent approximates the permanent within a factor of $2^{n/2}$.

\paragraph{Recent improvements to the Bethe bound.}
Recent preprints of Anari~\cite{Anari26} and, independently, Narang and Perkins~\cite{NP26} obtain deterministic polynomial-time $c^n$ approximations for unrestricted nonnegative matrices, for an absolute $c<\sqrt2$. Both exploit nearly isolated weighted $2\times2$ blocks when the $2^{n/2}$ upper bound on the ratio of the permanent to the Bethe permanent is nearly tight. Anari combines a stability analysis with a lower bound on the permanent obtained by pairing rows and using stable polynomials. Narang and Perkins identify and peel weighted blocks, then compare the permanents and Bethe objective values of the original and an auxiliary matrix. Our approach optimizes an upper bound formed from the partition function over matchings of every size and compensation for the change of weights. Our approximation guarantee improves as the allowed sum of edge weights at each vertex increases.

\paragraph{Stronger guarantees under additional assumptions.}
The recent preprint of Dong and Jain~\cite{DJ26} proves the sharp approximation factor $2^{2n/g}$ for the Bethe permanent when the bipartite support graph has girth at least an even integer $g\geq4$. This already gives a subexponential factor when the girth tends to infinity. Yi's recent preprint~\cite{Yi26} gives a deterministic fully polynomial-time relative approximation scheme when every row and column has at least $(1/2+\gamma)n$ nonzero entries and all nonzero weights lie in $[\theta,1]$, for fixed $\gamma,\theta>0$. This includes positive matrices whose weights lie in a fixed positive interval. The proof uses entropy scaling and Gaussian truncation, with running-time exponents depending on the fixed parameters.

Other recent directions include elimination-based upper bounds for structured matrices~\cite{LP25}, asymptotics obtained from graph covers for positive block-constant matrices~\cite{WV26}, and approximation on biased random complex matrices~\cite{KL26}.

\paragraph{Matching algorithms and the ingredients we use.}
Gamarnik and Katz~\cite{GK10} approximate the permanent of zero-one
matrices whose bipartite graphs have bounded degree and fixed positive
expansion. They give every edge the same large weight and count matchings
of all sizes. Larger matchings receive more weight, and expansion supplies
short paths along which imperfect matchings can be enlarged. This bounds
the total weight of imperfect matchings relative to that of perfect
matchings, giving a polynomial time
$(1+\varepsilon)^n$ approximation for each fixed $\varepsilon>0$.
We choose weights separately by convex optimization and bound the
approximation error using the expected number of unmatched rows at the
optimal weights, without assuming expansion.

To approximate the weighted sum of matchings, we use the recursion of
Bayati et al.~\cite{BGKNT07}. It computes the probability that a vertex is
unmatched using probabilities in graphs with vertices deleted.
Correlation decay bounds the effect of errors in distant recursive calls,
allowing the recursion to stop early. Their algorithm gives a deterministic
fully polynomial time approximation scheme when the maximum degree and
the common edge weight are fixed. Sinclair et al.~\cite{SSSY17} improved
the analysis of how errors shrink through the recursion.
We use their change of variables to measure error, with a calculation
for unequal positive edge weights.

To handle vertices with many neighbors, we also need to bound how much
deleting a vertex changes the probabilities that other vertices are
unmatched. We sample a matching in $G$ and a matching in $G-v$ together
so that their differences follow a single path. Comparisons along such
paths already appear in van den Berg~\cite{vdB99}. Our construction
specializes the recursive coupling of Chen and Gu~\cite[Algorithm~1]{CG24}.
Yoshida and Zhang~\cite[Section~4.2.2]{YY26} also bound changes in which
vertices are matched, for a common edge weight. Counting vertices matched
in only one of our two matchings bounds the sum of absolute changes in
unmatched probabilities. On bipartite graphs, the same construction gives
the classical correlation inequalities of Heilmann and Lieb~\cite{HL72}.
We use the bound on probability changes to control the error from
reusing computations on edges of small weight,
keeping the number of neighbors out of the exponent of the running time.
The resulting algorithm, stated in \cref{thm:counting}, applies to general graphs.

At a minimizer of the normalized objective, we use these correlations
to bound the expected number of unmatched rows.
Danskin's theorem~\cite[Proposition~A.3.2(b)]{Bertsekas09}
gives optimality conditions that we use to bound how
$\operatorname{OPT}_A(a)$ changes with $a$.
The algorithm uses projected subgradient averaging~\cite[Theorem~3.2]{Bubeck15},
with a short adaptation for approximate edge probabilities and rounded directions.
We state both results in the preliminaries. Our analytic contribution is
the bound on $\operatorname{OPT}_A(a)-\log\per(A)$ in
\eqref{eq:overview-envelope}. The proof combines convexity of the normalized
objective, its optimality conditions, and correlations between unmatched
vertices. Our algorithmic contribution is to use the deletion bound to
reuse computations on small edges while keeping the dependence on graph
size polynomial.

\section{Preliminaries}\label{sec:preliminaries}
We collect the matching identities and convex optimization facts used in the main proof.

{\widowpenalty=10000
\paragraph{Notation.}
Logarithms are natural unless indicated otherwise. The notation
$f\lesssim g$ hides an absolute multiplicative constant, and $f\asymp g$
means both $f\lesssim g$ and $g\lesssim f$. A subscript permits dependence
on that parameter. We write $\bits(x)$ for the binary encoding length of
a rational number, and $\langle x,y\rangle=\sum_e x_ey_e$.
Exponentials of edge vectors are coordinatewise and have zero entries
outside the specified support. Every unspecified polynomial has absolute degree.
\par}

\begin{samepage}
\begin{fact}[Weighted Cauchy--Schwarz]\label{fact:weighted-cauchy}
For nonnegative weights $b_i$ and positive numbers $t_i$,
\[
 \sum_i\frac{b_i}{t_i}\ge
 \frac{(\sum_i b_i)^2}{\sum_i b_it_i},
\]
provided at least one $b_i$ is positive.
\end{fact}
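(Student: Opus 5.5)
We will derive \cref{fact:weighted-cauchy} from the ordinary Cauchy--Schwarz inequality by splitting each summand into a product of two square roots. Since every $b_i\ge0$ and every $t_i>0$, we can write
\[
 b_i=\sqrt{\frac{b_i}{t_i}}\cdot\sqrt{b_it_i},
\]
where both factors are real and nonnegative. Summing over $i$ and applying Cauchy--Schwarz to the vectors $\bigl(\sqrt{b_i/t_i}\bigr)_i$ and $\bigl(\sqrt{b_it_i}\bigr)_i$ gives
\[
 \Bigl(\sum_i b_i\Bigr)^2\le\Bigl(\sum_i\frac{b_i}{t_i}\Bigr)\Bigl(\sum_i b_it_i\Bigr).
\]

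To finish, we divide by $\sum_i b_it_i$, so we must check that this quantity is positive. By hypothesis some $b_{i_0}>0$, and $t_{i_0}>0$, so $b_{i_0}t_{i_0}>0$. All other terms are nonnegative, so the denominator is strictly positive. Division then yields the stated inequality.

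The only point needing care is this positivity condition, which is exactly why the statement requires at least one positive $b_i$. If all $b_i$ vanished, the right side would be $0/0$. No other obstacle arises. For finite index sets, which is the setting of \eqref{eq:overview-deficiency-cauchy}, the argument is purely algebraic.

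When we apply this later, the terms with $b_i=0$ simply drop out of all three sums. So $t_i$ may be taken arbitrary, even undefined or infinite, on those indices. This is convenient in \eqref{eq:overview-deficiency-cauchy}, where $t$ involves reciprocals of unmatched probabilities and $b=\Delta$ vanishes off the support.
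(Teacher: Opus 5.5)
Your proposal is correct and is essentially the paper's proof: both apply Cauchy--Schwarz to the vectors with entries $\sqrt{b_i/t_i}$ and $\sqrt{b_it_i}$. You also state explicitly why $\sum_i b_it_i>0$, so that the division is legitimate; the paper leaves this implicit in its hypothesis that some $b_i$ is positive.
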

\begin{proof}
Apply Cauchy--Schwarz to the vectors with entries $\sqrt{b_i/t_i}$
and $\sqrt{b_it_i}$.
\end{proof}
\end{samepage}

\begin{samepage}
\subsection{Matchings and their probabilities}
Let $G=(V,\mathcal E)$ be a finite simple graph, with no loops or parallel edges.
Write $G-S$ for the graph after deleting vertices $S$. For positive edge weights $w$, recall
$Z_G(w)=\sum_{M\text{ matching}}\prod_{e\in M}w_e$, including the empty
matching, and let $\nu_{G,w}(M)=\prod_{e\in M}w_e/Z_G(w)$.
For $M\sim\nu_{G,w}$, write $\mu_{w,e}=\Pr_w(e\in M)$ and
$q_{G,w}(u)=\Pr_w(u\text{ unmatched})$. The latter is also called a
\emph{monomer probability}. We omit $w$ or $G$ when fixed, and a subscript
$y$ denotes weights $e^y$. Expectations use the same convention.
\par\end{samepage}

For the permanent problem, the graph $G$ is
bipartite with parts $L,R$ of size $n$, and $\mathcal E$
is the support after removing edges in no perfect matching. Put
$m=|\mathcal E|$. Let $\mathcal B$ be the nonnegative matrices with
row and column sums one and zeros
outside $\mathcal E$.

\begin{fact}[Perfect matchings and assignment~\cite{Schrijver17}]\label{fact:assignment}
The polytope $\mathcal B$ is the convex hull of the incidence matrices
of perfect matchings in $\mathcal E$. For edge costs $C$,
\[
 \max_{P\in\mathcal B}\langle C,P\rangle
 =\max_{M\text{ perfect matching}}\sum_{e\in M}C_e.
\]
For rational costs, a maximizing perfect matching can be computed in polynomial time.
\end{fact}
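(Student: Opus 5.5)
The statement just given is \cref{fact:assignment}, and it has three parts: a polytope identity, a linear-optimization identity, and an algorithmic claim. I would prove them in that order, each from the one before.

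\emph{Polytope identity.} This is the Birkhoff--von Neumann theorem restricted to the support $\mathcal E$.
\begin{enumerate}
\item Every perfect matching inside $\mathcal E$ has an incidence matrix in $\mathcal B$, and $\mathcal B$ is convex. So the convex hull of these matrices is contained in $\mathcal B$.
\item For the converse, take $P\in\mathcal B$ with $P$ not the zero matrix. Its support is a bipartite graph on $L\cup R$ in which, for every set $X$ of rows, the mass $|X|=\sum_{i\in X,j}P_{ij}$ sits on columns in the neighborhood $N(X)$. Since each column carries total mass at most one, $|N(X)|\ge|X|$.
\item Hall's theorem then gives a perfect matching $M$ contained in the support of $P$, hence contained in $\mathcal E$.
\item Let $\theta=\min_{e\in M}P_e>0$. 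If $\theta=1$, then $P$ is the incidence matrix of $M$ and we are done.
\item Otherwise, $(P-\theta\1_M)/(1-\theta)$ again lies in $\mathcal B$ and has strictly fewer nonzero entries. Induction on the support size writes $P$ as a convex combination of perfect matchings in $\mathcal E$.
\end{enumerate}
The only point needing care is the Hall condition in step 2. It is a one-line counting argument, but it uses both the row sums and the column sums.

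\emph{Linear-optimization identity.} The map $P\mapsto\langle C,P\rangle$ is linear. A linear function on the convex hull of finitely many points attains its maximum at one of those points. By the first part, those points are exactly the perfect-matching incidence matrices, so the two maxima in the statement coincide. Both maxima exist because $\mathcal B$ is compact and nonempty; nonemptiness uses the standing assumption $\per(A)>0$.

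\emph{Algorithmic claim.} Given rational costs $C$ on $\mathcal E$, a maximum-weight perfect matching can be found with the Hungarian method (Kuhn--Munkres), as treated in \cite{Schrijver17}. The algorithm uses $O(n^3)$ arithmetic operations, namely additions, subtractions and comparisons of dual potentials.

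To turn this into a polynomial bound on bit operations, I would clear denominators first. Multiply all costs by the least common multiple of their denominators. This multiplier has bit length at most the total input length, so all costs become integers of polynomial bit length. Every dual potential maintained by the Hungarian method is then a sum or difference of at most $O(n)$ input costs, so its bit length grows only by an additive $O(\log n)$. Hence the total number of bit operations is polynomial.

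Alternatively, one can solve the linear program $\max\{\langle C,P\rangle:P\in\mathcal B\}$ with the ellipsoid method and use integrality of the vertices from the first part. This route avoids analysing the Hungarian method's arithmetic, and is the fallback if the bit-length bookkeeping becomes delicate.
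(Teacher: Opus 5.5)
Your proposal is correct. It is the standard textbook argument that the paper does not reproduce but simply cites from Schrijver's notes: Birkhoff--von Neumann via Hall's theorem and peeling off matchings, then the Hungarian method with polynomially bounded numbers. The one loosely justified step, that every dual potential is a signed sum of $O(n)$ costs, holds cleanly in the shortest-augmenting-path form of the method, where potentials are residual shortest-path distances, and your ellipsoid fallback covers it in any case.
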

This fact allows the assignment term of $\Phi_A$ to be evaluated efficiently
and describes the convex combinations that occur in its optimality conditions.

\begin{fact}[Deletion recursion]\label{fact:deletion-recursion}
For positive edge weights,
\begin{equation}\label{eq:deletion-recursion}
 q_G(u)=\frac{Z(G-u)}{Z(G)}
 =\left(1+\sum_{v\sim u}w_{uv}q_{G-u}(v)\right)^{-1},
 \qquad \mu_{w,uv}=w_{uv}q_G(u)q_{G-u}(v).
\end{equation}
If $u_1,\ldots,u_N$ orders the vertices and
$G_j=G-\{u_1,\ldots,u_{j-1}\}$, then
$\log Z(G)=-\sum_{j=1}^N\log q_{G_j}(u_j)$.
\end{fact}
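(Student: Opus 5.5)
We prove the three identities in \eqref{eq:deletion-recursion} by splitting the sum defining $Z(G)$ according to the status of $u$, and then telescope to get the formula for $\log Z(G)$.

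\textbf{Unmatched probability.} A matching of $G$ in which $u$ is unmatched is exactly a matching of $G-u$, and it has the same weight. Summing over these matchings gives $\Pr(u\text{ unmatched})=Z(G-u)/Z(G)$, which is the first equality. Since all weights are positive, every partition function here is at least one, so all the ratios are well defined.

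\textbf{The recursion.} A matching of $G$ either leaves $u$ unmatched or contains exactly one edge $uv$ with $v\sim u$. In the second case, removing $uv$ is a weight-respecting bijection (up to the factor $w_{uv}$) onto the matchings of $G-\{u,v\}$. Hence
\[
 Z(G)=Z(G-u)+\sum_{v\sim u}w_{uv}Z(G-\{u,v\}).
\]
Divide by $Z(G-u)$ and apply the first equality inside $G-u$, which gives $Z(G-u-v)/Z(G-u)=q_{G-u}(v)$. This yields $Z(G)/Z(G-u)=1+\sum_{v\sim u}w_{uv}q_{G-u}(v)$, and taking reciprocals gives the second equality.

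\textbf{Edge probability.} The same bijection gives $\Pr(uv\in M)=w_{uv}Z(G-u-v)/Z(G)$. Write this ratio as
\[
 \frac{Z(G-u)}{Z(G)}\cdot\frac{Z(G-u-v)}{Z(G-u)}=q_G(u)\,q_{G-u}(v),
\]
which is the stated formula for $\mu_{w,uv}$.

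\textbf{Telescoping.} We have $G_{N+1}$ equal to the empty graph, whose only matching is empty, so $Z(G_{N+1})=1$. Therefore
\[
 \frac1{Z(G)}=\prod_{j=1}^N\frac{Z(G_{j+1})}{Z(G_j)}.
\]
Each factor equals $q_{G_j}(u_j)$ by the first equality applied in $G_j$, since $G_{j+1}=G_j-u_j$. Taking logarithms gives $\log Z(G)=-\sum_j\log q_{G_j}(u_j)$.

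There is no real obstacle. The only point needing care is to check that the bijections preserve weights and that positivity keeps every denominator nonzero.
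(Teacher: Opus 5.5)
Your proof is correct and follows essentially the same route as the paper. Both partition matchings by the status of $u$ to get $Z(G)=Z(G-u)+\sum_{v\sim u}w_{uv}Z(G-\{u,v\})$, read off the monomer and edge probabilities as ratios of partition functions, and telescope down to the empty graph, which has partition function $1$; your remark that the empty matching makes every $Z\ge1$, so all ratios are defined, is a detail the paper leaves implicit.
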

The weighted version follows by partitioning matchings according to
whether $u$ is unmatched or matched along an incident edge, and the last
identity telescopes. These formulas
recover the partition function and edge probabilities from unmatched probabilities.

We next give a coupling that explains how deletion changes unmatched
probabilities. The construction specializes the recursive coupling
method of Chen and Gu~\cite[Algorithm~1]{CG24} to ordinary matchings
with positive edge weights. We count the vertices along its alternating
path that are matched in only one matching. Yoshida and Zhang~\cite[Section~4.2.2]{YY26} also
control changes in matched vertex sets, for a common edge weight.

\Needspace{6\baselineskip}
\begin{lemma}[Coupling under deletion]\label{lem:matching-coupling}
For every vertex $u$, there is a coupling
$M\sim\nu_G$, $N\sim\nu_{G-u}$ such that $M=N$ when $M$ leaves $u$
unmatched. Otherwise, $M\triangle N$ is a single simple alternating
path starting at $u$. Exactly one vertex other than $u$ then has
different unmatched status in the two matchings.
\end{lemma}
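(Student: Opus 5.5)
We construct the coupling by induction on the number of vertices of $G$, following the recursive scheme described in the overview. For the empty graph, or when $u$ has no neighbors in $G$, we have $\nu_G$-almost surely that $u$ is unmatched. Moreover $\nu_G$ restricted to matchings avoiding $u$ is exactly $\nu_{G-u}$, since these are the same matchings with the same weights. So we take $M=N$ drawn from $\nu_{G-u}$.

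In general, we first sample the status of $u$ under $\nu_G$. With probability $q_G(u)=Z(G-u)/Z(G)$ we declare $u$ unmatched and draw $M=N\sim\nu_{G-u}$. Conditioned on $u$ unmatched, $\nu_G$ is exactly $\nu_{G-u}$, so both marginals are correct on this event. With probability $\mu_{uv}=w_{uv}Z(G-\{u,v\})/Z(G)$, for each neighbor $v$, we declare $uv\in M$. Conditioned on this event, $M\setminus\{uv\}\sim\nu_{G-\{u,v\}}$. We then apply the induction hypothesis to the smaller graph $G'=G-u$ with distinguished vertex $v$. This yields a coupling $(N,M')$ with $N\sim\nu_{G-u}$ and $M'\sim\nu_{G-u-v}$, and we set $M=M'\cup\{uv\}$.

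The first thing to check is the marginal of $N$. Conditioned on each branch, $N\sim\nu_{G-u}$, so unconditionally $N\sim\nu_{G-u}$. The marginal of $M$ is $\nu_G$ by the decomposition of \cref{fact:deletion-recursion}. The probabilities $q_G(u)$ and $\mu_{uv}$ sum to one, since they partition the matchings of $G$ by the status of $u$. The structural claim also follows by induction. In the branch $uv\in M$, the inner coupling either has $M'=N$, or $M'\triangle N$ is a simple alternating path in $G-u$ starting at $v$.

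In the first case $M'=N$, so $v$ is unmatched in $N$ and $M\triangle N=\{uv\}$. This is a path of length one, and $v$ is the unique vertex other than $u$ whose status differs: $v$ is matched in $M$ but unmatched in $N$.

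In the second case, $M'$ and $N$ are matchings of $G-u$, and $M'$ leaves $v$ unmatched as a matching of $G-u-v$. Hence the path in $M'\triangle N$ begins at $v$ with an $N$-edge. Prepending $uv$, which lies in $M$, gives an alternating path $M\triangle N$. This path is simple because it avoids $u$ apart from its first edge. Inside $G-u$, the inner path has exactly one vertex $x\ne v$ whose status differs between $M'$ and $N$. After adding $uv$, vertex $v$ becomes matched in both $M$ and $N$, since $N$ matches $v$ along the path. So $x$ remains the unique vertex other than $u$ with differing status, and $u$ itself is matched in $M$ but not in $N$.

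The main obstacle is the bookkeeping at $v$ in this second case. The inner path must start with an $N$-edge at $v$, and that requires the induction hypothesis to be stated exactly as in the lemma, with $M'\sim\nu_{G'}$ and $N$ drawn on $G'-v$ relative to $v$. I therefore keep the roles in the induction precise. The current $N$ plays the role of the big-graph matching in the inner instance. The new $M'$ plays the role of the deleted-graph matching. I will apply the lemma to $(G-u,v)$ with the two sides swapped; the statement is symmetric enough for this, since it only concerns the symmetric difference and the status changes. Termination is by the decreasing vertex count.
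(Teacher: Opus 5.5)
Your proposal is correct and follows essentially the same route as the paper. Both argue by induction on the number of vertices, sample the status of $u$ from its Gibbs probabilities, take $M=N\sim\nu_{G-u}$ in the unmatched branch and $M=M'\cup\{uv\}$ with the inductive coupling on $(G-u,v)$ otherwise, and extend the inner alternating path by $uv$; that path starts at $v$ with an $N$-edge because $M'$ misses $v$.

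Your final paragraph has a slip. It writes $M'\sim\nu_{G'}$, reversing the laws, and speaks of applying the lemma ``with the two sides swapped.'' No swap or symmetry is needed: the inner instance's $\nu_{G-u}$-matching is simply named $N$ and its $\nu_{G-u-v}$-matching $M'$, exactly as you set it up earlier.
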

\begin{proof}
Induct on the number of vertices and write $H=G-u$. Choose the status
of $u$ according to its Gibbs probabilities, with probability $q_G(u)$
for unmatched and $\mu_{uv}$ for matched to a neighbor $v$.
In the unmatched case, draw $N\sim\nu_H$ and set $M=N$.
In the other case, apply the inductive coupling to obtain
$N\sim\nu_H$ and $M'\sim\nu_{H-v}$, and set $M=M'\cup\{uv\}$.
These are the conditional laws of $M$ given the status of $u$, so
$M\sim\nu_G$. Every branch gives $N$ the same law $\nu_H$.

In the matched case, the inductive symmetric difference is empty or
is a simple alternating path starting at $v$ with an edge of $N$.
Adding $uv$ extends it to such a path from $u$ with an edge of $M$.
The path remains simple because $u$ is absent from $H$. Its internal
vertices are matched in both matchings, and only its endpoints change
unmatched status.
\end{proof}

The correlation signs below follow from the Heilmann--Lieb path
identity~\cite[Theorem~6.3, p.~213]{HL72}, restated by
Spier~\cite[Lemma~8]{Spier23}. We prove them using the coupling above.
\begin{fact}[Bipartite monomer correlations]\label{fact:correlation}
Suppose $G$ is bipartite. For $M\sim\nu_{G,w}$, let $Y_u=Y_u(M)$
indicate that $u$ is unmatched in $M$. For distinct vertices $u,v$,
$\Cov(Y_u,Y_v)\ge0$ on opposite sides and
$\Cov(Y_u,Y_v)\le0$ on the same side.
In particular, for every edge $uv$,
\[
 \mu_{w,uv}=w_{uv}\Pr_w(u,v\text{ both unmatched})
 \ge w_{uv}q_G(u)q_G(v).
\]
\end{fact}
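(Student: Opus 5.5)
The plan is to derive both covariance signs from the deletion coupling of \cref{lem:matching-coupling}, using the path structure and bipartiteness to fix the parity of where the disagreement path ends. The edge inequality then follows from the weighted deletion identity in \cref{fact:deletion-recursion}.

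Fix $u$ and write $H=G-u$. We have $\Pr(Y_u=1,Y_v=1)=q_G(u)\,q_H(v)$, because the matchings leaving $u$ unmatched are exactly the matchings of $H$. Hence
\[
 \Cov(Y_u,Y_v)=q_G(u)\bigl(q_H(v)-q_G(v)\bigr),
\]
and it suffices to compare $q_H(v)$ with $q_G(v)$. Take the coupling $M\sim\nu_G$, $N\sim\nu_H$ of \cref{lem:matching-coupling}; then $q_H(v)-q_G(v)=\E[Y_v(N)-Y_v(M)]$. If $u$ is unmatched in $M$, then $M=N$ and there is no contribution. Otherwise $M\triangle N$ is a simple alternating path starting at $u$ with an edge of $M$. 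Its internal vertices are matched in both matchings, so only its far endpoint $z$ can change status.

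Next I determine the direction of that change from the parity of the path length. If the path has odd length, its last edge lies in $M$, so $z$ is matched in $M$ and unmatched in $N$: $Y_z(N)-Y_z(M)=+1$. If the length is even, the last edge lies in $N$, so $Y_z(N)-Y_z(M)=-1$. In a bipartite graph, odd length means $z$ is on the opposite side from $u$, and even length means the same side. Therefore for $v$ on the opposite side every contribution is $\ge0$, so $q_H(v)\ge q_G(v)$ and $\Cov\ge0$. For $v$ on the same side every contribution is $\le0$, so $\Cov\le0$.

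For the displayed inequality, \cref{fact:deletion-recursion} gives $\mu_{uv}=w_{uv}q_G(u)q_{G-u}(v)=w_{uv}\Pr(u,v\text{ both unmatched})$. The endpoints of an edge lie on opposite sides, so the nonnegative covariance gives $\Pr(u,v\text{ both unmatched})\ge q_G(u)q_G(v)$.

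The main point needing care is the endpoint bookkeeping: confirming that the path alternates starting with an edge of $M$ at $u$, so the parity-to-sign map above is correct. I also need to check the degenerate case where the path is a single edge $uv$ with $v=z$, which falls under the odd case.
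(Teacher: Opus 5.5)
Your proposal is correct and follows essentially the same route as the paper: the deletion coupling of \cref{lem:matching-coupling}, bipartite parity of the alternating path to fix the sign of $q_{G-u}(v)-q_G(v)$, and the identity $\Cov(Y_u,Y_v)=q_G(u)\bigl(q_{G-u}(v)-q_G(v)\bigr)$. The only cosmetic difference is that you read the edge identity off the factorization $\mu_{uv}=w_{uv}q_G(u)q_{G-u}(v)$ in \cref{fact:deletion-recursion}, while the paper states the edge-insertion bijection directly; these are the same argument.
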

\begin{proof}
Use \cref{lem:matching-coupling}. If its path ends on the opposite
side from $u$, it has odd length and ends with an edge of $M$.
That endpoint is therefore unmatched in $N$ and matched in $M$.
An endpoint on the same side has the reverse change. Thus
$q_{G-u}(v)-q_G(v)$ has the claimed sign. Conditioning on $u$ being
unmatched gives the matching distribution on $G-u$, so
\[
 \Cov(Y_u,Y_v)=q_G(u)\bigl(q_{G-u}(v)-q_G(v)\bigr).
\]
For the edge identity, inserting $uv$ gives a bijection between matchings
leaving both endpoints unmatched and matchings containing $uv$, multiplying
their weight by $w_{uv}$. The correlation inequality gives the lower bound.
\end{proof}

The coupling also bounds the total change in unmatched probabilities
on general graphs. This bound lets the algorithm in \cref{thm:counting}
reuse computed probabilities.
\begin{lemma}[Deletion stability]\label{lem:deletion-stability}
For every vertex $u$ of a finite simple graph $G$ with positive edge weights,
\begin{equation}\label{eq:deletion-stability}
 \sum_{v\ne u}|q_{G-u}(v)-q_G(v)|\le1-q_G(u),
\end{equation}
with equality when $G$ is bipartite. For every deleted vertex set $S$,
\begin{equation}\label{eq:several-deletions}
 \sum_{v\notin S}|q_{G-S}(v)-q_G(v)|\le|S|.
\end{equation}
\end{lemma}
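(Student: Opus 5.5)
We plan to derive both inequalities from the coupling in \cref{lem:matching-coupling}. Draw $(M,N)$ from that coupling, with $M\sim\nu_G$ and $N\sim\nu_{G-u}$. For each $v\ne u$, write $Y_v(M)$ and $Y_v(N)$ for the unmatched indicators. Then
\[
 |q_{G-u}(v)-q_G(v)|=\bigl|\E[Y_v(N)-Y_v(M)]\bigr|\le\Pr\bigl(Y_v(N)\ne Y_v(M)\bigr).
\]
Summing over $v\ne u$ gives
\[
 \sum_{v\ne u}|q_{G-u}(v)-q_G(v)|\le\E\bigl[\#\{v\ne u:\ Y_v(N)\ne Y_v(M)\}\bigr].
\]
By the lemma, this count is zero when $M$ leaves $u$ unmatched. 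Otherwise it is exactly one, namely the far endpoint of the alternating path. Hence the expectation equals $\Pr(u\text{ matched in }M)=1-q_G(u)$, which proves \eqref{eq:deletion-stability}.

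For equality in the bipartite case, we use the sign information from the proof of \cref{fact:correlation}. If the path ends at $v$ on the side opposite $u$, then $v$ goes from matched in $M$ to unmatched in $N$, so $Y_v(N)-Y_v(M)=+1$. If $v$ is on the same side as $u$, the change is $-1$. Hence for each fixed $v$, the difference $Y_v(N)-Y_v(M)$ is nonnegative almost surely or nonpositive almost surely. This sign depends only on the side of $v$. Therefore $|\E[Y_v(N)-Y_v(M)]|=\E|Y_v(N)-Y_v(M)|=\Pr(Y_v(N)\ne Y_v(M))$, and every inequality in the chain becomes an equality.

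For \eqref{eq:several-deletions}, we induct on $|S|$ by deleting the vertices one at a time. Order $S=\{s_1,\ldots,s_k\}$ and put $G_j=G-\{s_1,\ldots,s_j\}$, with $G_0=G$. By the triangle inequality,
\[
 \sum_{v\notin S}|q_{G-S}(v)-q_G(v)|\le\sum_{j=1}^k\sum_{v\notin S}|q_{G_j}(v)-q_{G_{j-1}}(v)|.
\]
Each inner sum runs over a subset of the vertices of $G_{j-1}$ other than $s_j$. Applying \eqref{eq:deletion-stability} to the graph $G_{j-1}$ and the vertex $s_j$ bounds it by $1-q_{G_{j-1}}(s_j)\le1$. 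The terms are nonnegative, so restricting to $v\notin S$ only helps, and the total is at most $k=|S|$.

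The argument is routine given the coupling. The only point requiring care is the equality claim. There we must make sure the sign of each endpoint's change depends only on its side, so that no cancellation occurs between the branches of the coupling. This is exactly the parity argument already made in \cref{fact:correlation}.
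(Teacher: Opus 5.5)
Your proposal is correct and follows the paper's proof essentially step for step. You bound each monomer difference by the disagreement probability in the deletion coupling, and use that exactly one vertex other than $u$ disagrees precisely when $M$ matches $u$. You get equality in the bipartite case because each vertex's change has a sign fixed by its side, and you handle several deletions one vertex at a time with the triangle inequality; your version only spells out details the paper leaves implicit, such as the odd/even path-length sign argument and why restricting the inner sums to $v\notin S$ is harmless.
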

\begin{proof}
In the coupling, there is one disagreement outside $u$ exactly when
$M$ matches $u$. Hence
\[
 \sum_{v\ne u}|\E Y_v(N)-\E Y_v(M)|
 \le\E\sum_{v\ne u}|Y_v(N)-Y_v(M)|=1-q_G(u).
\]
On a bipartite graph, the sign of each change is fixed by the side of
the vertex, so equality holds. To delete a set $S$, remove its vertices
one at a time and sum the changes only over vertices outside $S$.
Each deletion contributes at most $1$ by \eqref{eq:deletion-stability}.
The triangle inequality then gives \eqref{eq:several-deletions}.
\end{proof}

The matching recursion contracts in the coordinate $\psi$ below, which
is a constant multiple of the matching message of
Sinclair et al.~\cite[Section~5]{SSSY17}. We give the short calculation
for unequal edge weights, using the weighted deletion recursion in
\cref{fact:deletion-recursion}.

\begin{fact}[Contraction of the matching recursion]\label{lem:message-contraction}
Let $\gamma\ge1$, $I=[1/(1+\gamma),1]$, and
\[
 \psi(q)=\log\frac{2-q}{q},\qquad
 s=\left\lceil\sqrt{1+4\gamma}\right\rceil,\qquad
 \rho=1-\frac1s.
\]
For $r\ge1$ nonnegative weights with $\sum_{i=1}^r w_i\le\gamma$, put
$f(x)=(1+\sum_iw_ix_i)^{-1}$. For $x,x'\in I^r$,
\[
 |\psi(f(x))-\psi(f(x'))|
 \le\rho\max_i|\psi(x_i)-\psi(x_i')|.
\]
Moreover,
\begin{equation}\label{eq:message-bounds}
 \operatorname{diam}\psi(I)=\log(1+2\gamma)\le s,
 \qquad
 \left|\frac{d\log q}{d\psi}\right|=\frac{2-q}{2}\le1.
\end{equation}
\end{fact}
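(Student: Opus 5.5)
The plan is to work entirely in the coordinates $\psi_i=\psi(x_i)$ and apply the mean value theorem along the segment joining $\psi(x)$ and $\psi(x')$. Since $\psi$ is a decreasing bijection of $(0,2)$ onto $\R$ with inverse $q=2/(1+e^{\psi})$, the set $\psi(I)$ is an interval and $\psi(I)^r$ is convex. It therefore suffices to bound the $\ell_1$ norm of the gradient of $\Psi:=\psi\circ f\circ\psi^{-1}$ by $\rho$ at every point of $\psi(I)^r$. The mean value theorem then gives $|\Psi(\psi(x))-\Psi(\psi(x'))|\le\sup\|\nabla\Psi\|_1\,\max_i|\psi(x_i)-\psi(x_i')|$.

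\textbf{Computing the gradient.} Write $S=\sum_iw_ix_i$, so that $f=1/(1+S)$. We have $d\psi/dq=-2/(q(2-q))$, $\partial f/\partial x_i=-w_if^2$, and $dx_i/d\psi_i=-x_i(2-x_i)/2$. Multiplying these three factors gives
\[
 \frac{\partial\Psi}{\partial\psi_i}=\frac{f}{2-f}\,w_ix_i(2-x_i)=\frac{w_ix_i(2-x_i)}{1+2S},
\]
which is nonnegative. Here we used $f/(2-f)=1/(1+2S)$.

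\textbf{Bounding the sum.} Summing over $i$ gives $\sum_iw_ix_i(2-x_i)=2S-\sum_iw_ix_i^2$. By Cauchy--Schwarz, $\sum_iw_ix_i^2\ge S^2/\sum_iw_i\ge S^2/\gamma$. (If all $w_i=0$, the gradient vanishes and the claim is trivial.) Hence
\[
 1-\|\nabla\Psi\|_1\ge\frac{1+S^2/\gamma}{1+2S}.
\]
The key one-variable minimization comes next. Substitute $t=1+2S\ge1$ and put $\sigma=\sqrt{1+4\gamma}$. The right-hand side becomes $\bigl(t+\sigma^2/t-2\bigr)/(4\gamma)$. By AM--GM this is at least $(2\sigma-2)/(\sigma^2-1)=2/(\sigma+1)$. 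Since $\sigma\ge1$ we have $2/(\sigma+1)\ge1/\sigma\ge1/s$. Therefore $\|\nabla\Psi\|_1\le1-1/s=\rho$, which proves the contraction. I expect this step to be the only real obstacle: the Cauchy--Schwarz bound has to be used with the total weight $\gamma$ rather than with the number of neighbors $r$, so that the factor $\rho$ does not depend on the degree.

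\textbf{The bounds in \eqref{eq:message-bounds}.} First, $\psi(1)=0$ and $\psi(1/(1+\gamma))=\log(1+2\gamma)$, and $\psi$ is monotone, so $\operatorname{diam}\psi(I)=\log(1+2\gamma)$. Next, $\log(1+2\gamma)\le\sqrt{1+4\gamma}\le s$. To see the first inequality, set $u=2\gamma$ and compare $\log(1+u)$ with $\sqrt{1+2u}$: at $u=0$ the values are $0\le1$, and the derivatives satisfy $1/(1+u)\le1/\sqrt{1+2u}$ because $(1+u)^2\ge1+2u$. Finally, $\log q=\log2-\log(1+e^\psi)$, so $d\log q/d\psi=-e^\psi/(1+e^\psi)=-(1-q/2)$. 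Its absolute value is $(2-q)/2\le1$ for $q\in I$.
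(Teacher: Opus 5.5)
Your proof is correct and follows the paper's argument. Both use the same $\ell^1$ bound on the gradient in $\psi$-coordinates, weighted Cauchy--Schwarz against the total weight $\gamma$, the same scalar optimum $(\sqrt{1+4\gamma}-1)/(\sqrt{1+4\gamma}+1)$ (which you reach via AM--GM applied to $1-\|\nabla\Psi\|_1$), and convexity of $\psi(I)^r$.

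One harmless slip: each $\partial\Psi/\partial\psi_i$ is a product of three negative factors, so it is nonpositive rather than nonnegative. Only its absolute value enters the argument, so nothing breaks.
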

\begin{proof}
Writing $t=\sum_iw_ix_i$, differentiation and weighted
Cauchy--Schwarz give
\begin{align}
 \sum_i\left|
 \frac{\partial\psi(f(x))}{\partial\psi(x_i)}\right|
 &=\frac{\sum_iw_ix_i(2-x_i)}{1+2t}
 \le\frac{2t-t^2/\gamma}{1+2t}
 \le\frac{\sqrt{1+4\gamma}-1}{\sqrt{1+4\gamma}+1}
 \le\rho.
 \label{eq:star-contraction}
\end{align}
The scalar expression is maximized when $t^2+t=\gamma$.
The box $\psi(I)^r$ is convex, so the derivative bound proves the
claimed contraction.  Direct calculation gives
\eqref{eq:message-bounds}.
\end{proof}

\subsection{Convex optimality and subgradient steps}
For a convex function $f:\R^m\to\R$, a vector $g$ is a
\emph{subgradient} at $x$ if
\[
 f(z)\ge f(x)+\langle g,z-x\rangle\qquad\text{for every }z\in\R^m.
\]
Thus $g$ gives the slope of an affine lower bound that agrees with $f$
at $x$. Write $\partial f(x)$ for the set of subgradients.
A point minimizes $f$ if and only if $0\in\partial f(x)$.
At a differentiable point, the gradient is the only subgradient.

We use the following elementary calculus rule to differentiate partition
functions and to prove convexity after normalizing the edge weights.
Its convexity assertion is the log-sum-exp composition rule
of Boyd and Vandenberghe~\cite[Example~3.14]{BV04}.

\begin{fact}[Log-sum-exp calculus]\label{fact:log-sum-exp}
For convex functions $f_1,\ldots,f_N:\R^m\to\R$, the function
$L(x)=\log\sum_{j=1}^N e^{f_j(x)}$ is convex. If the $f_j$ are
differentiable, then
\[
 \nabla L(x)=\sum_{j=1}^N p_j(x)\nabla f_j(x),\qquad
 p_j(x)=\frac{e^{f_j(x)}}{\sum_i e^{f_i(x)}}.
\]
\end{fact}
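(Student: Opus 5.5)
The plan is to prove the two assertions of \cref{fact:log-sum-exp} separately: first the gradient formula by the chain rule, then convexity by viewing $L$ as a convex, coordinatewise nondecreasing function composed with convex functions.

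\emph{Gradient formula.} Write $S(x)=\sum_j e^{f_j(x)}$. Since $S>0$, the chain rule gives
\[
\nabla L=\frac{\nabla S}{S},\qquad \nabla S=\sum_j e^{f_j}\nabla f_j .
\]
Dividing by $S$ produces exactly the weights $p_j(x)$. No convexity is needed for this step.

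\emph{Convexity.} Let $\mathrm{lse}(u)=\log\sum_j e^{u_j}$ on $\R^N$, so that $L=\mathrm{lse}\circ(f_1,\ldots,f_N)$. I will use two properties of $\mathrm{lse}$.

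First, $\mathrm{lse}$ is nondecreasing in each coordinate. This is immediate because each $e^{u_j}$ is increasing.

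Second, $\mathrm{lse}$ is convex. I would prove this by Hölder's inequality rather than through the Hessian. For $\theta\in[0,1]$, apply Hölder with exponents $1/\theta$ and $1/(1-\theta)$:
\[
\sum_j e^{\theta u_j+(1-\theta)v_j}
=\sum_j (e^{u_j})^\theta (e^{v_j})^{1-\theta}
\le\Bigl(\sum_j e^{u_j}\Bigr)^\theta\Bigl(\sum_j e^{v_j}\Bigr)^{1-\theta}.
\]
Taking logarithms gives $\mathrm{lse}(\theta u+(1-\theta)v)\le\theta\,\mathrm{lse}(u)+(1-\theta)\,\mathrm{lse}(v)$. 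The endpoint cases $\theta\in\{0,1\}$ are trivial.

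Now fix $x,z$ and $\theta\in[0,1]$. Convexity of each $f_j$ gives, coordinatewise,
\[
f_j(\theta x+(1-\theta)z)\le\theta f_j(x)+(1-\theta)f_j(z).
\]
Monotonicity of $\mathrm{lse}$ turns this into
\[
L(\theta x+(1-\theta)z)\le \mathrm{lse}\bigl(\theta F(x)+(1-\theta)F(z)\bigr),
\]
where $F=(f_1,\ldots,f_N)$. Convexity of $\mathrm{lse}$ then bounds the right-hand side by $\theta L(x)+(1-\theta)L(z)$, which is the claim.

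The only step with real content is the convexity of $\mathrm{lse}$. Here the Hölder argument is the cleanest route, since it avoids any differentiability assumption on the $f_j$.
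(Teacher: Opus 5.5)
Your proof is correct and follows essentially the same route as the paper. In both, convexity of each $f_j$ combined with monotonicity of the exponential is followed by H\"older's inequality with exponents $1/\theta$ and $1/(1-\theta)$, and the gradient comes from differentiating the finite sum. The only difference is packaging: you factor the argument through convexity and monotonicity of $\mathrm{lse}$, whereas the paper writes it as a single chain of inequalities.
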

\begin{proof}
For $0<t<1$, convexity of the $f_j$ and H\"older's inequality give
\[
 \sum_j e^{f_j((1-t)x+ty)}
 \le\sum_j e^{(1-t)f_j(x)+tf_j(y)}
 \le\left(\sum_j e^{f_j(x)}\right)^{1-t}
     \left(\sum_j e^{f_j(y)}\right)^t.
\]
Taking logarithms proves convexity. Differentiating the finite sum gives
the gradient formula.
\end{proof}
For matchings, the functions in the exponent are $\langle y,\mathbf1_M\rangle$,
where $\mathbf1_M$ is the incidence vector of $M$, and the probabilities
in the fact are $\nu_{G,e^y}(M)$. Hence
$\nabla_y\log Z(e^y)=\mu_y$. Increasing every coordinate of $y$ at unit
speed changes $\log Z(e^y)$ at rate $\E_{M\sim\nu_{G,e^y}}|M|$.

The normalized objective is a maximum of finitely many smooth convex
functions. The following finite form of Danskin's theorem describes its
subgradients, including at points where several functions attain the maximum.

\begin{fact}[Danskin's theorem~{\cite[Proposition~A.3.2(b)]{Bertsekas09}}]
\label{fact:danskin}
Let $f_1,\ldots,f_k:\R^m\to\R$ be convex and differentiable, and put
$f(x)=\max_i f_i(x)$. Then
\[
 \partial f(x)=\operatorname{conv}
 \{\nabla f_i(x):f_i(x)=f(x)\}.
\]
\end{fact}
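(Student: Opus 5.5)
The plan is to prove the two inclusions separately. The inclusion $\supseteq$ is direct, and the inclusion $\subseteq$ is a separation argument combined with a one-sided directional derivative computation. Throughout, write $I(x)=\{i:f_i(x)=f(x)\}$ for the active indices and $C=\operatorname{conv}\{\nabla f_i(x):i\in I(x)\}$. Since $I(x)$ is finite and nonempty, $C$ is a compact convex polytope.

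For $\supseteq$, fix $i\in I(x)$. Convexity and differentiability of $f_i$ give the gradient inequality, so for every $z$,
\[
 f(z)\ge f_i(z)\ge f_i(x)+\langle\nabla f_i(x),z-x\rangle=f(x)+\langle\nabla f_i(x),z-x\rangle .
\]
Hence each active gradient is a subgradient of $f$ at $x$. The set $\partial f(x)$ is convex, because it is an intersection of half-spaces in $g$, one for each $z$. It therefore contains $C$.

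For $\subseteq$, I first compute the one-sided directional derivative
\[
 f'(x;d)=\lim_{t\downarrow0}\frac{f(x+td)-f(x)}{t}=\max_{i\in I(x)}\langle\nabla f_i(x),d\rangle .
\]
The first step is to discard the inactive indices. Each $f_j$ is differentiable, hence continuous. So for $j\notin I(x)$ we have $f_j(x+td)<f(x+td)$ for all small $t>0$, and only active indices matter near $x$. The second step handles the active indices: $f_i(x+td)=f(x)+t\langle\nabla f_i(x),d\rangle+o(t)$. Taking the maximum over the finitely many $i\in I(x)$ preserves the uniform $o(t)$ error, which yields the formula.

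Now suppose $g\in\partial f(x)$ but $g\notin C$. Strict separation of the point $g$ from the compact convex set $C$ gives a direction $d$ with $\langle g,d\rangle>\max_{c\in C}\langle c,d\rangle=f'(x;d)$. On the other hand, the subgradient inequality at $z=x+td$ gives $f(x+td)-f(x)\ge t\langle g,d\rangle$. Dividing by $t$ and letting $t\downarrow0$ gives $f'(x;d)\ge\langle g,d\rangle$, a contradiction. Hence $\partial f(x)\subseteq C$.

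The only step needing care is the directional derivative formula. Specifically, one must justify that inactive functions stay strictly below the maximum for small $t$, and that the maximum of finitely many first-order expansions has a uniform remainder. Both follow from finiteness of the index set and continuity, so I expect no real obstacle beyond writing these limits carefully.
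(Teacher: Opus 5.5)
Your proof is correct. The gradient inequality puts each active gradient in the convex set $\partial f(x)$. For the reverse inclusion, strict separation, the elementary bound $\langle g,d\rangle\le f'(x;d)$, and your formula $f'(x;d)=\max_{i\in I(x)}\langle\nabla f_i(x),d\rangle$ together give the contradiction.

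The paper does not prove this fact; it only cites Bertsekas. There Danskin's theorem is stated for maxima over a compact family $\phi(\cdot,z)$. The usual textbook proof computes the directional derivative and then invokes the max formula $f'(x;d)=\max_{g\in\partial f(x)}\langle g,d\rangle$ for convex $f$. It concludes because two compact convex sets with the same support function coincide.

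Your route is the finite-index case of that argument, made more elementary:
\begin{itemize}
\item Getting $\supseteq$ directly from the gradient inequality means you never need the max formula. That formula needs its own supporting-hyperplane argument to produce a subgradient attaining $f'(x;d)$.
\item Finiteness makes discarding inactive indices and the uniform $o(t)$ remainder immediate. The general version instead needs compactness of the index set and continuity of $\nabla_x\phi(x,\cdot)$.
\item Your argument shows that convexity of the $f_i$ is used only for $\supseteq$.
\end{itemize}
The cited version buys generality: infinite index sets and nonsmooth $\phi(\cdot,z)$. The paper does not need this. It applies the fact only to finitely many smooth functions indexed by pairs $(v,Q)$ in the proof of \cref{lemma:normalized-convexity}, so your self-contained argument covers every use made of it.
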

In particular, a point minimizes $f$ if and only if zero belongs to
this convex hull.

These two facts give the convexity and sensitivity of the objective
$\Phi_A$ defined in \eqref{eq:objective}.
\begin{fact}[Convexity and sensitivity of the permanent upper bound]\label{fact:lipschitz}
The function $\Phi_A$ is convex. If $P_y$ is the incidence matrix of a
perfect matching maximizing $\langle\log A-y,P\rangle$ over $P\in\mathcal B$,
then $\mu_y-P_y$ is a subgradient of $\Phi_A$ at $y$.
In particular,
\begin{equation}\label{eq:objective-lipschitz}
 |\Phi_A(y)-\Phi_A(z)|\le 2n\|y-z\|_\infty.
\end{equation}
\end{fact}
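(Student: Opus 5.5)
I split $\Phi_A$ into its two terms and treat each separately: the partition-function term $y\mapsto\log Z(e^y)$ and the assignment term $y\mapsto\max_M\sum_{e\in M}(\log A_e-y_e)$.

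\textbf{The partition-function term.} Here $\log Z(e^y)=\log\sum_M e^{\langle y,\mathbf 1_M\rangle}$, a log-sum-exp of linear, hence convex and differentiable, functions. By \cref{fact:log-sum-exp} it is convex, and its gradient is $\sum_M\nu_{G,e^y}(M)\mathbf 1_M=\mu_y$.

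\textbf{The assignment term.} It is a finite maximum of affine functions $f_M(y)=\langle\log A-y,\mathbf 1_M\rangle$, so it is convex. By \cref{fact:danskin}, its subdifferential at $y$ is the convex hull of $\{-\mathbf 1_M\}$ over the maximizing perfect matchings $M$. By \cref{fact:assignment}, the maximum over $\mathcal B$ is attained at such a vertex. So if $P_y$ is a maximizing perfect-matching incidence matrix, then $-P_y$ is a subgradient.

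\textbf{Combining.} A sum of convex functions is convex. The sum of a gradient of one and a subgradient of the other is a subgradient of the sum, since we may add the two affine lower bounds. Hence $\mu_y-P_y\in\partial\Phi_A(y)$.

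\textbf{Lipschitz bound.} Apply the subgradient inequality at $y$ and again at $z$:
\[
\langle\mu_y-P_y,\,z-y\rangle\le\Phi_A(z)-\Phi_A(y)\le\langle\mu_z-P_z,\,z-y\rangle .
\]
Bound each side by Hölder's inequality with the $\ell_1$ norm of the subgradient. We have $\|\mu_y\|_1=\E|M|\le n$, since a matching in a bipartite graph with $n$ vertices per side has at most $n$ edges. Also $\|P_y\|_1=n$. So $\|\mu_y-P_y\|_1\le 2n$, and the same holds at $z$. This gives
\[
|\Phi_A(y)-\Phi_A(z)|\le 2n\|y-z\|_\infty .
\]

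\textbf{Main obstacle.} The only delicate point is the statement's phrasing that $P_y$ maximizes over $\mathcal B$. If this is meant to allow non-vertex maximizers, note that any maximizing $P\in\mathcal B$ is a convex combination of maximizing perfect matchings. Indeed, a combination that placed positive weight on a non-maximizing vertex would have strictly smaller value. Therefore $-P$ still lies in the convex hull given by \cref{fact:danskin}, and the argument above goes through unchanged.
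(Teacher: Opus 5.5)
Your proposal is correct and follows the paper's own proof. You use log-sum-exp calculus for convexity and the gradient $\mu_y$ of $\log Z(e^y)$, Danskin's theorem for the subgradient $-P_y$ of the assignment term, and the $\ell^1$ bound $2n$ on $\mu_y-P_y$ for the Lipschitz estimate; the only difference is that you write out the two-sided subgradient and H\"older step that the paper leaves implicit.
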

\begin{proof}
\cref{fact:log-sum-exp} gives the convexity and gradient $\mu_y$
of $\log Z(e^y)$. \cref{fact:danskin} gives the subgradient $-P_y$
of the convex assignment term. The edge probabilities sum to at most $n$, and an assignment has $n$ edges.
Thus the subgradient has
$\ell^1$ norm at most $2n$.
\end{proof}

To compute the weights, we use projected subgradient steps and average
the queried points. We first state the averaging theorem for exact
subgradients and projections, then account for numerical errors.

\begin{samepage}
\begin{fact}[Projected subgradient averaging~{\cite[Theorem~3.2]{Bubeck15}}]
\label{fact:subgradient-averaging}
Let $K\subseteq\R^m$ be nonempty, compact, and convex, and let
$f:\R^m\to\R$ be convex. Suppose $x_0\in K$,
$\|z-x_0\|_2\le R$ for every $z\in K$, and every subgradient of $f$
at a point of $K$ has Euclidean norm at most $G$, where $R,G>0$.
For an integer $T\ge1$, set $\alpha=R/(G\sqrt T)$ and iterate
\[
 x_{t+1}=\Pi_K(x_t-\alpha g_t),\qquad
 g_t\in\partial f(x_t)\quad(0\le t<T),
\]
where $\Pi_K$ is Euclidean projection. Then
\[
 f\!\left(\frac1T\sum_{t=0}^{T-1}x_t\right)-\min_{z\in K}f(z)
 \le\frac{RG}{\sqrt T}.
\]
\end{fact}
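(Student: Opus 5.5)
We follow the standard potential argument for projected subgradient descent, using the distance to a fixed comparison point as the potential. Fix any $z\in K$ and put $y_{t+1}=x_t-\alpha g_t$, so that $x_{t+1}=\Pi_K(y_{t+1})$. Two ingredients drive the proof. The first is the subgradient inequality $f(x_t)-f(z)\le\langle g_t,x_t-z\rangle$. The second is the nonexpansiveness of Euclidean projection onto a convex set, $\|\Pi_K(y)-z\|_2\le\|y-z\|_2$ for every $z\in K$. This follows from the obtuse-angle characterization of $\Pi_K$: for every $z\in K$ one has $\langle y-\Pi_K(y),z-\Pi_K(y)\rangle\le0$, and this holds because $K$ is closed and convex and projection minimizes distance.

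The key step is to expand the square:
\[
 \langle g_t,x_t-z\rangle
 =\frac1{2\alpha}\bigl(\|x_t-z\|_2^2-\|y_{t+1}-z\|_2^2\bigr)
 +\frac\alpha2\|g_t\|_2^2 .
\]
Nonexpansiveness lets us replace $\|y_{t+1}-z\|_2$ by the smaller quantity $\|x_{t+1}-z\|_2$, which only increases the right-hand side. The iterates lie in $K$, so $\|g_t\|_2\le G$. Summing over $0\le t<T$ makes the differences of squared distances telescope, and we obtain
\[
 \sum_{t=0}^{T-1}\bigl(f(x_t)-f(z)\bigr)
 \le\frac{\|x_0-z\|_2^2}{2\alpha}+\frac{\alpha TG^2}{2}
 \le\frac{R^2}{2\alpha}+\frac{\alpha TG^2}{2}.
\]

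Next we divide by $T$. Because $f$ is convex, Jensen's inequality bounds $f$ at the average iterate by $\frac1T\sum_t f(x_t)$. The choice $\alpha=R/(G\sqrt T)$ balances the two terms, and each becomes $RG/(2\sqrt T)$, for a total of $RG/\sqrt T$. The minimum of $f$ over $K$ is attained, because $f$ is continuous (convex and finite on $\R^m$) and $K$ is compact. Taking $z$ to be a minimizer gives the claim.

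No step here is a real obstacle. The only point needing care is that the subgradient norm bound is used only at iterates, and these lie in $K$ by construction since $x_0\in K$ and every projection lands in $K$. Likewise, the radius bound is applied only to $\|x_0-z\|_2$ with $z\in K$, which is exactly the hypothesis.
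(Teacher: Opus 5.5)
Your proof is correct and is essentially the standard argument of Bubeck's Theorem~3.2, which the paper cites rather than reproves. The paper's proof of \cref{cor:inexact-averaging} uses the same steps: a supporting inequality, \cref{fact:projection} (of which your nonexpansiveness bound is the weakened form), telescoping, and convexity at the average iterate.
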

\end{samepage}

The adaptation uses the projection inequality from the same analysis.
For a box, projection clips each coordinate to its allowed interval.
\Needspace{6\baselineskip}
\begin{fact}[Projection inequality~{\cite[Lemma~3.1]{Bubeck15}}]
\label{fact:projection}
Let $K\subseteq\R^m$ be nonempty, compact, and convex. For every
$x\in\R^m$ and $z\in K$,
\[
 \|\Pi_K(x)-z\|_2^2
 \le\|x-z\|_2^2-\|x-\Pi_K(x)\|_2^2.
\]
\end{fact}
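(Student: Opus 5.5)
We prove the projection inequality by the variational characterization of Euclidean projection onto a closed convex set. Put $p=\Pi_K(x)$. Since $K$ is nonempty, compact and convex, $p$ exists and is unique: it minimizes the continuous, strictly convex function $u\mapsto\|x-u\|_2^2$ over $K$. The first step is the obtuse-angle condition
\[
 \langle x-p,\,z-p\rangle\le0\qquad\text{for every }z\in K.
\]
To derive it, fix $z\in K$. For $0<t\le1$ the point $p+t(z-p)$ lies in $K$ by convexity, so minimality of $p$ gives
\[
 \|x-p\|_2^2\le\|x-p-t(z-p)\|_2^2
 =\|x-p\|_2^2-2t\langle x-p,z-p\rangle+t^2\|z-p\|_2^2.
\]
Cancelling, dividing by $t$ and letting $t\to0^+$ gives the condition.

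The second step is an expansion. Write $x-z=(x-p)+(p-z)$. Then
\[
 \|x-z\|_2^2=\|x-p\|_2^2+\|p-z\|_2^2+2\langle x-p,\,p-z\rangle .
\]
By the first step, $\langle x-p,p-z\rangle=-\langle x-p,z-p\rangle\ge0$. Dropping this nonnegative cross term gives $\|x-z\|_2^2\ge\|x-p\|_2^2+\|p-z\|_2^2$, which rearranges to the claimed inequality $\|\Pi_K(x)-z\|_2^2\le\|x-z\|_2^2-\|x-\Pi_K(x)\|_2^2$.

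No step is a genuine obstacle. The only point needing care is the limit in the first step: the inequality must be divided by $t>0$ before letting $t\to0^+$, so that the quadratic term $t\|z-p\|_2^2$ vanishes. For the box $K=[-W,0]^m$ used later, projection is coordinatewise clipping, and the inequality can also be checked coordinate by coordinate, but the general argument already covers this case.
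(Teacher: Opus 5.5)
Your proof is correct and is the standard argument. The paper gives no proof and cites Bubeck's Lemma~3.1, which likewise derives the obtuse-angle condition $\langle x-\Pi_K(x),\,z-\Pi_K(x)\rangle\le0$ from first-order optimality of the projection and then expands $\|x-z\|_2^2$ exactly as you do.
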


An error in each supporting inequality contributes the same error to
the averaging bound.
The next corollary is the version used by the rational algorithm.
\begin{corollary}[Averaging with approximate directions]\label{cor:inexact-averaging}
Let $K\subseteq\R^m$ be nonempty, compact, and convex, and let
$f:\R^m\to\R$ be convex. Fix $z\in K$, an integer $T\ge1$,
$\alpha>0$, and $G,r\ge0$. Suppose $x_0,\ldots,x_T\in K$
and directions $h_0,\ldots,h_{T-1}$ satisfy, for every $0\le t<T$,
\[
 \|h_t\|_2\le G,\qquad
 f(z)\ge f(x_t)+\langle h_t,z-x_t\rangle-r,
\]
and
\[
 x_{t+1}=\Pi_K(x_t-\alpha h_t).
\]
Then, with $\bar x=T^{-1}\sum_{t=0}^{T-1}x_t$,
\[
 f(\bar x)-f(z)
 \le\frac{\|x_0-z\|_2^2}{2\alpha T}
       +\frac{\alpha G^2}{2}+r.
\]
\end{corollary}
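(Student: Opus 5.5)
The plan is to follow the standard one-step analysis of projected subgradient descent, as in the proof of \cref{fact:subgradient-averaging}, with the approximate supporting inequality replacing the exact one. Fix $0\le t<T$. Because $z\in K$, \cref{fact:projection} applied to $x=x_t-\alpha h_t$, with $\Pi_K(x)=x_{t+1}$, gives
\[
 \|x_{t+1}-z\|_2^2\le\|x_t-\alpha h_t-z\|_2^2
 =\|x_t-z\|_2^2-2\alpha\langle h_t,x_t-z\rangle+\alpha^2\|h_t\|_2^2 .
\]
Here I discard the nonpositive term $-\|x-\Pi_K(x)\|_2^2$. Rearranging and using $\|h_t\|_2\le G$, I get
\[
 \langle h_t,x_t-z\rangle\le\frac{\|x_t-z\|_2^2-\|x_{t+1}-z\|_2^2}{2\alpha}+\frac{\alpha G^2}{2}.
\]

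Next I use the hypothesis $f(z)\ge f(x_t)+\langle h_t,z-x_t\rangle-r$. Rewritten, it says $f(x_t)-f(z)\le\langle h_t,x_t-z\rangle+r$. Combining this with the previous display bounds each gap $f(x_t)-f(z)$ by a telescoping term, plus $\alpha G^2/2$, plus $r$.

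I then sum over $t=0,\dots,T-1$. The distance terms telescope to $\|x_0-z\|_2^2-\|x_T-z\|_2^2$, which is at most $\|x_0-z\|_2^2$. Dividing by $T$ gives
\[
 \frac1T\sum_{t=0}^{T-1}\bigl(f(x_t)-f(z)\bigr)
 \le\frac{\|x_0-z\|_2^2}{2\alpha T}+\frac{\alpha G^2}{2}+r .
\]

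Finally, convexity of $f$ (Jensen's inequality) gives $f(\bar x)\le T^{-1}\sum_t f(x_t)$, and this yields the claim. No step is a real obstacle. The only points needing care are the sign conventions when rearranging the inner product, and checking that the projection inequality requires only $z\in K$. The hypothesis $x_0,\dots,x_T\in K$ is automatic from the projection step, provided $x_0\in K$.
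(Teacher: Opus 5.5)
Your proposal is correct and follows the paper's argument essentially line for line. Like the paper, you apply \cref{fact:projection} at $x_t-\alpha h_t$ with $z\in K$, combine the result with the approximate supporting inequality, telescope over $t$, divide by $2\alpha T$, and finish with convexity at $\bar x$.
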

\begin{proof}
\cref{fact:projection} and the approximate supporting inequality give
\begin{align*}
 2\alpha\bigl(f(x_t)-f(z)-r\bigr)
 &\le2\alpha\langle h_t,x_t-z\rangle\\
 &\le\|x_t-z\|_2^2-\|x_{t+1}-z\|_2^2+\alpha^2G^2.
\end{align*}
Sum over $t$, divide by $2\alpha T$, and apply convexity at $\bar x$.
\end{proof}

\section{The permanent approximation algorithm}\label{sec:main-proof}
We restate the main results and first deduce the polynomial time theorem
from the parameterized tradeoff. We then prove the tradeoff from its analytic
and algorithmic ingredients. Recall that matrix entries and rational parameters
are represented by binary numerators and denominators.

\polynomialapproximation*

\begin{samepage}
\approximationtradeoff*
\end{samepage}
Here $6\lambda$ bounds the sum of auxiliary edge weights at each vertex,
and $\eta$ sets numerical accuracy.

For the proofs below, we may assume that $n$ is sufficiently large.

\begin{proof}[Proof of \cref{thm:polynomial} assuming \cref{thm:main}]
Apply \cref{thm:main} with $\lambda=k^2$ and $\eta=1/(200k)$.
Its interval has logarithmic width $O(n/k)$, and the exponential factor
in its running-time bound is
$\exp(O(k\log^2 k))$. Choose
\[
 \ell=\lceil\log_2 n\rceil,\qquad
 h=\max\{1,\lceil\log_2\ell\rceil\},\qquad
 k=\max\{2,\lfloor\ell/h^2\rfloor\}.
\]
These integers are computable from bit lengths, and
$k\log^2 k=O(\log n)$. The running time is therefore polynomial,
and $n/k\lesssim n(\log\log n)^2/\log n$, as claimed.
\end{proof}

For the tradeoff, we use the bound on the optimized objective in
\cref{lemma:envelope}, the algorithm in \cref{thm:counting},
and the minimization algorithm in \cref{prop:optimization}.
We first deduce the tradeoff from these ingredients. The next three
sections bound $\operatorname{OPT}_A(a)-\log\per(A)$, develop the counting
algorithm, and show how to choose the weights. Recall that
\[
 \Phi_A(y)=\log Z(e^y)+
   \max_{M\text{ perfect matching}}\sum_{e\in M}(\log A_e-y_e),
\]
where $Z$ sums the weights of all matchings in $\mathcal E$, and the
maximum ranges over perfect matchings in $\mathcal E$.
The value $\operatorname{OPT}_A(\log\Lambda)$ is the minimum of $\Phi_A$
over $\mathcal D_{\log\Lambda}$.

\begin{proof}[Proof of \cref{thm:main} from the three ingredients]
Remove edges belonging to no perfect matching.

\paragraph{Approximating $\log\per(A)$.}
It is enough to compute, within the claimed running time, a rational $T$ with
\begin{equation}\label{eq:sufficient-log-estimate}
 \log\per(A)\le T\le\log\per(A)+\frac{2n}{\sqrt{e\lambda}}+3\eta n.
\end{equation}
To obtain endpoints from $T$, first compute a rational error bound $E$ with
\[
 \frac{2n}{\sqrt{e\lambda}}+3\eta n
 \le E\le\frac{2n}{\sqrt{e\lambda}}+4\eta n.
\]
Then $T-E\le\log\per(A)\le T$.
Round $e^{T-E}$ downward to a positive rational $L$ and $e^T$ upward
to a rational $U$, allowing logarithmic error at most $2\eta n$ at each endpoint:
\[
 T-E-2\eta n\le\log L\le T-E,\qquad
 T\le\log U\le T+2\eta n.
\]
Thus $L\le\per(A)\le U$ and
$\log(U/L)\le E+4\eta n\le2n/\sqrt{e\lambda}+8\eta n$,
as required in \eqref{eq:main-error}.
The directed evaluations in \cref{app:implementation} compute these
endpoints with polynomial encoding length. It remains to construct $T$.

\paragraph{Reducing the computation of $T$ to approximate minimization of $\Phi_A$.}
Compute rational $\theta_e$ with
$\log A_e\le\theta_e\le\log A_e+\eta$ on $\mathcal E$.
For analysis, let $\widehat A=e^\theta$ on this support and zero elsewhere.
The optimization algorithm takes the rational vector $\theta$ as input.
Every perfect matching has $n$ edges, giving
\begin{equation}\label{eq:energy-rounding}
 \log\per(A)\le\log\per(\widehat A)
 \le\log\per(A)+\eta n.
\end{equation}
Set $\Lambda=6\lambda$ and apply
\cref{prop:optimization} with objective tolerance $\varepsilon=\eta$.
It supplies positive rational weights $w$ satisfying
\[
 \sum_{e\ni v}w_e\le\Lambda\quad\text{for every vertex }v,
 \qquad
 \Phi_{\widehat A}(\log w)
 \le\operatorname{OPT}_{\widehat A}(\log\Lambda)+\eta n.
\]
By \cref{lemma:envelope},
\[
 \operatorname{OPT}_{\widehat A}(\log\Lambda)-\log\per(\widehat A)
 \le\frac{2\sqrt2\,n}{\sqrt{6\lambda}}
 =\frac{2n}{\sqrt{3\lambda}}
 \le\frac{2n}{\sqrt{e\lambda}}.
\]
Choosing $\Lambda=6\lambda$ therefore bounds the analytic error by the
$2n/\sqrt{e\lambda}$ term in \eqref{eq:main-error}.

\paragraph{Constructing $T$ from the weights.}
Evaluate $\log Z(w)$ using \cref{thm:counting} and the assignment term
using rational lower bounds for $\log w_e$. Together these give a rational
upper estimate $T$ with
\[
 \Phi_{\widehat A}(\log w)\le T
 \le\Phi_{\widehat A}(\log w)+\eta n.
\]
The assignment uses $n$ edges, so logarithmic accuracy $O(\eta)$ per
weight and additive accuracy $O(\eta n)$ for $\log Z$ suffice.
The input approximation in \eqref{eq:energy-rounding}, the optimization,
and this evaluation each contribute at most $\eta n$.
Combining these errors with the bound from \cref{lemma:envelope} gives
\[
 0\le T-\log\per(A)
 \le\frac{2n}{\sqrt{e\lambda}}+3\eta n.
\]
This proves \eqref{eq:sufficient-log-estimate} and finishes the endpoint
construction.

\paragraph{Running time.}
The number of applications of \cref{thm:counting} is polynomial in
the input and parameter encoding lengths and $\eta^{-1}$.
In each application, $\sum_{e\ni v}w_e\le\Lambda=6\lambda$ for every
vertex $v$. We optimize $F_{\widehat A}(x,\log\Lambda)$ over
$x\in[-W,0]^{\mathcal E}$, where $W=O(\log(n\lambda/\eta))$.
Use relative tolerance $\xi\asymp\eta/W$.
\Cref{thm:counting} then gives the exponential factor
\[
 \exp\!\left(O\!\left(\sqrt\lambda
 \left[\log\frac{\lambda}{\eta}
       +\log\log n\right]^2\right)\right).
\]
The extra $\log\log n$ contributes only an absolute polynomial factor
in $n$, by \eqref{eq:absorb-logarithm}. The number of optimization steps and all remaining work are polynomial
in the input length, $\lambda$, and $\eta^{-1}$, with absolute degree.
The encoding bounds in \cref{prop:optimization} therefore prove
\eqref{eq:main-time}.
\end{proof}

\begin{corollary}[Fixed maximum degree]\label{cor:bounded-degree}
For every fixed $\Delta$, there is a deterministic polynomial time
algorithm for nonnegative rational matrices with at most $\Delta$ nonzero
entries per row and column that returns an interval containing the
permanent, with positive rational endpoints and logarithmic width
$O_\Delta(n\log\log n/\log n)$.
\end{corollary}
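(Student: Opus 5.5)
The plan is to rerun the proof of \cref{thm:main} with a budget $\Lambda$ now allowed to be polynomially larger than before. The gain comes from a counting step whose cost on bounded-degree graphs has no $\sqrt\Lambda$ in the exponent's leading factor beyond a logarithm. The analytic bound \cref{lemma:envelope} gives logarithmic error $2\sqrt2\,n/\sqrt\Lambda$ for any budget, and the optimization routine \cref{prop:optimization} only needs edge probabilities and $\log Z$ to relative accuracy $\xi\asymp\eta/W$. So it suffices to compute these quantities in polynomial time for $\Lambda$ as large as $(\log n/\log\log n)^2$, up to constants depending on $\Delta$.

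\textbf{Counting step.} When every vertex has degree at most $\Delta$, we skip the small/large edge split in \cref{thm:counting} and expand the self-avoiding walk tree fully to depth $\ell$. This tree has at most $\Delta^{\ell}$ nodes. By \cref{lem:message-contraction} with $\gamma=\Lambda$, the root error in $\psi$ is at most $\rho^\ell\log(1+2\Lambda)$ with $\rho=1-1/s$ and $s\asymp\sqrt\Lambda$. The weights on a vertex sum to at most $\Lambda$ but there are at most $\Delta$ of them, so the argument is unchanged. We then need $\ell\asymp\sqrt\Lambda\log(\Lambda/\xi)$, for total cost $\poly(S)\,\Delta^{O(\sqrt\Lambda\log(\Lambda/\xi))}$. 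Via \cref{fact:deletion-recursion}, this gives $\log Z$ to additive error $\xi n$ and edge probabilities to relative error $\xi$.

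\textbf{Parameters.} The compared quantity in \cref{thm:main} has exponent $\sqrt\lambda\log^2(\lambda/\eta)$; here the exponent is only $\sqrt\Lambda\log\Delta\,\log(\Lambda W/\eta)$. We take $\sqrt\Lambda=c_\Delta\log n/\log\log n$ and $\eta\asymp\log\log n/\log n$. Then $\log(\Lambda W/\eta)=O(\log\log n)$, since $W=O(\log(n\Lambda/\eta))$ contributes only $\log\log n$. The exponent is therefore $O_\Delta(\log n)$ for a suitably small $c_\Delta$, and the running time is polynomial. The logarithmic width is
\[
O(n/\sqrt\Lambda+\eta n)=O_\Delta(n\log\log n/\log n).
\]
Endpoint rounding, the input approximation \eqref{eq:energy-rounding}, and the evaluation of the assignment term go through verbatim from the proof of \cref{thm:main}.

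\textbf{Main obstacle.} The delicate point is confirming that nothing else in the pipeline reintroduces a $\log^2$ factor or a dependence on $\Lambda$ in the exponent. Two places need checking. The first is the depth needed for $\log q$ accuracy via \eqref{eq:message-bounds}, which is fine. The second is the $\log\log n$ term arising from $W$ inside $\xi$, which must be kept as an additive term in a single logarithm multiplied by $\sqrt\Lambda$, rather than squared. With full expansion there is no second error source from small edges, so the bound is linear in $\log(\Lambda/\xi)$, and that linearity is exactly what saves one $\log\log n$ factor. I would also verify that the bit complexity of evaluating the depth-$\ell$ tree with rational rounding (as in \cref{app:implementation}) is polynomial in the number of tree nodes.
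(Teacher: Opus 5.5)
Your proposal is correct and follows the paper's proof. It runs the pipeline of \cref{thm:main} with the fixed-degree variant of \cref{thm:counting}, which fully expands the self-avoiding walk tree at cost $\Delta^{O(\sqrt\Lambda\log(\Lambda/\xi))}$, and chooses $\sqrt\Lambda\asymp\log n/\log\log n$ and $\eta\asymp\log\log n/\log n$, matching the paper's $\lambda=k^2$, $\eta=1/(200k)$, so that the $\log\log n$ from $W$ enters the exponent only linearly (the per-step rounding you left implicit adds at most $2s/Q$ in $\psi$, as in the paper, and changes nothing).
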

\begin{proof}
Use the fixed-degree version of \cref{thm:counting}. At
$\lambda=k^2$ and $\eta=1/(200k)$, the exponential factor becomes
$\exp(O_\Delta(k[\log k+\log\log n]))$.
Taking $k=\max\{2,\lfloor\ell/h\rfloor\}$, with $\ell,h$ as above,
makes this polynomial and gives the stated interval width.
\end{proof}

\section{Optimizing the matching partition function}\label{sec:envelope}

Fix a matrix $A$, and let $\mathcal E$ be its
edges that occur in a perfect matching.
For each real $a$, the feasible logarithmic weights form the set
\[
 \mathcal D_a=\left\{y\in\mathbb R^{\mathcal E}:
 \sum_{e\ni v}e^{y_e}\le e^a\text{ for every vertex }v\right\}.
\]
Recall that
\[
 \Phi_A(y)=\log Z(e^y)+\max_{P\in\mathcal B}\langle\log A-y,P\rangle,
 \qquad
 \operatorname{OPT}_A(a)=\min_{y\in\mathcal D_a}\Phi_A(y).
\]
\begin{lemma}[Error of the optimized upper bound]\label{lemma:envelope}
For every $a\ge0$,
\begin{equation}\label{eq:envelope}
 \log\per(A)\le\operatorname{OPT}_A(a)
 \le\log\per(A)+2\sqrt2\,n e^{-a/2}.
\end{equation}
\end{lemma}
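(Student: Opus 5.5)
The lower bound is immediate from \eqref{eq:compensated-upper}. For any $y$, the sum $Z(e^y)$ contains every perfect matching $M$ with weight $e^{\langle y,\mathbf1_M\rangle}$. Multiplying by the compensation factor $\max_M A(M)e^{-\langle y,\mathbf1_M\rangle}$ therefore dominates $\sum_M A(M)=\per(A)$, so $\Phi_A(y)\ge\log\per(A)$ everywhere, and in particular $\operatorname{OPT}_A(a)\ge\log\per(A)$.

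For the upper bound I would follow the plan of \cref{sec:overview-deficiency}, working with the normalized objective
\[
F_A(x,a)=\Phi_A\bigl(\log w(x,a)\bigr),\qquad w_e(x,a)=e^{a+x_e-H(x)}.
\]
The proof has four steps.

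\textbf{Step 1 (normalization preserves the optimum).} Show that $\min_xF_A(x,a)=\operatorname{OPT}_A(a)$, that this minimum is attained, and that $a\mapsto\operatorname{OPT}_A(a)$ is nonincreasing. Normalized weights are always feasible. Conversely, rescaling a feasible $y$ up to the budget multiplies each matching $M$ by $e^{h|M|}$ and the compensation factor by $e^{-hn}$ with $h\ge0$, so it does not increase $\Phi_A$. For attainment, restrict to the compact set $\{x:\max_ex_e=0\}$ after noting $F_A$ is invariant under shifting $x$ by constants. Tiny coordinates may drift to $-\infty$, but one can instead argue with near-minimizers and a limiting subgradient. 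I would handle this with a small $\epsilon$-slack rather than insist on exact minimizers.

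\textbf{Step 2 (joint convexity and the slope in $a$).} Prove that $F_A$ is jointly convex in $(x,a)$. We have
\[
\log Z(w(x,a))=\log\sum_M\exp\bigl(|M|(a-H(x))+\langle x,\mathbf1_M\rangle\bigr).
\]
Since $H$ is convex (a max of log-sum-exps) and enters with coefficient $-|M|\le0$, each exponent is concave in $H$ rather than convex. The fix is to add the assignment term, which contributes $n(H(x)-a)+\max_P\langle\log A-x,P\rangle$. Combining,
\[
F_A=\log\sum_M\exp\bigl((n-|M|)(H(x)-a)+\langle x,\mathbf1_M\rangle\bigr)+\max_P\langle\log A-x,P\rangle.
\]
Each exponent is now convex because $n-|M|\ge0$, so \cref{fact:log-sum-exp} gives joint convexity. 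The same formula shows $\partial_aF_A=-d_a$ at fixed $x$. Then take a minimizer $x_a$ and use \cref{fact:danskin} to get a subgradient of $F_A$ at $(x_a,a)$ of the form $(0,-d_a)$. The $x$-components of the subgradients form a convex set containing $0$, while every $a$-component equals $-d_a$, because $a$ enters only the smooth log-sum-exp part and $H$'s subdifferential affects only $x$-components. Joint convexity then gives
\[
\operatorname{OPT}_A(b)\ge\operatorname{OPT}_A(a)-(b-a)d_a\qquad(b>a).
\]

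\textbf{Step 3 (bounding $d_a$; the main obstacle).} Show $d_a\le\sqrt2\,ne^{-a/2}$. The zero-subgradient condition at $x_a$ reads
\[
\mu=\sum_P\beta_PP-d\sum_v\alpha_vp_v,
\]
with $\beta$ a convex combination of optimal assignments, $\alpha$ a convex combination over vertices $v$ attaining $H$, and $p_v(e)=w_e/\Lambda$ for $e\ni v$. So $\Delta:=\Pi-\mu\ge0$ entrywise, where $\Pi\in\mathcal B$, giving the row sums $\Pr(i\text{ unmatched})$ and the column analogues. The subtle points are the entry bound \eqref{eq:overview-delta-bound} and identifying $d\alpha_v$ with the unmatched probability at $v$. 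Summing $\Delta$ over the edges at a tight vertex $v$ gives $d\alpha_v\cdot(\text{row sum of }p_v)=d\alpha_v=\Pr(v\text{ unmatched})$. Every edge $e=ij$ then satisfies
\[
\Delta_e=\frac{w_e}{\Lambda}\bigl(d\alpha_i+d\alpha_j\bigr)\le\frac{w_e}{\Lambda}\bigl(q(i)+q(j)\bigr).
\]
With this, combine \cref{fact:correlation} with \cref{fact:weighted-cauchy}, exactly as in \eqref{eq:overview-deficiency-cauchy}, to get $\Lambda d_a^2/(2n)\le n$.

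\textbf{Step 4 (integration).} Subdivide $[a,B]$ into small intervals and apply Step 2 on each, using $d_t\le\sqrt2\,ne^{-t/2}$ at each left endpoint. Riemann sums of this decreasing function bound the total decrease by $\int_a^\infty\sqrt2\,ne^{-t/2}\,dt=2\sqrt2\,ne^{-a/2}$. Finally, $\operatorname{OPT}_A(B)\le\Phi_A(\log(tA))=\log Z(tA)-n\log t$ once $e^B$ exceeds the row and column sums of $tA$. Letting $B\to\infty$ and then $t\to\infty$ gives a bound tending to $\log\per(A)$, which proves \eqref{eq:envelope}.
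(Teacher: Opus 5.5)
Your route is the paper's. It uses the compensation argument for the lower bound, rewrites the normalized objective with exponents $(n-|M|)(H(x)-a)+\langle x,\mathbf 1_M\rangle$, and applies Danskin's theorem to get a joint subgradient $(0,-d_a)$ at a minimizer. It then combines $\Delta=\Pi-\mu$ with the Heilmann--Lieb edge bound and weighted Cauchy--Schwarz, and finishes by integrating and passing to the limit through $tA$.

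The one genuine gap is attainment in Step~1. The set $\{x:\max_e x_e=0\}$ is not compact, and you do not carry out the fallback to near-minimizers with an $\epsilon$-slack. That fallback would not be routine. Steps~2--4 need an exact minimizer at every left endpoint of the partition, the exact condition $0\in\partial_xF_A(x_a,a)$, and the exact identity $\Delta=\Pi-\mu\ge0$ with row and column sums equal to the unmatched probabilities. The missing idea is that coordinates cannot drift to $-\infty$, because every edge of $\mathcal E$ lies in some perfect matching. Averaging one perfect matching through each edge gives $P^0\in\mathcal B$ strictly positive on $\mathcal E$. Since $Z\ge1$,
\[
 \Phi_A(y)\ge\langle P^0,\log A-y\rangle .
\]
On $\mathcal D_a$ every coordinate is at most $a$, so if any coordinate tends to $-\infty$ the right side tends to $+\infty$. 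Hence sublevel sets of $\Phi_A$ in $\mathcal D_a$ are compact and the minimum is attained. A minimizer is tight at some vertex, so it also minimizes $F_A(\cdot,a)$. Your worry would be justified without the support restriction: the assignment term ignores an edge in no perfect matching while $\log Z$ increases in its weight, so that weight is pushed to zero.

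There is also a smaller error in Step~3. Summing $\Delta$ over the edges at a tight vertex $v$ does not give exactly $d\alpha_v$. Since $\Delta=d\sum_u\alpha_up_u$, each neighbor $u$ with $\alpha_u>0$ also contributes $d\alpha_uw_{uv}/\Lambda$. So only $\Pr(v\text{ unmatched})\ge d\alpha_v$ holds, not equality. That inequality is all your entry bound $\Delta_{ij}\le(w_{ij}/\Lambda)(q(i)+q(j))$ needs, so the conclusion survives once the equality is replaced by it.

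Likewise, in Step~4 a left-endpoint Riemann sum of the decreasing function $\sqrt2\,ne^{-t/2}$ exceeds the integral for any fixed partition. The integral bound follows by letting the mesh tend to zero, since the total decrease $\operatorname{OPT}_A(a)-\operatorname{OPT}_A(B)$ does not depend on the partition.
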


The expected number of unmatched rows at an optimizer bounds how fast
$\operatorname{OPT}_A(a)$ decreases as $a$ increases. We first derive
\cref{lemma:envelope} from \cref{lemma:convex-value} and
\cref{lemma:optimizer-rate}, then prove these two lemmas.
\begin{lemma}[Attainment]\label{lemma:convex-value}
For every real $a$, the minimum defining $\operatorname{OPT}_A(a)$ is
finite and attained.
\end{lemma}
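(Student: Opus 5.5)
Since $\Phi_A$ is continuous and convex and $\mathcal D_a$ is closed, the only issue is that $\mathcal D_a$ is unbounded below: coordinates $y_e$ may tend to $-\infty$. I will show that $\Phi_A$ is bounded below on $\mathcal D_a$ by $\log\per(A)$, and that the minimization can be restricted to a compact box without losing anything. Compactness and continuity then give attainment.

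\emph{Finiteness.} For every $y\in\mathcal D_a$, bound $Z(e^y)$ below by its perfect-matching part and the assignment term below by an average over perfect matchings:
\[
 \Phi_A(y)\ge\log\sum_{M\text{ perfect}}e^{y(M)}+\max_{M}\bigl(\log A(M)-y(M)\bigr)\ge\log\per(A),
\]
using $\sum_M A(M)=\sum_M e^{y(M)}\cdot A(M)/e^{y(M)}\le\bigl(\sum_M e^{y(M)}\bigr)\max_M A(M)/e^{y(M)}$. Every feasible point has value at most, say, $\Phi_A(y^0)$ for the point $y^0_e=a-\log n$, which lies in $\mathcal D_a$ since each vertex has at most $n$ edges. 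So the infimum is finite.

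\emph{Compactness.} Coordinates are bounded above by $a$. For the lower bound, I will use the constraint that every edge of $\mathcal E$ lies in some perfect matching. Fix $e$ and a perfect matching $M_e\ni e$. Then the assignment term is at least
\[
 \log A(M_e)-y_e-\sum_{f\in M_e\setminus e}y_f\ge\log A(M_e)-y_e-(n-1)a,
\]
and $\log Z\ge0$ by the empty matching. So $\Phi_A(y)\ge c_e-y_e$ for a constant $c_e$. Consequently the sublevel set $\{y\in\mathcal D_a:\Phi_A(y)\le\Phi_A(y^0)\}$ lies in the box $\prod_e[c_e-\Phi_A(y^0),a]$.

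This sublevel set is closed, since $\mathcal D_a$ is closed and $\Phi_A$ is continuous as a log-sum-exp plus a finite max of affine functions. Being closed and bounded, it is compact and nonempty, as it contains $y^0$. A continuous function attains its minimum on it, and this minimum equals $\operatorname{OPT}_A(a)$.

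The only step needing care is the lower coordinate bound. It relies on removing edges that lie in no perfect matching: without that reduction, such an edge could be sent to $-\infty$ with no penalty, and the minimum need not be attained.
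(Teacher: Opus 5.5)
Your proof is correct and follows essentially the same route as the paper. The paper averages perfect matchings through each edge into a single $P^0\in\mathcal B$ positive on $\mathcal E$, then combines $\Phi_A(y)\ge\langle P^0,\log A-y\rangle$ (using $Z\ge1$) with the upper bound on the other coordinates to bound each coordinate below on sublevel sets; this is your per-edge matching $M_e$ argument in averaged form, and your explicit bound $\Phi_A\ge\log\per(A)$ and feasible point $y^0_e=a-\log n$ fill in the same compactness-plus-continuity argument.
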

\begin{lemma}[Deficiency and change in the optimum]\label{lemma:optimizer-rate}
Let $y$ minimize $\Phi_A$ over $\mathcal D_a$, and let
$d=n-\sum_e\mu_{y,e}$ be the expected number of unmatched rows
for $M\sim\nu_{G,e^y}$. Then
\begin{equation}\label{eq:optimizer-deficiency}
 d\le\sqrt2\,n e^{-a/2}.
\end{equation}
For every $b>a$,
\begin{equation}\label{eq:optimizer-rate}
 0\le\operatorname{OPT}_A(a)-\operatorname{OPT}_A(b)\le(b-a)d.
\end{equation}
\end{lemma}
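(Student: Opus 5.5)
The plan is to extract Lagrange multipliers from the optimality of $y$, use them to build the matrix $\Delta$ described in \cref{sec:overview-deficiency}, and then derive both claims from $\Delta$. By \cref{lemma:convex-value} a minimizer $y$ exists.

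\textbf{Step 1: optimality conditions.} The feasible set $\mathcal D_a$ is convex, since each constraint $\sum_{e\ni v}e^{y_e}\le e^a$ is convex in $y$. It has a strictly feasible point, obtained by taking all $y_e$ very negative, so Slater's condition holds and the KKT conditions apply to the convex function $\Phi_A$. By \cref{fact:log-sum-exp} and \cref{fact:danskin}, every subgradient of $\Phi_A$ at $y$ has the form $\mu-P$ with $P\in\mathcal B$. Here $P$ is a convex combination of maximizing assignments, and \cref{fact:assignment} identifies these combinations with $\mathcal B$. The KKT conditions therefore give multipliers $\lambda_v\ge0$, with $\lambda_v=0$ unless $\sum_{e\ni v}w_e=\Lambda:=e^a$, such that
\[
 P_e=\mu_e+w_e(\lambda_i+\lambda_j)\qquad(e=ij).
\]
We set $\Delta_e=w_e(\lambda_i+\lambda_j)\ge0$. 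Taking row and column sums of $P\in\mathcal B$ gives $\sum_j\Delta_{ij}=q(i)$ and $\sum_i\Delta_{ij}=q(j)$.

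For the entrywise bound, consider a vertex $i$ with $\lambda_i>0$. Its constraint is active, so
\[
 q(i)=\sum_j\Delta_{ij}\ge\lambda_i\sum_jw_{ij}=\lambda_i\Lambda .
\]
Hence $\lambda_i\le q(i)/\Lambda$ for every $i$, and so $\Delta_{ij}\le\frac{w_{ij}}{\Lambda}\bigl(q(i)+q(j)\bigr)$. We also record the identity
\[
 \Lambda\sum_v\lambda_v=\sum_v\lambda_v\sum_{e\ni v}w_e=\sum_e\Delta_e=d .
\]
The middle equality uses complementary slackness, and the last follows by summing the row identities.

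\textbf{Step 2: the deficiency bound.} If $d=0$ there is nothing to prove. Otherwise we use \cref{fact:correlation}, which gives $\mu_{ij}\ge w_{ij}q(i)q(j)$. All $q(v)>0$, since the empty matching has positive weight. Combining this with the entrywise bound gives
\[
 \mu_{ij}\ge\Lambda\Delta_{ij}\,\frac{q(i)q(j)}{q(i)+q(j)} .
\]
We apply \cref{fact:weighted-cauchy} with $b_e=\Delta_e$ and $t_e=1/q(i)+1/q(j)$. The row and column identities make $\sum_e\Delta_et_e=2n$, which yields $n-d\ge\Lambda d^2/(2n)$. Since $n-d\le n$, this gives $d\le\sqrt2\,ne^{-a/2}$.

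\textbf{Step 3: the rate bound.} The lower bound $\operatorname{OPT}_A(a)\ge\operatorname{OPT}_A(b)$ holds because $\mathcal D_a\subseteq\mathcal D_b$. For the upper bound, take any $z\in\mathcal D_b$ and use the subgradient
\[
 g=\mu-P=-\Delta=-\sum_v\lambda_v\,w\mathbf 1_{\{e\ni v\}} .
\]
Convexity gives
\[
 \Phi_A(z)\ge\Phi_A(y)-\sum_v\lambda_v\sum_{e\ni v}w_e(z_e-y_e).
\]
For each $v$ with $\lambda_v>0$, the weights $w_e/\Lambda$ over $e\ni v$ form a probability vector. Jensen's inequality (concavity of $\log$) then gives
\[
 \sum_{e\ni v}\frac{w_e}{\Lambda}(z_e-y_e)\le\log\sum_{e\ni v}\frac{e^{z_e}}{\Lambda}\le b-a .
\]
Therefore $\Phi_A(z)\ge\operatorname{OPT}_A(a)-(b-a)\Lambda\sum_v\lambda_v=\operatorname{OPT}_A(a)-(b-a)d$. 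Minimizing over $z\in\mathcal D_b$ proves \eqref{eq:optimizer-rate}.

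The main obstacle is Step 1: justifying the KKT representation for the nonsmooth assignment term, and making sure complementary slackness is used to obtain both $\lambda_i\le q(i)/\Lambda$ and the exact identity $\Lambda\sum_v\lambda_v=d$. A cruder bound such as $\Lambda\lambda_v\le q(v)$ summed over all vertices would only give $2d$ in Step 3.
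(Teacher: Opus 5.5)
Your proof is correct. It reaches the same matrix $\Delta$ and the same deficiency argument as the paper, but it gets the optimality conditions and the rate bound by a different route.

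The paper never introduces Lagrange multipliers. It first notes that the derivative of $t\mapsto\Phi_A(y+t\mathbf 1)$ at $0$ is $-d<0$, so some budget constraint is active. Hence $y$ also minimizes the normalized objective $F_A(\cdot,a)$, which is an unconstrained finite maximum of smooth convex functions. \cref{fact:danskin} then gives $P-\mu=d\sum_v\omega_vp^{(v)}$, where $(\omega_v)$ is a probability distribution on the vertices attaining $H(y)$. Your $\lambda_v$ is the paper's $d\omega_v/\Lambda$. Your identity $\Lambda\sum_v\lambda_v=d$ is therefore built into the paper's parametrization rather than derived from complementary slackness.

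For the rate bound, the paper uses joint convexity of $F_A$ in $(x,a)$ (\cref{lemma:normalized-convexity}). The same convex combination of joint subgradients equals $(0,-d)$, which immediately gives $F_A(z,b)\ge\operatorname{OPT}_A(a)-(b-a)d$. Your Jensen step is exactly the supporting inequality of the convex function $\log\sum_{e\ni v}e^{z_e}$ at $y$. In other words, it is the part of that joint convexity you actually need, written out by hand.

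What each route buys:
\begin{itemize}
\item Your argument is more self-contained for this lemma and avoids normalization entirely. It does rely on the KKT theorem for a finite convex objective with differentiable convex constraints under Slater's condition. That theorem is classical, but it is not among the paper's stated facts.
\item The paper's argument needs only Danskin's theorem. It also reuses the normalized objective and the subgradient formula \eqref{eq:normalized-subgradient} that drive the algorithm in \cref{sec:optimization}.
\end{itemize}

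One small wording point: convex combinations of maximizing assignments form a face of $\mathcal B$, not all of $\mathcal B$. You only use membership in $\mathcal B$, so nothing breaks.
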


\begin{proof}[Proof of \cref{lemma:envelope}]
Each perfect matching has weight in $A$ at most
$\exp(\max_{P\in\mathcal B}\langle\log A-y,P\rangle)$ times its
weight in $e^y$. Summing and including imperfect matchings gives
$\log\per(A)\le\Phi_A(y)$.

For the upper bound, apply \eqref{eq:optimizer-rate} on each interval
of a partition of $[a,b]$ and use \eqref{eq:optimizer-deficiency} at its
left endpoint. Letting the mesh tend to zero gives
\begin{equation}\label{eq:optimum-integral}
 \operatorname{OPT}_A(a)-\operatorname{OPT}_A(b)
 \le\int_a^b\sqrt2\,n e^{-s/2}\,ds.
\end{equation}
We next show that $\operatorname{OPT}_A(b)\to\log\per(A)$ as $b\to\infty$.
Let $K$ be the largest row or column sum of $A$ and put $t=e^b/K$.
Then $\log(tA)\in\mathcal D_b$ and
\[
 \Phi_A(\log(tA))=\log Z(tA)-n\log t
 \longrightarrow\log\per(A).
\]
The polynomial $Z(tA)$ has degree $n$ in $t$ and leading
coefficient $\per(A)$. Together with the lower bound, this proves
$\operatorname{OPT}_A(b)\to\log\per(A)$.
Let $b\to\infty$ in \eqref{eq:optimum-integral} to conclude.
\end{proof}

\begin{proof}[Proof of \cref{lemma:convex-value}]
For each edge of the support, choose a perfect matching containing it.
Averaging their incidence matrices gives
$P^0\in\mathcal B$ positive on $\mathcal E$. Every coordinate of
$y\in\mathcal D_a$ is at most $a$, while
\[
 \Phi_A(y)\ge\langle P^0,\log A-y\rangle
\]
because $Z\ge1$. On a feasible sublevel set this also bounds every coordinate
below, since the other coordinates have a common upper bound.
Thus $\{y\in\mathcal D_a:\Phi_A(y)\le C\}$ is compact for every $C$.
Taking all coordinates
sufficiently negative gives a feasible point, so continuity proves
attainment.
\end{proof}

\subsection{Removing the constraints by normalization}
Normalization lets us vary $a$ while keeping the optimization
variables unrestricted. Write
\begin{equation}\label{eq:normalized-weights}
 H(x)=\max_v\log\sum_{e\ni v}e^{x_e},\qquad
 w_e(x,a)=e^{x_e+a-H(x)}.
\end{equation}
The maximum total edge weight at a vertex is $e^a$. Define
\begin{equation}\label{eq:normalized-objective}
 F_A(x,a)=\Phi_A\bigl(x+(a-H(x))\mathbf1\bigr).
\end{equation}
The next lemma shows that this rescaling preserves convexity. It also
identifies the subgradients used in the proof of
\cref{lemma:optimizer-rate} and in \cref{sec:optimization}.

\begin{lemma}[Convexity after normalization]\label{lemma:normalized-convexity}
The function $F_A$ is jointly convex in $(x,a)$, is invariant under
$x\mapsto x+t\mathbf1$, and satisfies
\[
 \min_x F_A(x,a)=\operatorname{OPT}_A(a).
\]
At $(x,a)$, let $\mu$ be the edge probabilities for weights $w(x,a)$ and put
$d=n-\sum_e\mu_e$. For a vertex $v$ attaining $H(x)$, let $p^{(v)}$
be the probability vector with entries $w_e/e^a$ on edges incident to
$v$ and zero elsewhere. If $Q$ is the incidence matrix of a perfect matching
maximizing $\langle\log A-x,Q\rangle$ over $\mathcal B$, then
\begin{equation}\label{eq:normalized-subgradient}
 \bigl(\mu-Q+d p^{(v)},-d\bigr)\in\partial F_A(x,a).
\end{equation}
These vectors generate the subdifferential by convex combinations.
Every subgradient in $x$ has $\ell^1$ norm at most $2n$.
\end{lemma}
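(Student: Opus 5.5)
Substituting $y=x+(a-H(x))\mathbf1$ into $\Phi_A$ gives an explicit formula. Every matching $M$ gains a factor $e^{(a-H(x))|M|}$, and every perfect matching has $n$ edges, so
\[
 F_A(x,a)=\log\sum_M e^{\langle x,\mathbf1_M\rangle+(a-H(x))|M|}
 +\max_{Q}\langle\log A-x,Q\rangle-n\bigl(a-H(x)\bigr).
\]
Moving $-n(a-H(x))$ inside the logarithm turns the first and last terms into
\[
 \log\sum_M \exp\bigl(\langle x,\mathbf1_M\rangle+(n-|M|)(H(x)-a)\bigr).
\]
Each exponent is convex in $(x,a)$: $\langle x,\mathbf1_M\rangle-(n-|M|)a$ is linear, $H$ is convex as a maximum of log-sum-exps, and its coefficient $n-|M|$ is nonnegative. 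Joint convexity then follows from the composition rule in \cref{fact:log-sum-exp}. The assignment term is a maximum of linear functions, hence convex. Invariance under $x\mapsto x+t\mathbf1$ is immediate, since $H(x+t\mathbf1)=H(x)+t$ and therefore $y$ is unchanged.

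For $\min_xF_A(x,a)=\operatorname{OPT}_A(a)$, I would argue two inequalities. First, every normalized $y$ lies in $\mathcal D_a$, so $F_A\ge\operatorname{OPT}_A(a)$. Conversely, take $y\in\mathcal D_a$ attaining the optimum by \cref{lemma:convex-value} and set $x=y$. Then $H(y)\le a$, and normalization scales all weights by $e^{s}$ with $s=a-H(y)\ge0$. Since $\Phi_A(y+s\mathbf1)-\Phi_A(y)=\log\E_y[e^{-s(n-|M|)}]\le0$, this gives $F_A(y,a)\le\Phi_A(y)$.

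For the subgradients, the plan is to write $F_A$ as a maximum over finitely many differentiable convex pieces, indexed by a perfect matching $Q$ and a vertex $v$, and apply \cref{fact:danskin}. In the combined formula, replacing $H$ by $h_v(x)=\log\sum_{e\ni v}e^{x_e}$ and the assignment maximum by a fixed $Q$ gives
\[
 f_{Q,v}(x,a)=\log\sum_M \exp\bigl(\langle x,\mathbf1_M\rangle+(n-|M|)(h_v(x)-a)\bigr)+\langle\log A-x,Q\rangle .
\]
Because $n-|M|\ge0$, the first term is monotone in $h_v$, so $F_A=\max_{Q,v}f_{Q,v}$. Each $f_{Q,v}$ is differentiable and convex. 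At a point where $Q$ and $v$ are active, the gradient of the log-sum-exp term comes from \cref{fact:log-sum-exp}, with probabilities equal to the matching distribution at weights $w(x,a)$:
\begin{itemize}
\item in $x$, it is $\mu+d\nabla h_v=\mu+dp^{(v)}$, because $\nabla h_v$ has entries $e^{x_e}/e^{H(x)}=w_e/e^a$;
\item in $a$, it is $-d$.
\end{itemize}
The linear term adds $-Q$ in $x$. This yields \eqref{eq:normalized-subgradient}, and Danskin's theorem says these vectors generate the subdifferential by convex combinations. The $\ell^1$ bound holds because $\mu+dp^{(v)}$ and $Q$ are nonnegative vectors each with total mass $n$; the bound passes to convex combinations.

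The main obstacle is justifying the Danskin step. The exact normalized objective involves a maximum nested inside a logarithm, and I must check that this equals the maximum over $(Q,v)$ of the smooth pieces. That hinges on monotonicity in $h_v$, which needs the nonnegative coefficients $n-|M|$, and on the pieces being convex and differentiable jointly in $(x,a)$. Both hold in the combined formula above.
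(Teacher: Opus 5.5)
Your proposal follows the paper's proof essentially step for step. It uses the same expansion of $F_A$ with nonnegative coefficients $n-|M|$, joint convexity from \cref{fact:log-sum-exp}, the same two-sided comparison for $\min_xF_A(x,a)=\operatorname{OPT}_A(a)$, and Danskin's theorem applied to the pieces $f_{Q,v}$.

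One point needs tightening. Weak monotonicity in $h_v$ only gives $F_A=\max_{Q,v}f_{Q,v}$. It does not show that every active pair has $h_v(x)=H(x)$. Your gradient formula with entries $w_e/e^a$ and the convex-hull claim both rely on that. You need the log-sum-exp term to be strictly increasing in $h_v$. This holds because the empty matching contributes a term with coefficient $n>0$, exactly as the paper notes.
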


\begin{proof}[Proof of \cref{lemma:optimizer-rate}]
Put $w=e^y$ and $\mu=\mu_y$. At $t=0$, the function
$t\mapsto\Phi_A(y+t\mathbf1)$ has derivative
$\sum_e\mu_{y,e}-n=-d<0$.
Thus $\sum_{e\ni v}e^{y_e}=e^a$ at some vertex $v$.
The vector $y$ also minimizes $F_A(\cdot,a)$, with the same weights and deficiency.

At a minimum, zero is a convex combination of the $x$-components
in \eqref{eq:normalized-subgradient}. Consequently there are a convex
combination $P$ of maximizing assignments and a probability distribution
$(\omega_v)_v$ on vertices attaining $H(y)$ such that
\begin{equation}\label{eq:normalized-optimality}
 \Delta:=P-\mu=d\sum_v\omega_vp^{(v)}.
\end{equation}
The same combination of joint subgradients is $(0,-d)$, so
$F_A(z,b)\ge\operatorname{OPT}_A(a)-(b-a)d$ for every $z,b$.
Minimizing over $z$ proves the upper bound in \eqref{eq:optimizer-rate}.
The lower bound follows from nesting of the feasible domains.

It remains to bound $d$. Write $\Lambda=e^a$, and let $r_i,c_j$
be the row and column sums of $\Delta$. Since $P$ is doubly stochastic,
these are the unmatched probabilities, and
$\sum_i r_i=\sum_j c_j=\sum_{ij}\Delta_{ij}=d$.
Equation~\eqref{eq:normalized-optimality} gives
\[
 \Delta_{ij}=\frac d\Lambda w_{ij}(\omega_i+\omega_j).
\]
Here $i$ and $j$ denote row and column vertices. At a vertex
with $\omega_v>0$, the edge weights sum to $\Lambda$, so summing
its contribution to its own row or column yields
$r_i\ge d\omega_i$ and $c_j\ge d\omega_j$. Therefore
\[
 \Delta_{ij}\le\frac{w_{ij}}\Lambda(r_i+c_j).
\]
The edge bound in \cref{fact:correlation} now gives
\[
 \mu_{ij}\ge w_{ij}r_ic_j
 \ge\Lambda\Delta_{ij}\frac{r_ic_j}{r_i+c_j}.
\]
All monomer probabilities are positive at finite weights.
\cref{fact:weighted-cauchy}, with weights $\Delta_{ij}$ and
numbers $1/r_i+1/c_j$, gives
\[
 n-d\ge\Lambda\sum_{ij}\Delta_{ij}\frac{r_ic_j}{r_i+c_j}
 \ge\frac{\Lambda d^2}
 {\sum_{ij}\Delta_{ij}(1/r_i+1/c_j)}
 =\frac{\Lambda d^2}{2n}.
\]
After dividing each row of $\Delta$ by its sum, the entries sum to $n$.
The same holds for columns. Since $n-d\le n$, this proves
$d\le\sqrt2\,n/\sqrt\Lambda$.
\end{proof}

\begin{proof}[Proof of \cref{lemma:normalized-convexity}]
Every assignment has $n$ edges. Expanding \eqref{eq:normalized-objective}
therefore gives
\begin{equation}\label{eq:normalized-expansion}
 F_A(x,a)=\max_{Q\text{ perfect}}\langle\log A-x,Q\rangle
 +\log\sum_{M\text{ matching}}
 \exp\bigl(\langle x,\mathbf1_M\rangle+(n-|M|)(H(x)-a)\bigr).
\end{equation}
The coefficients $n-|M|$ are nonnegative, so every exponent is jointly
convex. \cref{fact:log-sum-exp} therefore proves joint convexity. Translation invariance follows from normalization.
Every normalized vector is feasible. Conversely, normalizing a feasible
vector scales its weights up, which cannot increase $\Phi_A$.
This proves equality of the minima, with attainment from
\cref{lemma:convex-value}.

To describe all subgradients, replace $H(x)$ in
\eqref{eq:normalized-expansion} by
$h_v(x)=\log\sum_{e\ni v}e^{x_e}$ and fix an assignment $Q$.
The maximum of these smooth convex functions over $(v,Q)$ is $F_A$.
The empty matching makes the partition expression strictly increasing
in $h_v$, so active pairs are exactly the maximizing vertices and
assignments. Applying \cref{fact:log-sum-exp} to an active function
gives its gradient $(\mu-Q+d p^{(v)},-d)$. Danskin's theorem (\cref{fact:danskin})
proves \eqref{eq:normalized-subgradient} and the convex-hull description.
Finally, $\mu+d p^{(v)}$ and $Q$ each have nonnegative entries summing to $n$,
so their difference has $\ell^1$ norm at most $2n$. Convex combinations
preserve this bound.
\end{proof}

\begin{samepage}
\section{Counting matchings under a fixed budget}
\label{sec:counting}

We need edge probabilities to optimize the weights and $\log Z$ to
evaluate the resulting upper bound. In the running time bound below,
vertex degrees do not enter the exponential factor.
We use the recursive method of Bayati et al.~\cite{BGKNT07}
and the contraction coordinate of Sinclair et al.~\cite{SSSY17}.
The additional step is to reuse values on small edges, with the error
controlled by \cref{lem:deletion-stability}.
\par\end{samepage}

\Needspace{18\baselineskip}
\begin{samepage}
\begin{theorem}\label{thm:counting}
Let $G$ be a finite simple graph with at most $2n$ vertices and
positive rational edge weights of total encoding length $S_G$.
Suppose the edge weights at each vertex sum to at most a rational $\Gamma\ge1$.
For rational $0<\xi<1/4$, a deterministic algorithm returns, for every
edge $e$, an interval containing $\mu_e$ of width at most
$\xi\mu_e$. It also returns an interval containing $\log Z(G)$
of width at most $\xi n$. All endpoints are rational.
Its running time is
\[
 \poly(S_G+n+\bits\Gamma+\bits\xi)
 \exp\!\left(
 O\!\left(\sqrt\Gamma\,
 \log^2\frac{\Gamma}{\xi}\right)\right),
\]
with absolute constants and an absolute polynomial degree.  If the
maximum support degree is a fixed $\Delta$, the squared logarithm can
be replaced by a single logarithm, with constants depending on $\Delta$.
\end{theorem}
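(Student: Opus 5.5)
We follow the plan of \cref{sec:overview-counting}: compute approximate monomer probabilities for the graphs that the deletion recursion visits, then recover $\log Z(G)$ and the $\mu_e$ from \cref{fact:deletion-recursion}.

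\textbf{Parameters.} Set $\gamma=\Gamma$, and let $s$, $\rho=1-1/s$ and $\psi$ be as in \cref{lem:message-contraction}. Fix a threshold $\kappa=c\,\xi/(s^2)\asymp\xi/\Gamma$ and a depth $\ell=\lceil C s\log(s/\xi)\rceil$. Call an edge \emph{large} if its weight exceeds $\kappa$ and \emph{small} otherwise. Each vertex has at most $\Gamma/\kappa=\poly(\Gamma/\xi)$ large edges.

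\textbf{The recursion.} For a graph $H$ obtained from $G$ by deleting a set $S$ with $|S|\le\ell+1$, and a vertex $v$ of $H$, define the estimate $\widehat q_H^{(k)}(v)$ at depth $k$ as follows.
\begin{itemize}
\item At depth $0$, set it to $1$.
\item At depth $k\ge1$, evaluate the recursion \eqref{eq:deletion-recursion} in $H$. For a large edge $vu$, use $\widehat q_{H-v}^{(k-1)}(u)$. For a small edge $vu$, use the stored global value $\widehat q_G^{(k-1)}(u)$.
\end{itemize}
The global arrays $\widehat q_G^{(k)}(\cdot)$ are computed for all vertices in increasing $k$. Each global entry spawns a tree that branches only along large edges, so it has at most $(\Gamma/\kappa)^{\ell}$ nodes. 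Each node also sums over its small edges, at polynomial cost. The total cost is therefore $\poly(\text{input})\,(\Gamma/\kappa)^{\ell}=\poly\cdot\exp(O(\sqrt\Gamma\log^2(\Gamma/\xi)))$.

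\textbf{Error analysis.} This is the main step. Let $e_k$ bound $|\psi(\widehat q_H^{(k)}(v))-\psi(q_H(v))|$ uniformly over all $H=G-S$ with $|S|\le j$ at the relevant tree level, and over the global arrays. Compare the computed update with the exact recursion in $H$. All inputs lie in $I=[1/(1+\Gamma),1]$ because deletions preserve the budget. There are two sources of error.
\begin{itemize}
\item \emph{Input errors.} By \cref{lem:message-contraction}, errors in the inputs contribute at most $\rho e_{k-1}$.
\item \emph{Forgotten deletions.} On small edges, replacing $q_{H-v}(u)$ by $q_G(u)$ changes the denominator by at most $\kappa\sum_u|q_{H-v}(u)-q_G(u)|\le\kappa(|S|+1)$, by \eqref{eq:several-deletions}. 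Since $\psi$ of the output is $\log(1+2t)$ with derivative at most $2$ in $t$, this costs at most $2\kappa(|S|+1)$ in $\psi$.
\end{itemize}
The global arrays correspond to $S=\emptyset$, but their own errors feed back into the bound. So we unroll along a root-to-leaf path: a node at tree depth $j$ has $|S|\le j$, and its error propagates to the root with factor $\rho^j$. The global stored values entering at depth $j$ have their own depth $k-j-1$ and error $e_{k-j-1}$, which is also multiplied by $\rho^{j+1}$ through contraction, because they enter as inputs. Inducting on $k$ with the hypothesis $e_k\le\rho^k\log(1+2\Gamma)+B$, where $B=2\kappa\sum_j(j+1)\rho^j\cdot(\text{const})=O(\kappa s^2)$, closes, since the recurrence $e_k\le\rho\max(e_{k-1},\ldots)+2\kappa(\cdot)$ is affine with a contraction. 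The delicate point is that the global error is not reduced by tree depth. Treating every input, large or small, as having error $\le e_{k-1}$ in a single-step bound $e_k\le\rho e_{k-1}+2\kappa(\ell+2)$ would lose a factor $\ell$. That gives $B=O(\kappa s\ell)$, which is still fine after adjusting $\kappa$ by a $\log$ factor, so I would use this cruder single-step form first. At $k=\ell$ we get $e_\ell\le\xi/(8)$. Then \eqref{eq:message-bounds} gives $|\log\widehat q-\log q|\le\xi/8$.

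\textbf{Outputs.} Apply the same computation to each graph $G_j=G-\{u_1,\ldots,u_{j-1}\}$. Its deletions are relative to $G_j$ and its global arrays are recomputed per $G_j$, which costs a factor $2n$. Summing $-\log\widehat q_{G_j}(u_j)$ gives $\log Z$ within $2n\cdot\xi/8$. For $\mu_{uv}=w_{uv}q_G(u)q_{G-u}(v)$, two estimates give relative error $e^{\xi/4}-1\le\xi/2$, so the interval has width $\le\xi\mu_e$.

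\textbf{Bit complexity and the fixed-degree case.} We round every intermediate value in $\psi$ outward to a grid of mesh $\xi/\poly(n\ell)$. The rounding errors add linearly to $B$ and keep encoding lengths polynomial. Directed rounding in the even/odd monotone structure yields rigorous rational endpoints, with intervals formed as computed value $\pm$ proven error bound. For fixed maximum degree $\Delta$, there are no small edges, the branching is $\Delta^\ell$, and the exponent is $O_\Delta(\sqrt\Gamma\log(\Gamma/\xi))$.
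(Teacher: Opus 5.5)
Your proposal is correct and follows essentially the same route as the paper. It uses a threshold at $\kappa$, recursion with vertex deletion on large edges, and stored depth-$(k-1)$ values for $G$ on small edges. It controls forgotten deletions by \eqref{eq:several-deletions} and the derivative bound $2$ for $\log(1+2t)$, contracts in $\psi$, and recovers $\log Z$ and $\mu_e$ from the deletion identities.

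The differences are only bookkeeping. The paper proves $B_\ell(h)=s\rho^\ell+2\kappa s(h+s)+2s/Q$, which is affine in the number $h$ of deletions. Because it is nondecreasing in $h$, it also dominates the error of the stored global inputs, so no extra factor appears. Your uniform bound has to be taken over $|S|\le \ell-k$ at depth $k$, and it loses a $\log(s/\xi)$ factor that you must absorb into $\kappa$; this does not change the running time. The paper also rounds $q$ downward on a rational grid in $I$ rather than rounding in $\psi$. That choice keeps every value rational and inside $I$, where the contraction fact applies; rounding in $\psi$ would need a clamp to $I$ and directed exponentials.
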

\end{samepage}

\begin{samepage}
The key step is to approximate vertex monomer probabilities.
Write $q_G(u)=Z(G-u)/Z(G)$, and set
$\gamma=\lceil\Gamma\rceil$, $c=1+\gamma$, and $I=[1/c,1]$.
The deletion recursion in \cref{fact:deletion-recursion} puts all
unmatched-vertex probabilities on induced subgraphs in $I$.
\par\end{samepage}

\begin{lemma}\label{lem:monomer-algorithm}
Under the hypotheses of \cref{thm:counting}, for rational
$0<\delta<1/4$ one can compute $z\in I^{V(G)}\cap\Q^{V(G)}$ with
\[
 \max_v|\log z_v-\log q_G(v)|\le\delta/4
\]
in time
\[
 \poly(S_G+n+\bits\Gamma+\bits\delta)
 \exp\!\left(O\!\left(\sqrt\Gamma\,
 \log^2\frac{\Gamma}{\delta}\right)\right).
\]
For fixed support degree $\Delta$, the same assertion holds with a
single logarithm and constants depending on $\Delta$.
\end{lemma}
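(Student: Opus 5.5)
We follow the plan of \cref{sec:overview-counting}. We use a truncated self-avoiding-walk recursion on the edges of weight greater than a threshold $\kappa$. On the remaining small edges we reuse stored depth-indexed arrays for the input graph $G$. Errors are measured in the coordinate $\psi$ of \cref{lem:message-contraction}, with $\gamma=\lceil\Gamma\rceil$, $s=\lceil\sqrt{1+4\gamma}\rceil$ and $\rho=1-1/s$.

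\emph{The algorithm.} For each depth $k=0,1,\ldots,\ell$ we store an array $z^{(k)}\in I^{V(G)}$, starting from $z^{(0)}\equiv1$. Let $v$ be a vertex of an induced subgraph $G-S$, where $S$ consists of the vertices already visited on the current path, and let $k\ge1$. We define an estimate $\widehat q_k(G-S,v)$ as
\[
 \Bigl(1+\sum_{u\sim v,\,u\notin S,\,w_{vu}>\kappa}w_{vu}\,\widehat q_{k-1}(G-S-v,u)
 +\sum_{u\sim v,\,u\notin S,\,w_{vu}\le\kappa}w_{vu}\,z^{(k-1)}_u\Bigr)^{-1}.
\]
At depth $0$ the estimate is $1$. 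We then set $z^{(k)}_v=\widehat q_k(G,v)$. Computing $z^{(k)}$ expands a tree with at most $\Gamma/\kappa$ children per node and depth $k$. The cost is therefore $n\cdot(\Gamma/\kappa)^{k}$ times polynomial arithmetic, and over all depths it is $n\ell(\Gamma/\kappa)^\ell\poly(\cdot)$. With $\kappa\asymp\delta/\Gamma$ and $\ell=\lceil C\sqrt\Gamma\log(\Gamma/\delta)\rceil$, this gives $\exp(O(\sqrt\Gamma\log^2(\Gamma/\delta)))$. For fixed degree $\Delta$ we take $\kappa=0$. The branching is then at most $\Delta$, giving $\Delta^\ell=\exp(O_\Delta(\sqrt\Gamma\log(\Gamma/\delta)))$. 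Every value stays in $I$, because the denominator lies in $[1,1+\Gamma]$.

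\emph{Error analysis.} Let $e_k$ be a bound on $\max_v|\psi(z^{(k)}_v)-\psi(q_G(v))|$. We prove by induction on the tree that a node at depth $j$ below the root, with remaining depth $k$ and deleted set of size $|S|\le j+1$, satisfies
\[
 |\psi(\widehat q_k(G-S,v))-\psi(q_{G-S}(v))|\le E_{j,k}.
\]
To compare $\widehat q_k$ with the exact recursion \eqref{eq:deletion-recursion} for $G-S$, we pass through an intermediate expression. This expression uses the exact $q_{G-S-v}(u)$ on large edges and the exact $q_G(u)$ on small edges.
\begin{itemize}
\item Going from $\widehat q$ to the intermediate expression changes the inputs by at most $\max(E_{j+1,k-1},e_{k-1})$ in $\psi$. \Cref{lem:message-contraction} contracts this change by the factor $\rho$.
\item Going from the intermediate expression to the truth replaces $q_G(u)$ by $q_{G-S-v}(u)$ on small edges. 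Since $w_{vu}\le\kappa$, \eqref{eq:several-deletions} bounds the resulting change in the denominator by $\kappa(|S|+1)$. The map $t\mapsto\psi((1+t)^{-1})=\log(1+2t)$ is $2$-Lipschitz, so the change in $\psi$ is at most $2\kappa(j+2)$.
\end{itemize}
This gives the recursion
\[
 E_{j,k}\le\rho\max(E_{j+1,k-1},e_{k-1})+2\kappa(j+2),
 \qquad e_k=E_{0,k},
 \qquad E_{\cdot,0}\le\log(1+2\gamma)\le s.
\]
Unrolling it yields
\[
 e_\ell\le s\rho^\ell+2\kappa\sum_{j\ge0}(j+2)\rho^j\lesssim s\rho^\ell+\kappa s^2.
\]
The max with $e_{k-1}$ is harmless: $e_{k-1}$ itself obeys the same unrolled bound with offsets starting from $0$, so by a joint induction it is dominated by the same expression. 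Choosing $\kappa=c\delta/\Gamma$ and $C$ large makes $e_\ell\le\delta/4$. Finally, the bound $|d\log q/d\psi|\le1$ from \eqref{eq:message-bounds} converts this to $|\log z_v-\log q_G(v)|\le\delta/4$.

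\emph{Main obstacle.} The main obstacle is setting up this induction cleanly. The stored arrays refer to $G$ while the tree nodes refer to $G-S$, so the hypothesis must be stated uniformly over all depths and all deleted sets, with the stored-array error entering as an input and not compounding. A second concern is bit complexity. Rational values could blow up in size over $\ell$ levels, so after each evaluation we round every value outward to a grid of precision $\delta/(10\ell)$ in $\log$ inside $I$. This adds at most $\delta/(10\ell)$ per level, which contraction absorbs, and it keeps encodings of size $\poly(\bits\Gamma+\bits\delta+\log\ell)$.
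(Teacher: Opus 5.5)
Your proposal is correct and is essentially the paper's proof. It uses the same algorithm: a threshold $\kappa$, recursion with deletion on edges of weight above $\kappa$, stored depth-indexed arrays for $G$ on the remaining edges, and full expansion with $\kappa=0$ for fixed degree. It also uses the same error analysis through an intermediate update, via contraction by $\rho$, \cref{lem:deletion-stability}, and the $2$-Lipschitz map $t\mapsto\log(1+2t)$, with the same parameter orders.

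The differences are bookkeeping. The paper closes the induction with the explicit bound $B_\ell(h)=s\rho^\ell+2\kappa s(h+s)+2s/Q$ in $h=|S|$; its monotonicity in $h$ is exactly your observation that the max with $e_{k-1}$ is harmless. The paper also rounds to a fixed rational grid of spacing $1/(cQ)$. Your per-level log-precision $\delta/(10\ell)$ also works, but two adjustments are needed. It must be realized on a rational grid, since a grid uniform in $\log$ is not rational. And it costs up to about $\delta/5$ in total, so the constants in $\kappa$ and $\ell$ must be tightened to leave room for it within $\delta/4$.
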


\begin{proof}[Proof of \cref{thm:counting}]
Order the vertices as $v_1,\ldots,v_N$, and put
$G_j=G-\{v_1,\ldots,v_{j-1}\}$.
Apply \cref{lem:monomer-algorithm} to the induced graphs in the
identities
\[
 \mu_{uv}=w_{uv}q_G(u)q_{G-u}(v),\qquad
 \log Z(G)=-\sum_j\log q_{G_j}(v_j).
\]
The number of applications is polynomial, and
$\sum_{e\ni v}w_e\le\Gamma$ still holds at every remaining vertex. Take $\delta$ to
be a sufficiently small fixed positive rational multiple of $\xi$.
The logarithmic error guarantee gives
\begin{equation}\label{eq:monomer-interval}
 q_G(v)\in\left[\frac{z_v}{1+\delta},\,
 (1+\delta)z_v\right]\cap I.
\end{equation}
The interval for an edge probability has endpoint ratio at most
$(1+\delta)^4$, so the interval width is $\lesssim\delta\mu_{uv}$,
and hence at most $\xi\mu_{uv}$.  There are at most $2n$ terms in
the partition identity.  After taking logarithms, the intervals in
\eqref{eq:monomer-interval} contribute total width
$\lesssim n\delta$. Round each logarithmic interval outward to rational
endpoints, adding at most $\delta$ to its width. The resulting total width is
$\lesssim n\delta\le\xi n$ by the choice of $\delta$.
Rational lower and upper bounds for logarithms of positive rationals can
be computed using range reduction and convergent power series in
polynomial time in the input length and number of requested precision bits.

The running time follows from \cref{lem:monomer-algorithm}.
\end{proof}

Each application of \cref{lem:monomer-algorithm} uses a fixed graph $G$,
possibly an induced subgraph of the original graph.
Both edge types reduce the remaining depth by one. Edges above a threshold
create recursive calls, and the other edges read stored answers for
$G$ with no vertices deleted. The threshold bounds the number of
recursive children.

The proof combines contraction (\cref{lem:message-contraction})
with the bound on total change from \cref{lem:deletion-stability}.
Use the coordinates and constants from the contraction fact:
\[
 \psi(q)=\log\frac{2-q}{q},\qquad
 s=\left\lceil\sqrt{1+4\gamma}\right\rceil,\qquad
 \rho=1-\frac1s.
\]

\begin{proof}[Proof of \cref{lem:monomer-algorithm}]
Choose
\begin{equation}\label{eq:counting-parameters}
 \kappa=\frac{\delta}{64s^2},\qquad
 Q=\left\lceil\frac{64s}{\delta}\right\rceil,\qquad
 L=s\left\lceil\log_2\frac{16s}{\delta}\right\rceil.
\end{equation}
Let $\mathcal R$ round downward to the grid of spacing $1/(cQ)$
in $I$.  Both endpoints of $I$ belong to this grid.  Since
$|\psi'(q)|\le2c$ on $I$, rounding adds at most $2/Q$ to the
error in $\psi$.

For a deleted vertex set $S$ and $v\notin S$, define
$\widehat q_0(v,S)=1$.  For $\ell\ge1$, define
\begin{equation}\label{eq:depth-recursion}
 \widehat q_\ell(v,S)=\mathcal R\!\left[
 \left(
 \begin{aligned}
 1&+
 \sum_{\substack{u\sim v,\ u\notin S\\w_{vu}>\kappa}}
 w_{vu}\widehat q_{\ell-1}(u,S\cup\{v\})
 \\[-1pt]
 &+
 \sum_{\substack{u\sim v,\ u\notin S\\w_{vu}\le\kappa}}
 w_{vu}\widehat q_{\ell-1}(u,\varnothing)
 \end{aligned}
 \right)^{-1}\right].
\end{equation}
The unrounded values lie in $I$, so rounding leaves all values in $I$.
Compute the arrays $\widehat q_\ell(v,\varnothing)$, for every vertex
$v$, in increasing order of $\ell$. Expand the first sum recursively
and read the second from earlier arrays. After $j$ steps along a branch
starting at level $\ell$, the second sum reads the stored array with
index $\ell-j-1$. Return
$z_v=\widehat q_L(v,\varnothing)$.

To bound the error at depth $\ell$ after $h$ deletions, put
\[
 B_\ell(h)=s\rho^\ell
 +2\kappa s(h+s)+\frac{2s}{Q}.
\]
The three terms account for stopping the recursion, reusing probabilities
after deletions, and rounding.
We prove simultaneously for every $S$ and $v\notin S$ that
\begin{equation}\label{eq:depth-induction}
 \left|\psi(\widehat q_\ell(v,S))-
             \psi(q_{G-S}(v))\right|\le B_\ell(|S|).
\end{equation}
At $\ell=0$, the diameter bound in \eqref{eq:message-bounds} gives
error at most $s\le B_0(|S|)$. Suppose the claim holds at depth $\ell-1$,
and write $h=|S|$.

First compare the unrounded update in \eqref{eq:depth-recursion}
with the update that uses probabilities $q_{G-S-v}(u)$ on edges of weight
above $\kappa$ and $q_G(u)$ on the other edges. By induction,
the former input errors are at most $B_{\ell-1}(h+1)$ and the latter
are at most $B_{\ell-1}(0)$.  The function $B_{\ell-1}$ is
nondecreasing, so \cref{lem:message-contraction} bounds the
resulting error by $\rho B_{\ell-1}(h+1)$.

Next compare this update with the deletion recursion for
$q_{G-S}(v)$.  Only the inputs on edges of weight at most $\kappa$
change.  Their total effect on the denominator is at most
\[
 \kappa\sum_{u\notin S\cup\{v\}}
 |q_G(u)-q_{G-S-v}(u)|\le\kappa(h+1),
\]
by \eqref{eq:several-deletions}.  For a denominator $1+t$, the
transformed output is $\psi((1+t)^{-1})=\log(1+2t)$, whose
derivative in $t$ is at most $2$. The change of inputs on edges of weight
at most $\kappa$ therefore contributes at most $2\kappa(h+1)$ to the error
in $\psi$. Finally, rounding
adds at most $2/Q$. Since $\rho s=s-1$, the total error is at most
\[
 \rho B_{\ell-1}(h+1)+2\kappa(h+1)+2/Q=B_\ell(h),
\]
which proves \eqref{eq:depth-induction}.

Taking $\ell=L$ and $h=0$ gives
\begin{equation}\label{eq:depth-error}
 \max_v|\psi(z_v)-\psi(q_G(v))|
 \le s\rho^L+2\kappa s^2+\frac{2s}{Q}.
\end{equation}
Our choices in \eqref{eq:counting-parameters} bound these three terms
by $\delta/16$, $\delta/32$, and $\delta/32$, respectively.
For the first term, use $\rho^L\le e^{-L/s}$ and
$\log_2(16s/\delta)\ge\log(16s/\delta)$.
Their sum is less than $\delta/4$, and
\eqref{eq:message-bounds} gives the required logarithmic accuracy.

For the running time, put $b=\gamma/\kappa>1$.  Each recursive call
has at most $b$ children.  Computing the stored value $\widehat q_\ell(v,\varnothing)$
therefore uses at most $1+b+\cdots+b^\ell$ recursive nodes.
Over all vertices and levels up to $L$, the total is at most
\[
 |V(G)|(L+1)^2b^L
 =\poly(n)\exp\!\left(O\!\left(s\log^2\frac{s}{\delta}\right)\right).
\]
Scanning the remaining incident edges accesses stored values and
takes polynomial time in the graph size per node.

Every recursive output is rounded, so every child value can be represented
with denominator $cQ$. A single update adds rational multiples of the input weights,
takes one reciprocal, and rounds to the grid.  The intermediate
numerators and denominators in that update have lengths polynomial in
\mbox{$S_G+n+\bits(cQ)$}. Thus each update has polynomial bit cost. The integers
$s,Q,L$ and rational threshold $\kappa$ also have the required encoding
lengths and are computable by integer arithmetic.  This proves the
stated running time.

For fixed maximum support degree $\Delta$, recurse on every edge incident
to the current vertex $v$ in $G-S$, with child deletion set $S\cup\{v\}$,
decreasing $\ell$ and
rounding at every step. \cref{lem:message-contraction} then gives
error at most
\[
 s\rho^L+\frac2Q\sum_{j=0}^{L-1}\rho^j
 \le s\rho^L+\frac{2s}{Q}<\delta/4.
\]
The number of recursive children is at most $\Delta$, so depth
$L=O(\sqrt\Gamma\log(\Gamma/\delta))$ gives the asserted
bound with a single logarithm.  Each recursive update again has polynomial bit cost.
\end{proof}

\section{Optimization with normalized weights}
\label{sec:optimization}

We find weights whose objective is close to the optimum. At each iterate,
we normalize the edge weights and use edge
probabilities to choose a direction. We then average the iterates.
The iteration follows projected subgradient averaging~\cite[Section~3.1]{Bubeck15}.
\cref{cor:inexact-averaging} accounts for the errors in these directions.

Fix a nonempty support $\mathcal E$ in which every edge occurs in a
perfect matching. Let $n\ge1$ be the number of vertices on each side,
and put $m=|\mathcal E|$.
The input is a rational vector $\theta$ representing logarithms of matrix
entries. Define $\widehat A_{ij}=e^{\theta_{ij}}$ on $\mathcal E$ and zero
elsewhere. Recall the normalization from
\eqref{eq:normalized-weights} and \eqref{eq:normalized-objective}:
\[
 H(x)=\max_v\log\sum_{e\ni v}e^{x_e},\qquad
 F_{\widehat A}(x,a)=
 \Phi_{\widehat A}\bigl(x+(a-H(x))\1\bigr).
\]
After normalization, the largest sum of edge weights at a
vertex is $e^a$.
By \cref{lemma:normalized-convexity}, $F_{\widehat A}(x,a)$ is convex
in $x$, is unchanged by a common translation of its coordinates, and has
minimum $\operatorname{OPT}_{\widehat A}(a)$.

\begin{samepage}
\begin{proposition}\label{prop:optimization}
Given $\theta\in\Q^{\mathcal E}$, rational $\lambda\ge1$, and rational
$0<\varepsilon\le1$, one can compute positive rational weights $w$ such that
\[
 \sum_{e\ni v}w_e\le\lambda\quad\text{for every vertex }v,
 \qquad
 \Phi_{\widehat A}(\log w)
 \le\operatorname{OPT}_{\widehat A}(\log\lambda)+\varepsilon n.
\]
Put
\begin{equation}\label{eq:optimization-box}
 \tau=\frac{\varepsilon}{128n},\qquad
 W=8+\left\lceil\log_2(1+\lambda/\tau)\right\rceil
 =O\!\left(1+\log\frac{n\lambda}{\varepsilon}\right).
\end{equation}
The algorithm solves $O(mW^2/\varepsilon^2)$ assignment problems and
applies \cref{thm:counting} as many times. Each application uses
$\Gamma=\lambda$ and relative tolerance $\xi=\varepsilon/(1024W)$.
The assignment problems and all remaining work take time polynomial in
the input and parameter encoding lengths and $\varepsilon^{-1}$.
The returned weights, iterates, and inputs to \cref{thm:counting} have
encoding lengths polynomial in the input and parameter encoding lengths.
\end{proposition}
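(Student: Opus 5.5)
We run the clipped subgradient iteration on the normalized objective $f(x)=F_{\widehat A}(x,a)$ with $a=\log\lambda$, over the box $K=[-W,0]^m$, starting at $x_0=0$. We return the normalized weights $w(\bar x,a)$ of a rounded average iterate. The analysis has three parts: a comparison point in $K$, inexact supporting inequalities for \cref{cor:inexact-averaging}, and rational bookkeeping.

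\textbf{Comparison point.} By \cref{lemma:convex-value} take a minimizer $y$ of $\Phi_{\widehat A}$ over $\mathcal D_a$, and set $v_e=\max\{e^{y_e},\tau\}$ and $z_e=\log(v_e/\lambda)$. Since $\tau\le v_e\le\lambda$, we have $z\in[-\log(\lambda/\tau),0]\subseteq K$ by the choice of $W$. Raising weights from $e^{y_e}$ to $v_e$ changes $\log Z$ by at most $\sum_e\log(1+\tau/e^{y_e})\cdot\mu$-type terms. Concretely, I would bound this by comparing $Z(v)\le Z(e^y)\prod_e(1+\tau)$ restricted to edges that are raised, or more sharply by using $\nabla\log Z=\mu\le1$ on the path, giving an increase at most $m\tau$ in $\log Z$. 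The assignment term can only decrease. Normalizing $z$ divides by $B=\max_u\sum_{e\ni u}v_e\le\lambda+n\tau$. By translation invariance this is the same as evaluating $\Phi$ at weights $\lambda v/B$. Lowering all weights by the factor $B/\lambda$ raises the objective by at most $n\log(B/\lambda)\le n^2\tau/\lambda$, using the sensitivity bound \eqref{eq:objective-lipschitz}. With $m\le n^2$, the total is at most $2n^2\tau\le\varepsilon n/64$. So $f(z)\le\operatorname{OPT}_{\widehat A}(a)+\varepsilon n/64$.

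\textbf{Approximate directions.} At each iterate $x_t$ we compute:
\begin{itemize}
\item $H(x_t)$ and a vertex $v$ attaining it, using directed rational bounds for the logarithms;
\item a maximizing assignment $Q$ via \cref{fact:assignment}, with rational upper approximations to the costs $\theta-x_t$;
\item intervals for the edge probabilities $\mu_e$ at the normalized weights, via \cref{thm:counting} with tolerance $\xi$.
\end{itemize}
Let $\tilde\mu$ be the lower endpoints, set $\tilde d=n-\sum\tilde\mu$, and take $h_t=\tilde\mu-Q+\tilde d\,p^{(v)}$, rounded toward zero on a grid. The true subgradient is $g_t=\mu-Q+dp^{(v)}$ from \cref{lemma:normalized-convexity}. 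The errors satisfy $\|\tilde\mu-\mu\|_1\le\xi n$ and $|\tilde d-d|\le\xi n$, so $\|h_t-g_t\|_1\le3\xi n$ plus the grid error. Since $\|z-x_t\|_\infty\le W$, we get $f(z)\ge f(x_t)+\langle h_t,z-x_t\rangle-r$ with $r\lesssim\xi nW\le\varepsilon n/256$. Near-optimality of the rounded weights and of the assignment costs adds only further $O(\varepsilon n)$ slack, because an approximately maximizing $Q$ still gives an approximate supporting inequality. Also $\|h_t\|_2\le\|h_t\|_1\le 2n+O(\xi n)\le 3n$, and $\|x_0-z\|_2\le\sqrt mW$.

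\textbf{Step count.} \cref{cor:inexact-averaging} with $\alpha=\sqrt m\,W/(3n\sqrt T)$ gives error $3n\sqrt mW/\sqrt T+r$. Taking $T\asymp mW^2/\varepsilon^2$ makes this at most $\varepsilon n/8$. The total loss is then at most $\varepsilon n/2$, and rounding $\bar x$ to rationals costs at most $2n$ times the perturbation by \eqref{eq:objective-lipschitz}. The returned weights $w(\bar x,a)$ have $\log w\in\mathcal D_a$ and $\Phi_{\widehat A}(\log w)=f(\bar x)$.

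\textbf{Main obstacle.} The hardest part is encoding lengths. The clipped iterates are sums of $T$ grid-rounded steps, so fixing a common grid of spacing $\varepsilon/\poly(nmW)$ for $\alpha h_t$ keeps every $x_t$ on that grid with bounded numerators. The exponentials $e^{x_e+a-H}$ fed to \cref{thm:counting}, however, must themselves be rational. I would round them downward to rationals of polynomial length, with relative error $\varepsilon/\poly$. Each vertex sum then stays at most $\lambda$, and \eqref{eq:objective-lipschitz} shows the objective and the probability estimates move by a negligible amount. Checking that this perturbation keeps the supporting inequality within $r$ is the delicate step.
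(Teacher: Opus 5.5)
Your overall architecture matches the paper's: the same comparison point (raise weights below $\tau$ to $\tau$ and renormalize, losing at most $m\tau+n\log(1+n\tau/\lambda)\le 2n^2\tau$), clipped subgradient averaging on $[-W,0]^m$ from $x_0=0$ with $T\asymp mW^2/\varepsilon^2$, and a common grid to control bit lengths. The gap is exactly at the step you call delicate.

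Your bound $f(z)\ge f(x_t)+\langle g_t,z-x_t\rangle$ requires $g_t=\mu-Q+d\,p^{(v)}$ to be an \emph{exact} subgradient at $x_t$. That needs two things:
$\mu$ must be the edge probabilities at the exact normalized weights $w(x_t,a)$, and $v$ must exactly maximize $\sum_{e\ni v}e^{x_{t,e}}$. Neither holds in your procedure. \Cref{thm:counting} is run on rounded weights, so $\|\tilde\mu-\mu\|_1\le\xi n$ holds only for $\mu$ at those rounded weights. Directed bounds on the vertex sums only certify an approximately maximizing vertex. The bound \eqref{eq:objective-lipschitz} controls values of $\Phi$, not edge probabilities, so it does not show that the probabilities ``move by a negligible amount.'' To complete your route you would need an explicit $\ell^1$ perturbation bound for $\mu$ in the weights, plus a separate argument for an approximately maximizing vertex. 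Neither is supplied.

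The idea that closes this, in \cref{lem:optimization-directions}, needs no continuity of $\mu$ at all. Choose rational $b_e$ with $|\log b_e-x_{t,e}|\le\xi$ and put $r=\log b$. Compute $B=\max_v\sum_{e\ni v}b_e$ and a maximizing vertex exactly in rational arithmetic, and feed $w=\lambda b/B$ to \cref{thm:counting}. These are exactly the normalized weights at $r$. So with $p_e=b_e/B$ at that vertex, $q=\mu+d\,p$ is an exact subgradient at $r$ of the partition part $K(x)=\log\sum_M\exp(\langle x,\mathbf 1_M\rangle+(n-|M|)(H(x)-a))$ of $F_{\widehat A}(\cdot,a)$. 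Since $H$ is $1$-Lipschitz in the sup norm, $K$ is $n$-Lipschitz. Together with $\|q\|_1=n$, this transfers the supporting inequality from $r$ to $x_t$ at cost $2\xi n$. The assignment part needs no approximation, because $\theta-x_t$ is rational and its maximizer is computed exactly at $x_t$. The estimation error $\|\tilde\mu-\mu\|_1+|\tilde d-d|\le2\xi n$, over a box of width $W$, then gives total slack $2\xi n(1+W)$.

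Two smaller points. First, your step size $\sqrt m\,W/(3n\sqrt T)$ is generally irrational, so replace it by a rational within a constant factor; the paper divides the direction by $n$ and takes $\alpha=\varepsilon/16$. Second, the $m\tau$ bound should use $\partial\log Z/\partial w_e=Z(G-i-j)/Z(G)\le1$ in the weights themselves, not $\mu_e\le1$ in logarithmic weights. Your alternative $Z(v)\le(1+\tau)^mZ(e^y)$ is correct.
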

\end{samepage}

It suffices to optimize over a bounded box containing a point whose
objective value is close to the minimum. We state this property and the
approximate subgradient guarantee, then deduce the algorithm. Throughout this section, write
$a=\log\lambda$ and $Q=[-W,0]^m$.

\begin{lemma}[A nearly optimal point in $Q$]\label{lem:comparison-point}
There exists $z\in Q$ such that
\[
 F_{\widehat A}(z,a)
 \le\operatorname{OPT}_{\widehat A}(a)+\frac{\varepsilon n}{64}.
\]
\end{lemma}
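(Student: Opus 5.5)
Start from an exact minimizer: by \cref{lemma:convex-value} there is $y^\ast\in\mathcal D_a$ with $\Phi_{\widehat A}(y^\ast)=\operatorname{OPT}_{\widehat A}(a)$. Put $w=e^{y^\ast}$, so $w_e\le\lambda$ for every edge and $\sum_{e\ni v}w_e\le\lambda$ at every vertex. Following the overview, raise small weights by setting $v_e=\max\{w_e,\tau\}$ and take $z_e=\log(v_e/\lambda)$. Then $\tau/\lambda\le v_e/\lambda\le1$. The definition \eqref{eq:optimization-box} gives $W\ge\log_2(1+\lambda/\tau)\ge\log(\lambda/\tau)$, so $z\in Q=[-W,0]^m$. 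Since $F_{\widehat A}(\cdot,a)$ is translation invariant, $F_{\widehat A}(z,a)=\Phi_{\widehat A}(\log v+(a-H(\log v))\mathbf 1)$, the value at the weights $\lambda v_e/B$ with $B=\max_u\sum_{e\ni u}v_e$.

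Next bound $B$ and compare objectives in two steps. Each vertex has at most $n$ incident edges, and raising weights adds at most $\tau$ per edge, so $B\le\lambda+n\tau$.

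\emph{Step 1: from $w$ to $v$.} The assignment term $\max_P\langle\log\widehat A-y,P\rangle$ is nonincreasing in $y$, so it does not increase. For $\log Z$, compare matching by matching: $v(M)\le\prod_{e\in M}(w_e+\tau)$. Expanding the product, every matching $M$ and subset $S\subseteq M$ contributes $w(M\setminus S)\tau^{|S|}$, which gives $Z(v)\le\sum_{S}\tau^{|S|}Z(w)\le Z(w)\prod_{e\in\mathcal E}(1+\tau)$. Alternatively, use $\nabla\log Z=\mu\le1$ coordinatewise along the path $\log w\to\log v$, where each coordinate increases by $\log(v_e/w_e)$. The product form is cleaner: $\log Z(v)-\log Z(w)\le m\log(1+\tau)\le m\tau\le n^2\tau$.

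\emph{Step 2: rescale by $\lambda/B\ge\lambda/(\lambda+n\tau)$.} The factor may be below one. Decreasing all log weights by $c=\log(B/\lambda)\le\log(1+n\tau/\lambda)\le n\tau$ lowers $\log Z$ and raises the assignment term by exactly $nc$, since assignments have $n$ edges. If $B<\lambda$, the scaling is upward and does not increase the objective, as in \cref{lemma:normalized-convexity}. Either way, the increase is at most $n\cdot n\tau$.

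Summing the two steps gives $F_{\widehat A}(z,a)\le\operatorname{OPT}_{\widehat A}(a)+2n^2\tau=\operatorname{OPT}_{\widehat A}(a)+\varepsilon n/64$, using $\tau=\varepsilon/(128n)$. The only delicate points are the $\log Z$ increase, where the bound must be $m\tau$ rather than something depending on $\lambda$, and the sign in the rescaling step. The product expansion handles the first, and the case split on $B$ versus $\lambda$ handles the second.
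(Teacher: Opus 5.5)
Your proof is correct and matches the paper's argument. Both raise weights below $\tau$ to $\tau$, check that $z=\log(v/\lambda)\in Q$, and bound the increase of $\log Z$ by $m\tau$ (you via $Z(v)\le(1+\tau)^mZ(w)$, the paper via $\partial\log Z/\partial w_e=Z(G-i-j)/Z(G)\le1$), note that the assignment term cannot increase, and charge $n\log(1+n\tau/\lambda)$ for renormalization.

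Two remarks, neither affecting the argument you rely on. The case $B<\lambda$ never occurs, since an optimizer saturates some vertex and $v\ge w$. Your discarded alternative of integrating $\mu\le1$ in logarithmic coordinates would only give $\sum_e\log(v_e/w_e)$, which is unbounded; it works only if you use $\mu_e\le w_e$ instead.
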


\begin{lemma}[Rational approximate subgradients]\label{lem:optimization-directions}
For a rational $x\in Q$, a procedure returns a rational vector $h$ with
$\|h\|_2\le2$ such that every $y\in Q$ satisfies
\begin{equation}\label{eq:optimization-direction}
 F_{\widehat A}(y,a)
 \ge F_{\widehat A}(x,a)+n\langle h,y-x\rangle-2\xi n(1+W).
\end{equation}
The procedure solves one assignment problem and applies \cref{thm:counting}
once, with $\Gamma=\lambda$ and tolerance $\xi$. The assignment and remaining
work take time polynomial in the input and parameter encoding lengths
and the encoding length of $x$.
Its output and the weights used in \cref{thm:counting} have
encoding lengths polynomial in the input and parameter encoding lengths.
\end{lemma}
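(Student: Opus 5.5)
The procedure builds an exact subgradient at $x$ from \cref{lemma:normalized-convexity}, replaces its one non-computable ingredient by the estimate of \cref{thm:counting}, and rounds the result. Let $w=w(x,a)$ be the normalized weights. For an exact implementation, compute a vertex $v$ attaining $H(x)$ and rational weights $\tilde w$ with $\tilde w_e/w_e\in[1,1+\xi]$ and vertex sums at most $\lambda$; these are what we pass to \cref{thm:counting}. Solve one assignment problem for a maximizer $Q$ of $\langle\theta-x,Q\rangle$. Since $x$ is rational, only the logarithm in $H(x)$ needs approximating, which the directed evaluations of \cref{app:implementation} provide. The exact subgradient is $g=\mu-Q+dp^{(v)}$ with $d=n-\sum_e\mu_e$. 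Because $F_{\widehat A}(\cdot,a)$ is convex, it satisfies $F(y)\ge F(x)+\langle g,y-x\rangle$ for all $y$.

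Next I would form the approximate direction. \cref{thm:counting} returns intervals with $\hat\mu_e\in[\mu_e,(1+\xi)\mu_e]$; adjust this to account for using $\tilde w$ instead of $w$. Set $\hat d=n-\sum_e\hat\mu_e$ and $\hat p=p^{(v)}$ computed from $\tilde w$, then put $\hat g=\hat\mu-Q+\hat d\,\hat p$. Define $h$ by rounding $\hat g/n$ toward zero coordinatewise on a grid of spacing $\xi/m$. This keeps encoding lengths polynomial, and rounding toward zero can only shrink the norm.

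For the norm bound, $\hat\mu+\hat d\hat p$ and $Q$ are nonnegative vectors with sums near $n$, so $\|\hat g\|_1\le 2n(1+O(\xi))$. I would rather bound $\|\cdot\|_2$ directly: $\|Q\|_2=\sqrt n$, and $\|\mu+d p\|_2\le\|\mu+dp\|_1\le n$. This gives $\|\hat g\|_2\le n+\sqrt n+O(\xi n)\le 2n$, so $\|h\|_2\le2$ after a small renormalization if needed.

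The error in the supporting inequality is $|\langle g-nh,y-x\rangle|\le\|g-nh\|_1\|y-x\|_\infty\le W\|g-nh\|_1$, since $x,y\in Q$. The estimation errors contribute $\|\hat\mu-\mu\|_1\le\xi n$ and $|\hat d-d|\le\xi n$, which enters through $\hat d\hat p$ with $\ell^1$ mass at most $\xi n$. The perturbation of $p$ from $\tilde w$ costs $O(\xi d)$, and grid rounding costs at most $n\cdot\xi$. Altogether this gives $\|g-nh\|_1\le 2\xi n$ up to the constants that must be tuned. The additive $2\xi n$ (without $W$) absorbs any discrepancy between $F(x)$ and the value at the rounded weights. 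The main obstacle is this bookkeeping: each perturbation must enter only linearly in $\xi n$ so that the total is exactly $2\xi n(1+W)$, while the weights passed to \cref{thm:counting} must still respect the budget $\lambda$ exactly. I would handle the budget by rounding weights downward and computing the normalization with an upward-rounded $H$.
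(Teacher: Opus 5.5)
\textbf{The gap is the step you describe as ``adjust this to account for using $\tilde w$ instead of $w$.''}

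Your $g=\mu-Q+dp^{(v)}$ is an exact subgradient at $x$. Its $\mu$ is therefore the edge-probability vector at the irrational weights $w(x,a)$. But \cref{thm:counting} runs only on rational weights, so it estimates $\mu(\tilde w)$. Your error budget charges $\|\hat\mu-\mu\|_1\le\xi n$ as if $\hat\mu$ estimated $\mu(w)$, and it never charges $\|\mu(\tilde w)-\mu(w)\|_1$.

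At log-precision $\xi$ this missing term is not $O(\xi n)$ in general. Since $\mu=\nabla_y\log Z(e^y)$, its response to a perturbation is governed by summed covariances of edge indicators, and these grow with the correlation length. For example, on a long even cycle with all weights $\lambda/2$, a perturbation of size $\xi$ alternating in sign around the cycle moves $\mu$ by order $\xi n\sqrt\lambda$ in $\ell^1$. The elementary bound from $\log Z$ moving by at most $n\xi$ only gives relative error $e^{2n\xi}-1$ per edge.

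Worse, this error sits inside the direction, so it is multiplied by $\|y-x\|_\infty\le W$. The $W$-free term $2\xi n$ cannot absorb it. In your setup that term actually has nothing to absorb, because your supporting inequality at $x$ is exact.

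\textbf{The missing idea is to move the base point rather than the subgradient.} The paper takes rational $b$ with $|\log b_e-x_e|\le\xi$ and sets $r=\log b$. The weights $\lambda b/B$ sent to \cref{thm:counting} are then exactly $w(r,a)$, and the maximizing vertex is found by rational comparison. Split $F_{\widehat A}(\cdot,a)=K+\max_{P\in\mathcal B}\langle\theta-\cdot,P\rangle$, with $K$ the log-sum-exp part of \eqref{eq:normalized-expansion}. Then $q=\mu(r)+d(r)p$ is an exact subgradient of $K$ at $r$ with $\|q\|_1=n$. Also, $K$ is $n$-Lipschitz in the sup norm because $H$ is $1$-Lipschitz. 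Hence
\[
 K(y)\ge K(r)+\langle q,y-r\rangle\ge K(x)+\langle q,y-x\rangle-2\xi n,
\]
which is where the $W$-free term comes from. The assignment term is supported exactly at $x$, since $\theta-x$ is rational. Only the counting errors $\|\hat\mu-\mu\|_1+|\hat d-d|\le2\xi n$ are multiplied by $W$, which gives exactly $2\xi n(1+W)$.

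Your route could instead be repaired by computing $\tilde w$ to relative precision of order $\xi/(nW)$ and charging the crude bound $|\log\mu_e(\tilde w)-\log\mu_e(w)|\le2n\|\log\tilde w-\log w\|_\infty$. That has to be done explicitly, though.

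Smaller points:
\begin{itemize}
\item Take lower endpoints of the intervals. Then $\hat d\ge d\ge0$, and $\hat\mu+\hat dp$ is nonnegative with total exactly $n$, which gives $\|h\|_1\le2$ with no renormalization.
\item Leave the grid rounding of $h$ to \cref{prop:optimization}, as the paper does. Your grid adds $\xi n$ in $\ell^1$, hence $\xi nW$ in \eqref{eq:optimization-direction}, which exceeds the fixed budget.
\item The requirement $\tilde w_e/w_e\in[1,1+\xi]$ together with vertex sums at most $\lambda$ is infeasible at the vertex attaining $H(x)$. You need the downward rounding you mention at the end.
\end{itemize}
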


\begin{proof}[Proof of \cref{prop:optimization}]
Start with $x_0=0$. Choose $\beta$ as the largest inverse power of two
satisfying $\beta\le\xi/m$, and set
\[
 \alpha=\frac\varepsilon{16},\qquad
 T=\left\lceil\frac{64mW^2}{\varepsilon^2}\right\rceil.
\]
At each of the $T$ iterations, obtain $h_t$ from
\cref{lem:optimization-directions} and round each coordinate toward zero
to a multiple of $\beta$, obtaining $\widetilde h_t$. This preserves the
norm bound and changes the direction by at most
\[
 \|\widetilde h_t\|_2\le2,\qquad
 \|\widetilde h_t-h_t\|_1\le m\beta\le\xi.
\]
Take the clipped step without further rounding:
\[
 x_{t+1}=\Pi_Q(x_t-\alpha\widetilde h_t).
\]
Since $Q$ is a box with integer endpoints, this projection uses only
rational comparisons and leaves every iterate in $Q$.
Compute the average
\[
 \bar x=\frac1T\sum_{t=0}^{T-1}x_t.
\]
To return rational weights, approximate its exponentials by positive rational
numbers $b_e$ with $|\log b_e-\bar x_e|\le\xi$, and set
\[
 B=\max_v\sum_{e\ni v}b_e,\qquad w_e=\frac{\lambda b_e}{B}.
\]
Thus $\sum_{e\ni v}w_e\le\lambda$ for every vertex $v$.

We apply the averaging guarantee in \cref{cor:inexact-averaging}.
Choose $z$ as in \cref{lem:comparison-point}.
Rounding changes the inner product with $z-x_t$ by at most $\xi W$,
because \mbox{$\|z-x_t\|_\infty\le W$}.
For $f(x)=F_{\widehat A}(x,a)/n$, \eqref{eq:optimization-direction}
therefore gives the supporting inequality with
$G=2$ and \mbox{$r=2\xi(1+W)+\xi W$}. Since
$\|x_0-z\|_2^2\le mW^2$, it gives
\begin{align*}
 F_{\widehat A}(\bar x,a)
 &\le\operatorname{OPT}_{\widehat A}(a)+n\left(
 \frac\varepsilon{64}+
 \frac{mW^2}{2\alpha T}+2\alpha+
 2\xi(1+W)+\xi W\right)\\
 &\le\operatorname{OPT}_{\widehat A}(a)
      +\frac{\varepsilon n}{2}.
\end{align*}
The subgradient bound in \cref{lemma:normalized-convexity} makes
$F_{\widehat A}(\cdot,a)$ $2n$-Lipschitz in the sup norm. Since the returned
weights satisfy $\Phi_{\widehat A}(\log w)=F_{\widehat A}(\log b,a)$,
we conclude that
\[
 \Phi_{\widehat A}(\log w)
 \le\operatorname{OPT}_{\widehat A}(a)
    +\frac{\varepsilon n}{2}+2\xi n
 <\operatorname{OPT}_{\widehat A}(a)+\varepsilon n.
\]

To bound encoding lengths, write $\alpha=A_0/B_0$ in lowest terms and
$\beta=2^{-p}$, where $p=O(\log(m/\xi))$.
Every step changes each coordinate by a multiple of
$1/(B_0 2^p)$, and clipping to the integer endpoints preserves this
denominator. The iterates lie in $Q$, so their encoding lengths are
polynomial in the input and parameter encoding lengths. The average has
denominator $TB_0 2^p$ and satisfies the same bound. There are
$O(mW^2/\varepsilon^2)$ calls to the direction procedure. Its encoding
bounds and the final rational normalization prove the remaining claims.
\end{proof}

\begin{proof}[Proof of \cref{lem:comparison-point}]
Let $w$ be optimal weights under budget $\lambda$. The proof of
\cref{lemma:optimizer-rate} shows that the weights at some vertex sum to
$\lambda$.
Raise every weight below $\tau$ to $\tau$, and define
\[
 v_e=\max\{w_e,\tau\},\qquad
 z_e=\log(v_e/\lambda),\qquad
 B=\max_u\sum_{e\ni u}v_e.
\]
Every original weight is at most $\lambda$, and $\tau\le\lambda$, so
$\tau\le v_e\le\lambda$. Thus
$z\in[-\log(\lambda/\tau),0]^m\subseteq Q$.
At most $n$ edges meet each vertex, giving
$\lambda\le B\le\lambda+n\tau$.

It remains to bound the change in the objective. For an edge $e=ij$,
edge insertion gives
\[
 \frac{\partial\log Z(w)}{\partial w_e}
 =\frac{Z(G-i-j)}{Z(G)}\le1.
\]
Raising the weights therefore increases $\log Z$ by at most $m\tau$.
It cannot increase the assignment term, since every cost
$\log\widehat A_e-\log w_e$ can only decrease.
Normalization maps $z$ to the weights $\lambda v/B$. This scales $v$
down, which decreases $\log Z$ and adds $n\log(B/\lambda)$ to the cost
of each perfect matching. Consequently,
\begin{align*}
 F_{\widehat A}(z,a)
 &=\Phi_{\widehat A}\bigl(\log(\lambda v/B)\bigr)\\
 &\le\operatorname{OPT}_{\widehat A}(a)
       +m\tau+n\log(1+n\tau/\lambda)\\
 &\le\operatorname{OPT}_{\widehat A}(a)+2n^2\tau
 =\operatorname{OPT}_{\widehat A}(a)+\frac{\varepsilon n}{64}.
\end{align*}
\end{proof}

\begin{proof}[Proof of \cref{lem:optimization-directions}]
Compute positive rational numbers $b_e$ satisfying
$|\log b_e-x_e|\le\xi$. Set
\[
 B=\max_v\sum_{e\ni v}b_e,\qquad w_e=\frac{\lambda b_e}{B},
\]
choose a vertex attaining the maximum, and put $p_e=b_e/B$ on its
incident edges and zero elsewhere. Thus $p\ge0$ and $\sum_e p_e=1$.
Apply \cref{thm:counting} to the rational weights $w$ with tolerance $\xi$.
Let $\mu$ be the edge probabilities under $\nu_{G,w}$ and $\widehat\mu$ the lower
endpoints of the returned intervals. Their relative width guarantee gives
\[
 0\le\widehat\mu\le\mu,\qquad
 \|\widehat\mu-\mu\|_1\le\xi\sum_e\mu_e\le\xi n.
\]
Compute a maximum weight assignment $P_x$ for the rational costs
$\theta-x$, and return
\[
 \widehat d=n-\sum_e\widehat\mu_e,\qquad
 h=\frac{\widehat\mu+\widehat d\,p-P_x}{n}.
\]
Both $\widehat\mu+\widehat d\,p$ and $P_x$ have nonnegative entries
summing to $n$. Hence $\|h\|_2\le\|h\|_1\le2$.

The assignment $P_x$ was computed at $x$. The estimates $\widehat\mu$,
and hence $\widehat d$, use the rounded weights $w_e=\lambda b_e/B$.
To combine the two, write the normalized objective as an assignment
maximum plus the convex function $K$:
\[
 F_{\widehat A}(x,a)=K(x)+
 \max_{P\in\mathcal B}\langle\theta-x,P\rangle,\qquad
 K(x)=\log\sum_M\exp\left(
 \langle x,\1_M\rangle+(n-|M|)(H(x)-a)\right).
\]
Here the sum is over all matchings. The function $H$ is $1$-Lipschitz in
the sup norm. A perturbation of size $s$ therefore changes each exponent in the sum by at most $|M|s+(n-|M|)s=ns$, proving that $K$ is
$n$-Lipschitz in this norm.

For analysis, put $r=\log b$ and $d=n-\sum_e\mu_e$. The subgradient
formula \eqref{eq:normalized-subgradient} gives $q=\mu+dp\in\partial K(r)$.
This vector has nonnegative entries summing to $n$. Since $\|r-x\|_\infty\le\xi$, the Lipschitz bound for $K$ and
$\|q\|_1=n$ transfer its supporting inequality from $r$ to $x$, with
error at most $2\xi n$:
\[
 K(y)\ge K(r)+\langle q,y-r\rangle
       \ge K(x)+\langle q,y-x\rangle-2\xi n.
\]
The maximizing assignment $P_x$ gives the supporting inequality for the
second term at $x$. Moreover,
\[
 \|nh-(q-P_x)\|_1
 \le\|\widehat\mu-\mu\|_1+|\widehat d-d|
 \le2\xi n.
\]
Adding the two supporting inequalities and using
$\|y-x\|_\infty\le W$ proves \eqref{eq:optimization-direction}.

Since $e^{x_e}\ge e^{-W}$ throughout the box, a fixed grid whose spacing
is an inverse power of two gives the required approximations $b_e$ with
$O(W+\log(1/\xi))$ bits per
coordinate. The directed evaluations
in \cref{lem:directed-evaluations} compute them in polynomial time.
The algorithm in \cref{thm:counting} returns intervals containing the edge
probabilities with polynomial encoding length. The normalization and the
formula for $h$ preserve this bound for the returned direction.
This proves the remaining claims.
\end{proof}

\section*{Acknowledgements}
The authors would like to thank Farzam Ebrahimnejad and Shayan Oveis Gharan for helpful advice.

\clearpage
\bibliographystyle{alpha}
\phantomsection
\addcontentsline{toc}{section}{References}
\bibliography{references}

@article{Valiant79,
  author = {Leslie G. Valiant},
  title = {The complexity of computing the permanent},
  journal = {Theoretical Computer Science},
  volume = {8},
  number = {2},
  pages = {189--201},
  year = {1979},
  doi = {10.1016/0304-3975(79)90044-6},
  url = {https://doi.org/10.1016/0304-3975(79)90044-6}
}

@article{JSV04,
  author = {Mark Jerrum and Alistair Sinclair and Eric Vigoda},
  title = {A polynomial-time approximation algorithm for the permanent of a matrix with nonnegative entries},
  journal = {Journal of the ACM},
  volume = {51},
  number = {4},
  pages = {671--697},
  year = {2004},
  doi = {10.1145/1008731.1008738},
  url = {https://doi.org/10.1145/1008731.1008738}
}

@article{AR21,
  author = {Nima Anari and Alireza Rezaei},
  title = {A tight analysis of {Bethe} approximation for permanent},
  journal = {SIAM Journal on Computing},
  volume = {54},
  number = {4},
  pages = {FOCS19-81--FOCS19-101},
  year = {2025},
  doi = {10.1137/19M1306142},
  url = {https://doi.org/10.1137/19M1306142}
}

@article{Anari26,
  author = {Nima Anari},
  title = {Beyond the {Bethe} approximation of the permanent},
  journal = {arXiv preprint arXiv:2608.28031},
  year = {2026},
  eprint = {2608.28031},
  archivePrefix = {arXiv},
  url = {https://arxiv.org/abs/2608.28031v2}
}

@article{NP26,
  author = {Ijay Narang and Will Perkins},
  title = {Structural corrections to the {Bethe} approximation of the permanent},
  journal = {arXiv preprint arXiv:2608.31061},
  year = {2026},
  eprint = {2608.31061},
  archivePrefix = {arXiv},
  url = {https://arxiv.org/abs/2608.31061}
}

@article{DJ26,
  author = {Dingding Dong and Vishesh Jain},
  title = {Optimal girth-dependent bounds for the {Bethe} approximation of the permanent},
  journal = {arXiv preprint arXiv:2609.02017},
  year = {2026},
  eprint = {2609.02017},
  archivePrefix = {arXiv},
  url = {https://arxiv.org/abs/2609.02017}
}

@article{Yi26,
  author = {Zihong Yi},
  title = {Diffuse {Gaussian} Truncation For Deterministic Approximate Counting},
  journal = {arXiv preprint arXiv:2609.04079},
  year = {2026},
  eprint = {2609.04079},
  archivePrefix = {arXiv},
  url = {https://arxiv.org/abs/2609.04079}
}

@article{LP25,
  author = {Aditi Laddha and Madhusudhan Reddy Pittu},
  title = {An Algorithmic Upper Bound for Permanents via a Permanental {Schur} Inequality},
  journal = {arXiv preprint arXiv:2509.08121},
  year = {2025},
  eprint = {2509.08121},
  archivePrefix = {arXiv},
  url = {https://arxiv.org/abs/2509.08121}
}

@article{WV26,
  author = {Binghong Wu and Pascal O. Vontobel},
  title = {Double-Cover-Based Analysis of the {Bethe} Permanent of Block-Structured Positive Matrices},
  journal = {arXiv preprint arXiv:2601.17508v3},
  year = {2026},
  eprint = {2601.17508},
  archivePrefix = {arXiv},
  url = {https://arxiv.org/abs/2601.17508v3},
  note = {Extended version with appendices of the {ISIT} paper}
}

@article{KL26,
  author = {Frederic Koehler and Pui Kuen Leung},
  title = {Approximating the Permanent of a Random Matrix with Polynomially Small Mean: Zeros and Universality},
  journal = {arXiv preprint arXiv:2604.01367},
  year = {2026},
  eprint = {2604.01367},
  archivePrefix = {arXiv},
  url = {https://arxiv.org/abs/2604.01367}
}

@article{Vontobel13,
  author = {Pascal O. Vontobel},
  title = {The {Bethe} permanent of a nonnegative matrix},
  journal = {IEEE Transactions on Information Theory},
  volume = {59},
  number = {3},
  pages = {1866--1901},
  year = {2013},
  doi = {10.1109/TIT.2012.2227109},
  url = {https://doi.org/10.1109/TIT.2012.2227109}
}

@article{Gurvits11,
  author = {Leonid Gurvits},
  title = {Unleashing the power of {Schrijver}'s permanental inequality with the help of the {Bethe} approximation},
  journal = {arXiv preprint arXiv:1106.2844},
  year = {2011},
  eprint = {1106.2844},
  archivePrefix = {arXiv},
  url = {https://arxiv.org/abs/1106.2844}
}

@article{Schrijver98,
  author = {Alexander Schrijver},
  title = {Counting 1-factors in regular bipartite graphs},
  journal = {Journal of Combinatorial Theory, Series B},
  volume = {72},
  number = {1},
  pages = {122--135},
  year = {1998},
  doi = {10.1006/jctb.1997.1798},
  url = {https://doi.org/10.1006/jctb.1997.1798}
}

@inproceedings{BGKNT07,
  author = {Mohsen Bayati and David Gamarnik and Dmitriy Katz and Chandra Nair and Prasad Tetali},
  title = {Simple deterministic approximation algorithms for counting matchings},
  booktitle = {Proceedings of the 39th ACM Symposium on Theory of Computing (STOC)},
  pages = {122--127},
  year = {2007},
  doi = {10.1145/1250790.1250809},
  url = {https://web.stanford.edu/~bayati/papers/matchingptas.pdf},

}

@article{SSSY17,
  author = {Alistair Sinclair and Piyush Srivastava and Daniel {\v{S}}tefankovi{\v{c}} and Yitong Yin},
  title = {Spatial mixing and the connective constant: Optimal bounds},
  journal = {Probability Theory and Related Fields},
  volume = {168},
  number = {1--2},
  pages = {153--197},
  year = {2017},
  doi = {10.1007/s00440-016-0708-2},
  url = {https://arxiv.org/abs/1410.2595}
}

@article{HL72,
  author = {Ole J. Heilmann and Elliott H. Lieb},
  title = {Theory of monomer-dimer systems},
  journal = {Communications in Mathematical Physics},
  volume = {25},
  number = {3},
  pages = {190--232},
  year = {1972},
  doi = {10.1007/BF01877590},
  url = {https://doi.org/10.1007/BF01877590}
}

@article{Spier23,
  author = {Thom{\'a}s Jung Spier},
  title = {A refined {Gallai--Edmonds} structure theorem for weighted matching polynomials},
  journal = {Discrete Mathematics},
  volume = {346},
  number = {3},
  pages = {113244},
  year = {2023},
  doi = {10.1016/j.disc.2022.113244},
  url = {https://arxiv.org/abs/2006.15215v2},
  note = {Accessible version: arXiv:2006.15215v2}
}

@misc{Schrijver17,
  author = {Alexander Schrijver},
  title = {A Course in Combinatorial Optimization},
  year = {2017},
  url = {https://homepages.cwi.nl/~lex/files/dict.pdf},
  note = {Lecture notes, March 23}
}

@article{Bubeck15,
  author = {S{\'e}bastien Bubeck},
  title = {Convex optimization: Algorithms and complexity},
  journal = {Foundations and Trends in Machine Learning},
  volume = {8},
  number = {3--4},
  pages = {231--357},
  year = {2015},
  doi = {10.1561/2200000050},
  url = {https://arxiv.org/abs/1405.4980v2}
}

@article{LSW00,
  author = {Nathan Linial and Alex Samorodnitsky and Avi Wigderson},
  title = {A deterministic strongly polynomial algorithm for matrix scaling and approximate permanents},
  journal = {Combinatorica}, volume = {20}, number = {4},
  pages = {545--568}, year = {2000},
  doi = {10.1007/s004930070007},
  url = {https://www.math.ias.edu/~avi/PUBLICATIONS/MYPAPERS/LSW98/lsw00.pdf},
  note = {Preliminary version in STOC 1998}
}

@article{Gurvits05,
  author = {Leonid Gurvits},
  title = {A proof of hyperbolic van der {Waerden} conjecture: the right generalization is the ultimate simplification},
  journal = {arXiv preprint arXiv:math/0504397},
  year = {2005}, eprint = {math/0504397}, archivePrefix = {arXiv},
  url = {https://arxiv.org/abs/math/0504397},
  note = {ECCC TR05-103}
}

@article{Samorodnitsky08,
  author = {Alex Samorodnitsky},
  title = {An upper bound for permanents of nonnegative matrices},
  journal = {Journal of Combinatorial Theory, Series A},
  volume = {115}, number = {2}, pages = {279--292}, year = {2008},
  doi = {10.1016/j.jcta.2007.05.010},
  url = {https://arxiv.org/abs/math/0605147},
  note = {Preprint circulated in 2006}
}

@inproceedings{GS14,
  author = {Leonid Gurvits and Alex Samorodnitsky},
  title = {Bounds on the permanent and some applications},
  booktitle = {Proceedings of the 55th IEEE Symposium on Foundations of Computer Science (FOCS)},
  pages = {90--99}, year = {2014},
  doi = {10.1109/FOCS.2014.18},
  url = {https://arxiv.org/abs/1408.0976}
}

@book{Bertsekas09,
  author = {Dimitri P. Bertsekas},
  title = {Convex Optimization Theory},
  publisher = {Athena Scientific},
  year = {2009},
  url = {https://web.mit.edu/dimitrib/www/Convex_Theory_Entire_Book.pdf}
}

@book{BV04,
  author = {Stephen Boyd and Lieven Vandenberghe},
  title = {Convex Optimization},
  publisher = {Cambridge University Press},
  year = {2004},
  doi = {10.1017/CBO9780511804441},
  url = {https://web.stanford.edu/~boyd/cvxbook/bv_cvxbook.pdf}
}

@inproceedings{YY26,
  author = {Yuichi Yoshida and Zihan Zhang},
  title = {Low-Sensitivity Matching via Sampling from {Gibbs} Distributions},
  booktitle = {Proceedings of the 2026 Annual ACM-SIAM Symposium on Discrete Algorithms (SODA)},
  pages = {120--149},
  year = {2026},
  url = {https://arxiv.org/abs/2511.16918}
}

@inproceedings{CG24,
  author = {Zongchen Chen and Yuzhou Gu},
  title = {Fast Sampling of {$b$}-Matchings and {$b$}-Edge Covers},
  booktitle = {Proceedings of the 2024 Annual ACM-SIAM Symposium on Discrete Algorithms (SODA)},
  pages = {4972--4987},
  year = {2024},
  url = {https://arxiv.org/html/2304.14289v2}
}

@article{GK10,
  author = {David Gamarnik and Dmitriy Katz},
  title = {A deterministic approximation algorithm for computing the permanent of a {$0,1$} matrix},
  journal = {Journal of Computer and System Sciences},
  volume = {76},
  number = {8},
  pages = {879--883},
  year = {2010},
  doi = {10.1016/j.jcss.2010.05.002},
  url = {https://arxiv.org/abs/math/0702039}
}

@incollection{vdB99,
  author = {J. {van den Berg}},
  title = {On the absence of phase transition in the monomer-dimer model},
  booktitle = {Perplexing Problems in Probability: Festschrift in Honor of Harry Kesten},
  editor = {Maury Bramson and Rick Durrett},
  publisher = {Birkh{\"a}user},
  pages = {185--195},
  year = {1999},
  url = {https://ir.cwi.nl/pub/4617/04617D.pdf}
}

@article{Godsil81,
  author = {C. D. Godsil},
  title = {Matchings and walks in graphs},
  journal = {Journal of Graph Theory},
  volume = {5},
  number = {3},
  pages = {285--297},
  year = {1981},
  doi = {10.1002/jgt.3190050310},
  url = {https://doi.org/10.1002/jgt.3190050310}
}

@inproceedings{Valiant79Algebra,
  author = {Leslie G. Valiant},
  title = {Completeness classes in algebra},
  booktitle = {Proceedings of the Eleventh Annual ACM Symposium on Theory of Computing (STOC)},
  pages = {249--261},
  year = {1979},
  doi = {10.1145/800135.804419},
  url = {https://sites.math.washington.edu/~billey/colombia/references/valiant.completeness.1979.pdf}
}
\clearpage
\appendix
\section{Numerical details}
\label[appendix]{app:implementation}
In this appendix, we justify the numerical accuracy and running-time
bounds used in the main algorithm.

\subsection{Directed evaluations and rational endpoints}
\begin{lemma}\label{lem:directed-evaluations}
Let $p\ge0$ be an integer. For rational arguments in their real domains,
logarithms, exponentials, and square roots can be bounded above and below
by rational numbers differing by at most $2^{-p}$. With directed rounding,
these bounds can be computed in polynomial time in the input length, $p$, and output length.
\end{lemma}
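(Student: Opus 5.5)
We prove \cref{lem:directed-evaluations} by reducing each function to a rapidly convergent series or iteration on a reduced argument. Every partial sum or iterate is then a rational whose length we control, and its truncation error has an explicit rational bound. Throughout, we work with exact rational arithmetic on numerators and denominators. After each operation we round outward to a dyadic grid of spacing $2^{-(p+c)}$, for a small constant $c$ plus a logarithmic slack absorbing the number of operations. This keeps every intermediate number of polynomial length. Directed rounding means we round lower bounds down and upper bounds up at each step. The interval arithmetic therefore always contains the true value, and its width is the sum of the truncation error and the accumulated rounding errors.

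\emph{Logarithm.} For a positive rational $x=a/b$, we first take integer bit lengths and write $x=2^k y$ with $y\in[1/2,2)$ and $|k|$ at most the input length. Then $\log x=k\log2+\log y$. We evaluate $\log y=2\operatorname{artanh}(u)$, where $u=(y-1)/(y+1)$ and $|u|\le1/3$, using the series $2\sum_{j\ge0}u^{2j+1}/(2j+1)$. After $N$ terms, the tail is at most $2|u|^{2N+1}/(1-u^2)$. This is at most $9^{-N}$, so $N=O(p)$ terms suffice. We compute $\log 2=2\operatorname{artanh}(1/3)$ the same way, to precision $2^{-(p+\bits k+2)}$, so that multiplying by $k$ keeps its error below $2^{-p-2}$.

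\emph{Exponential.} For rational $x$, we want output length polynomial in the input, $p$, and the output length. We write $x=k\log2+r$ with $|r|\le1$, where $k$ is an integer of magnitude $O(|x|)$. Determining $k$ uses an enclosure of $\log2$. The bits of $e^x$, about $|x|$ of them, are part of the output length, so the requested precision is relative to that scale. We sum the Taylor series of $e^r$ for $N=O(p+\log|x|)$ terms, with tail at most $2/N!$, and multiply by $2^k$ exactly. An absolute error of $2^{-p}$ after scaling by $2^k$ needs relative precision $2^{-p-k}$ on $e^r$. This costs $O(p+|k|)$ bits, and $|k|$ is bounded by the output length, which is polynomial.

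\emph{Square root.} For a rational $x\ge0$, we compute $\lfloor\sqrt{\lfloor x4^{p}\rfloor}\rfloor$ by integer Newton iteration or binary search. Dividing by $2^p$ gives a lower bound, and adding $2^{-p}$ gives an upper bound. Monotonicity of $\sqrt{\cdot}$ justifies both directions.

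The main obstacle is bookkeeping, not mathematics. We must confirm that the grid rounding after each of the $O(p)$ operations adds only $O(p)\cdot2^{-(p+c')}$ to the width, so $c'=O(\log p)$ suffices. We must also confirm that the exponential's output length genuinely bounds the size of $k$. For the uses in the paper, such as $e^T$, $e^{T-E}$, and $b_e\approx e^{x_e}$ with $x_e\in[-W,0]$, the exponent is polynomially bounded, so this is immediate.
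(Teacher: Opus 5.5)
Your logarithm and square-root arguments are sound and essentially the paper's. Both use the same $\operatorname{artanh}$ series after extracting a power of two, and your integer square root plays the role of the paper's bisection.

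\textbf{The gap: very negative exponents.} Take $x=-2^{B}$, which has input length $O(B)$. Then $k\approx-2^{B}/\log2$, and your step ``multiply by $2^k$ exactly'' produces a rational whose denominator has $\Theta(2^{B})$ bits. If you then round outward to your grid of spacing $2^{-(p+c)}$, the output has only $O(p)$ bits, but you have already spent time exponential in the input length. If you do not round, the output itself is exponentially long, although an $O(p)$-bit enclosure suffices. Either way, your justification that $|k|$ is bounded by the output length fails for $x<0$. The integer part of $e^x$ has about $x/\log2$ bits only when $x>0$. Your closing remark is right that the paper's uses never reach this regime, but the lemma is stated for arbitrary rational arguments.

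The paper handles this case with a separate first step: if $x\le-(p+1)$, return $[0,2^{-p}]$, which is valid because $0<e^x\le e^{-(p+1)}<2^{-p}$. Every remaining negative argument has $|x|<p+1$, so $|k|=O(p)$ and your reduction goes through. Alternatively, compute $k$ first and return $[0,2^{-p}]$ whenever $k\le-p-3$, since then $e^x\le e\,2^{k}<2^{-p}$.

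\textbf{A counting slip.} You ask for relative precision $2^{-p-k}$ on $e^r$, which requires $N!\gtrsim2^{p+k}$. So for fixed $p$ and large positive $x$, the number of Taylor terms must grow like $x/\log x$, not like $\log|x|$. Taking $N=O(p+|k|)$ fixes this and stays polynomial in the output length.

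\textbf{Comparison of range reductions.} With these repairs, your route differs from the paper's only in the range reduction. Your reduced argument $r=x-k\log2$ is irrational, so it must be carried as an interval. That means computing $\log2$ to about $p+|k|+\bits k$ bits and evaluating the Taylor bounds at the two endpoints, using monotonicity of $e^r$. The paper pairs its range reduction with repeated squaring: dividing $x$ by a power of two keeps the reduced argument rational. Each of the $O(\log(1+|x|))$ squarings doubles the relative error, which is absorbed by working with $O(p+\bits x+1+\max\{x,0\})$ bits. Once the very negative case is split off, both routes cost time polynomial in the input length, $p$, and the output length.
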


First we apply \cref{lem:directed-evaluations} to the endpoints and objective
evaluation in \cref{sec:main-proof}.
Since the support has a perfect matching,
\[
 |\log\per(A)|
 \le n\max_{e\in\mathcal E}|\log A_e|+\log(n!).
\]
For positive binary rational inputs, $|\log A_e|=O(\bits(A_e))$.
The upper estimate $T$ in \cref{sec:main-proof} differs from
$\log\per(A)$ by $O(n)$, because $\lambda\ge1$ and $\eta\le1/100$.
The error bound $E=2n/\sqrt{e\lambda}+O(\eta n)$ is also of magnitude $O(n)$.

To obtain logarithmic accuracy, suppose $y\ge2^{-K}$ for an integer
$K\ge1$, and $0<\zeta\le1$. An interval $[l,u]$ containing $y$, of width at most $\zeta 2^{-K-1}$,
has positive endpoints whose logarithms differ by at most $\zeta$,
because its lower endpoint is at least $2^{-K-1}$ and
$\log(u/l)\le(u-l)/l$. Thus $p=O(K+\log(1/\zeta))$ absolute
precision bits suffice. For $y=e^T$ and $y=e^{T-E}$, one can take
$K=O(1+|T|+E)$, which is polynomial in the input and parameter
encoding lengths. Taking $\zeta=\min\{1,2\eta n\}$ and using the
upper endpoint for $e^T$ and the lower endpoint for $e^{T-E}$ yields
positive rational $U,L$ with
\[
 T\le\log U\le T+2\eta n,\qquad
 T-E-2\eta n\le\log L\le T-E.
\]
The number of precision bits and the
encoding lengths of these rational outputs are polynomial in the input
and parameter encoding lengths.

To evaluate $\Phi_{\widehat A}(\log w)$, use an upper endpoint for
$\log Z(w)$ and lower endpoints of intervals containing $\log w_e$ in the
assignment costs $\theta_e-\log w_e$. Intervals of width $O(\eta)$ change the cost
of each perfect matching by $O(\eta n)$. \Cref{thm:counting} supplies an
interval of width $O(\eta n)$ containing $\log Z(w)$.
Choosing sufficiently small absolute constants in these tolerances gives
an upper estimate with additive error at most $\eta n$, as used in
\cref{sec:main-proof}.

\begin{proof}[Proof of \cref{lem:directed-evaluations}]
For logarithms, separate
an integer power of two and use the convergent series
$\log x=2\sum_{j\ge0}((x-1)/(x+1))^{2j+1}/(2j+1)$ for $1\le x<2$,
bounding the tail geometrically. For an exponential $e^x$, use
$[0,2^{-p}]$ if $x\le-(p+1)$. Otherwise range reduction, the Taylor
series with its remainder bound, and repeated squaring give the required bounds.
For positive $x$, the integer part of $e^x$ has $O(1+x)$ bits, and
for $x\ge1$ its length is also $\Omega(x)$. Computing to
$O(p+\bits x+1+\max\{x,0\})$ bits therefore has polynomial cost in
the input length, $p$, and output length. Square roots follow by
bisection. Directed rounding keeps the true value between the computed
endpoints, proving the lemma.

\end{proof}

\subsection{Absorbing the box width's dependence on \texorpdfstring{$n$}{n}}
\Cref{prop:optimization} uses $\xi=\varepsilon/(1024W)$, where
$W=O(\log(n\lambda/\varepsilon))$ and $\varepsilon=\eta$ in the main
algorithm. The following inequality absorbs the additional dependence on
$n$ from $W$ into a polynomial factor of absolute degree.

\begin{lemma}\label{lem:absorb-logarithm}
For $t\ge1$ and $b,z\ge0$,
\begin{equation}\label{eq:absorb-logarithm}
 t[b+\log(z+e)]^2
 \le2t[b+\log t+1]^2+2z.
\end{equation}
\end{lemma}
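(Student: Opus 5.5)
We prove \eqref{eq:absorb-logarithm} by splitting into two cases according to whether $\log(z+e)$ is dominated by $b+\log t+1$ or not. In each case only one of the two terms on the right-hand side is needed. The factor $2$ leaves slack: we compare $[b+\log(z+e)]^2$ with $[b+\log t+1]^2$ in the first case, and in the second case we show that the whole left-hand side is at most $2z$.

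\emph{Case 1: $\log(z+e)\le\log t+1+b$.} We do not need the full strength here. It suffices to have $\log(z+e)\le \log t+1+b'$ for some $b'$ with $b+b'\le \sqrt2\,(b+\log t+1)$. The clean subcase is $z+e\le e t$, that is, $\log(z+e)\le\log t+1$. Then
\[
 b+\log(z+e)\le b+\log t+1,
\]
and squaring and multiplying by $t$ gives the bound with constant $1\le2$. So we take Case 1 to be $z\le e(t-1)$, more precisely $z+e\le et$.

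\emph{Case 2: $z+e>et$, so $t<(z+e)/e$.} Now $\log(z+e)>\log t+1$, and we want to use the $2z$ term. First expand
\[
 [b+\log(z+e)]^2\le 2[b+\log t+1]^2+2[\log(z+e)-\log t-1]^2,
\]
using $(p+q)^2\le2p^2+2q^2$ with $p=b+\log t+1$ and $q=\log(z+e)-\log t-1\ge0$. It then remains to show
\[
 t\,[\log(z+e)-\log t-1]^2\le z.
\]
Put $u=(z+e)/(et)>1$. The left side equals $t(\log u)^2$, and $z=etu-e$. So we need $t(\log u)^2\le e(tu-1)$. Since $t\ge1$, we have $tu-1\ge t(u-1)$, so it suffices to prove $(\log u)^2\le e(u-1)$ for $u\ge1$. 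This follows from $\log u\le\sqrt u-1/\sqrt u$, which can be checked by differentiation, together with $(\sqrt u-1/\sqrt u)^2=(u-1)^2/u\le u-1$ for $u\ge1$. In fact this gives $(\log u)^2\le u-1$, which is stronger than needed.

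Combining the two cases proves the lemma. The main obstacle is Case 2, where one must choose the right substitution $u=(z+e)/(et)$ so that the $t$-dependence factors out. This uses $t\ge1$ to pass from $tu-1$ to $t(u-1)$. After that, only the elementary inequality $(\log u)^2\le u-1$ on $[1,\infty)$ remains.
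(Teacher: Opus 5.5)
Your proof is correct and is essentially the paper's argument. You write $b+\log(z+e)$ as $(b+\log t+1)$ plus a logarithmic remainder, apply $(p+q)^2\le 2p^2+2q^2$, and bound $t$ times the squared remainder by $z$ using $(\log u)^2\le u-1$, which is the paper's $\log(1+x)\le\sqrt x$.

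The only difference is that the paper uses the cruder estimate $z+e\le te(1+z/t)$. Its remainder $\log(1+z/t)$ is nonnegative for every $z\ge0$, so no case split is needed. Also, your opening remark that Case~2 bounds the whole left-hand side by $2z$ does not match what you actually do, though the Case~2 argument itself is sound.
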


In the running-time expression of \cref{sec:main-proof}, take
$t=\sqrt\lambda$, $z=\log n$, and
$b=O(\log(\lambda/\eta))$. The $2z$ term gives an absolute
polynomial factor in $n$ after exponentiation, and $\log t$ is absorbed
into $b$. This proves \eqref{eq:main-time}, with an absolute
polynomial degree.

\begin{proof}[Proof of \cref{lem:absorb-logarithm}]
$z+e\le te(1+z/t)$ and $\log(1+x)\le\sqrt x$ for $x\ge0$.
Apply $(u+v)^2\le2u^2+2v^2$.
\end{proof}

\section{Amplification to a relative approximation}\label[appendix]{sec:amplification}
The logarithmic saving in \cref{thm:polynomial} leaves a substantial gap to a relative approximation. A fixed power saving would already close that gap by the standard amplification observation attributed to Barvinok in Linial, Samorodnitsky, and Wigderson~\cite[Section~1.1]{LSW00}. We follow the disjoint-copy argument in Anari and Rezaei~\cite[Section~1.1]{AR21}, including rational endpoints, then derive its consequence for matrices of degree at most three.

A deterministic fully polynomial time approximation scheme (FPTAS) returns rational endpoints $L\leq\per(A)\leq U$ with $U/L\leq1+\epsilon$, in time polynomial jointly in the input encoding length and $\epsilon^{-1}$.

\begin{theorem}[Amplification of a fixed power saving~\cite{LSW00,AR21}]\label{thm:amplification}
Suppose that, for fixed constants $C\geq1$ and $0<\delta\leq1$, a deterministic algorithm running in polynomial time returns positive rational numbers $L_N,U_N$ satisfying
\[
 L_N\leq\per(A)\leq U_N,
 \qquad
 \frac{U_N}{L_N}\leq\exp(CN^{1-\delta})
\]
for every nonnegative rational matrix $A$ of order $N$. Then the permanent has a deterministic FPTAS.
\end{theorem}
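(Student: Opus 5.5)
We prove \cref{thm:amplification} with the disjoint-copy construction. Given $A$ of order $n$ and accuracy $0<\epsilon\le1$, we choose an integer $r\ge1$ and form the block-diagonal matrix $A^{\oplus r}=\operatorname{diag}(A,\ldots,A)$ of order $N=rn$. A permutation of $[N]$ contributes a nonzero product only if it maps each diagonal block to itself, so $\per(A^{\oplus r})=\per(A)^r$. The encoding length of $A^{\oplus r}$ is $r$ times that of $A$ plus the zero entries, which is $O(rS+r^2n^2)$. This length stays polynomial whenever $r$ is polynomial in $n$ and $\epsilon^{-1}$. Running the hypothesized algorithm on $A^{\oplus r}$ gives rationals $L_N\le\per(A)^r\le U_N$ with $\log(U_N/L_N)\le C(rn)^{1-\delta}$.

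Taking $r$th roots gives
\[
 L_N^{1/r}\le\per(A)\le U_N^{1/r},\qquad \log\frac{U_N^{1/r}}{L_N^{1/r}}\le Cn^{1-\delta}r^{-\delta}.
\]
We want this width to be at most $\log(1+\epsilon)-$ (rounding slack). It suffices that $Cn^{1-\delta}r^{-\delta}\le\epsilon/4$, since $\log(1+\epsilon)\ge\epsilon/2$ for $\epsilon\le1$. So we take $r=\lceil(4C/\epsilon)^{1/\delta}n^{(1-\delta)/\delta}\rceil$. Because $C$ and $\delta$ are fixed, $r$ is polynomial in $n$ and $\epsilon^{-1}$, with degree depending on $\delta$. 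The algorithm is therefore polynomial in the input length and $\epsilon^{-1}$. If $\per(A)=0$, the matching test detects this and we return $0$.

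The main obstacle is the rational bookkeeping: $L_N^{1/r}$ and $U_N^{1/r}$ are generally irrational, and $L_N,U_N$ may have large encoding lengths. We use \cref{lem:directed-evaluations} to compute $\ell\le\log L_N$ and $u\ge\log U_N$ within additive error $\epsilon r/8$. Then we compute a rational lower bound for $e^{\ell/r}$ and a rational upper bound for $e^{u/r}$, each accurate to logarithmic error $\epsilon/16$. We do this with the same $2^{-K}$ precision argument used for the endpoints in \cref{app:implementation}. Here $|\log L_N|$ and $|\log U_N|$ are bounded by their encoding lengths, which are polynomial, so $K$ and the precision stay polynomial.

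The rounding enlarges the logarithmic width by at most $\epsilon/8+\epsilon/8<\epsilon/4$. The final width is therefore at most $\epsilon/2\le\log(1+\epsilon)$, which gives $U/L\le1+\epsilon$ and the required deterministic FPTAS.
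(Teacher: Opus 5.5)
Your proof is correct and follows the paper's disjoint-copy argument: block-diagonal copies $A^{\oplus r}$, the bound $Cn^{1-\delta}r^{-\delta}$ on the logarithmic width after taking $r$th roots, and an $r$ polynomial in $n$ and $\epsilon^{-1}$. Two implementation details differ.

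\textbf{Choice of $r$.} The paper sets $q=\lceil1/\delta\rceil$ and $r=n^{q-1}\lceil 8C/\epsilon\rceil^q$. This $r$ is an integer computed by integer arithmetic, and $q\delta\ge1$ gives width at most $\epsilon/8$. Your formula $r=\lceil(4C/\epsilon)^{1/\delta}n^{(1-\delta)/\delta}\rceil$ takes a real power with a possibly irrational exponent. Any larger integer works, so replace it by an integer-arithmetic choice of the paper's form, or first shrink $\delta$ to a rational. Shrinking $\delta$ preserves the hypothesis.

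\textbf{Rational roots.} The paper avoids transcendental functions: \cref{lemma:root-amplification} bisects for $x^{1/r}$ and decides each step by comparing exact integer $r$th powers. Your route through \cref{lem:directed-evaluations} (logarithm, divide by $r$, exponential) also works and reuses existing machinery, at the cost of extra precision bookkeeping.

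\textbf{Bookkeeping slip.} If $\ell$ and $u$ are each within $\epsilon r/8$ of $\log L_N$ and $\log U_N$, dividing by $r$ adds $\epsilon/4$ to the width, not $\epsilon/8$. The exponentials add a further $\epsilon/8$. The final width is then $5\epsilon/8$, not $\epsilon/2$. Also, $\epsilon/8+\epsilon/8$ equals $\epsilon/4$; it is not less than it. The conclusion still holds, since $\log(1+\epsilon)\ge\epsilon\log2>5\epsilon/8$ on $(0,1]$. Alternatively, request accuracy $\epsilon r/16$ for $\ell$ and $u$ to recover your stated $\epsilon/2$.
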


It suffices to make the logarithmic interval width small after taking the root corresponding to the number of copies. The following lemma separates that calculation from the rational arithmetic.

\begin{lemma}[Recovering an interval from disjoint copies]\label{lemma:root-amplification}
Let $r\geq1$ be an integer, and suppose positive rational numbers $L_r,U_r$ satisfy $L_r\leq\per(A)^r\leq U_r$. For rational $0<\alpha\leq1$, one can compute rational endpoints $L\leq\per(A)\leq U$ satisfying
\[
 \frac UL\leq\left(\frac{U_r}{L_r}\right)^{1/r}(1+\alpha)^2.
\]
The running time is polynomial in $r$ and the encoding lengths of $L_r,U_r,\alpha$.
\end{lemma}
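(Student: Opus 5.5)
The plan is to take $r$th roots of $L_r$ and $U_r$ numerically, using directed rounding, and pay a factor $(1+\alpha)$ at each endpoint. Since $L_r\le\per(A)^r\le U_r$ and $t\mapsto t^{1/r}$ is increasing, the real numbers $\ell=L_r^{1/r}$ and $u=U_r^{1/r}$ satisfy $\ell\le\per(A)\le u$. They also satisfy $u/\ell=(U_r/L_r)^{1/r}$. It therefore suffices to compute rationals $L,U$ with
\[
 \ell/(1+\alpha)\le L\le\ell,\qquad u\le U\le(1+\alpha)u.
\]
Then $L\le\per(A)\le U$ and $U/L\le(1+\alpha)^2 u/\ell$, as claimed.

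To produce these approximations, I would work on the logarithmic scale and invoke \cref{lem:directed-evaluations}.
\begin{itemize}
\item Compute rational intervals of width at most $\zeta$ containing $\log L_r$ and $\log U_r$. Divide their endpoints by $r$ exactly in rational arithmetic. This gives intervals of width at most $\zeta/r\le\zeta$ containing $\log\ell$ and $\log u$.
\item Take the lower endpoint $x_-$ for $\log\ell$ and the upper endpoint $x_+$ for $\log u$.
\item Apply the directed exponential of \cref{lem:directed-evaluations}. Take $L$ to be a rational lower bound on $e^{x_-}$ and $U$ a rational upper bound on $e^{x_+}$, each with small relative error.
\end{itemize}

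The precision is fixed as in the endpoint argument of \cref{app:implementation}. Choose $\zeta$ as a rational of order $\alpha$, for instance $\zeta=\alpha/4$, using $\log(1+\alpha)\ge\alpha/2$ for $0<\alpha\le1$. The exponential's absolute error $2^{-p}$ must then translate into relative error at most $\alpha/4$. If $e^{x}\ge2^{-K}$, precision $p=O(K+\log(1/\alpha))$ suffices by the interval argument given there. Here $|x_\pm|\le\max(|\log L_r|,|\log U_r|)+1=O(\bits L_r+\bits U_r)$, so $K$ is polynomial in the input length.

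The total logarithmic loss at each endpoint is then at most $\zeta+\alpha/4\le\log(1+\alpha)$. The logarithms need precision $O(\log(1/\alpha))$ bits, and the exponentials $O(\bits L_r+\bits U_r+\log(1/\alpha))$ bits. All outputs have polynomial encoding length, and the single division by $r$ costs time polynomial in $\bits r\le r$.

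The main obstacle is the bookkeeping of the directed rounding. The lower endpoint must stay at most $\ell$ while the upper endpoint stays at least $u$. This has to hold even when $L_r$ is tiny, since the lower bound on $e^{x_-}$ must remain positive. The positivity is handled by the choice of $K$ from the magnitude of $x_-$, exactly as for $e^{T-E}$ in the appendix.
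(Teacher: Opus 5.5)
Your proof is correct. At the top level it matches the paper's: bracket each of $L_r^{1/r}$ and $U_r^{1/r}$ between rationals whose ratio is at most $1+\alpha$, then keep the outer bounds.

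The difference is in how the roots are computed. The paper uses no transcendental functions. For a positive rational $x$ of encoding length at most $B$, it bisects $[2^{-B},2^B]$, which contains $x^{1/r}$. Each step compares the $r$th power of the midpoint with $x$ by exact integer cross-multiplication, and the search stops at width $\alpha2^{-B}$. The lower end of the search interval then gives positivity of the lower endpoint and the ratio bound $1+\alpha$ for free. Its cost is polynomial in $r$ because it forms $r$th powers explicitly, with integers of bit length roughly $r(B+\log(1/\alpha))$.

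Your route takes directed logarithms, divides exactly by $r$, and applies directed exponentials from \cref{lem:directed-evaluations}. It depends on those series-based evaluations and needs the magnitude parameter $K$ to keep the lower endpoint positive. In exchange, its running time depends on $r$ only through $\bits r$, which is stronger than the lemma requires. In the application this gain is irrelevant, since the block matrix $A^{\oplus r}$ already has size linear in $r$.

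Your error budget is sound: $\zeta/r+\alpha/4\le\alpha/2\le\log(1+\alpha)$ for $0<\alpha\le1$.
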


\begin{proof}[Proof of \cref{thm:amplification}]
Let $0<\epsilon\leq1$. Increase $C$ to an integer, put $q=\lceil1/\delta\rceil$, and set
\[
 r=n^{q-1}\left\lceil\frac{8C}{\epsilon}\right\rceil^q.
\]
Form the block diagonal matrix $A^{\oplus r}$ consisting of $r$ copies of $A$. Its order is $rn$, and its permanent is $\per(A)^r$. Apply the assumed algorithm. Since $q\delta\geq1$, taking $r$th roots of its endpoints gives an interval of logarithmic width at most
\[
 \frac{C(rn)^{1-\delta}}{r}
 =Cn^{1-\delta}r^{-\delta}\leq\epsilon/8.
\]

Apply \cref{lemma:root-amplification} with $\alpha=\epsilon/16$. The resulting rational endpoints have ratio at most
\[
 e^{\epsilon/8}(1+\epsilon/16)^2
 \leq e^{\epsilon/4}\leq1+\epsilon,
\]
where the last inequality uses $\log(1+\epsilon)\geq\epsilon/2$ on $[0,1]$.
The number of copies, the explicit matrix size, the assumed algorithm's output length, and the root computations are polynomial in the original input length and $\epsilon^{-1}$. The exponents can depend on the fixed $C,\delta$. Larger $\epsilon$ is handled by using $\epsilon=1$.
\end{proof}

\begin{proof}[Proof of \cref{lemma:root-amplification}]
It is enough to bound each endpoint's $r$th root between positive rational numbers whose ratio is at most $1+\alpha$, then use the outer bounds. For a positive rational $x$ of encoding length at most $B$, both $x$ and $x^{1/r}$ lie between $2^{-B}$ and $2^B$. Bisect this interval, deciding which side contains the root by comparing the midpoint's $r$th power with $x$. Stop when the interval has width at most $\alpha2^{-B}$. Its lower endpoint is at least $2^{-B}$, so its ratio is at most $1+\alpha$.

There are $O(B+\log(1/\alpha))$ bisection steps. Each comparison can be performed by taking integer powers of the numerators and denominators and cross-multiplying. The resulting integers have bit lengths polynomial in $r,B,\bits\alpha$. The two outer endpoints therefore have the asserted ratio and can be computed in the stated time.
\end{proof}

The argument with disjoint copies uses only closure under disjoint unions and multiplicativity of the quantity being approximated. In particular, it can be applied after a reduction to a restricted class of matrices.

\begin{corollary}[Zero-one matrices of degree at most three]\label{cor:subcubic}
The hypothesis of \cref{thm:amplification} need only hold for zero-one matrices with at most three nonzero entries in each row and column.
\end{corollary}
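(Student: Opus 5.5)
The plan is to give a deterministic, approximation-preserving reduction from general nonnegative rational matrices to zero-one matrices with at most three ones per row and column, with polynomial blow-up. We then apply \cref{thm:amplification} through that reduction. This uses the gadget reduction of \cref{app:gadgets}, whose content I expect to be of the following form. Given $A$ of order $n$ and a rational accuracy parameter $\beta\in(0,1]$, one computes in time polynomial in the input length and $\beta^{-1}$ a zero-one matrix $B$ of order $N\le\poly(S,\beta^{-1})$ with degree at most three, together with a positive rational $c$, such that
\[
 e^{-\beta}\,c\,\per(A)\le\per(B)\le e^{\beta}\,c\,\per(A).
\]
The construction has three steps. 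Replace each entry by a rational approximation whose numerator and denominator factor into small parts; this costs relative error $\beta$ in total across the $n$ factors of each permutation. Realize each numerator and each inverse denominator by a chain of series and parallel zero-one gadgets of bounded degree. Split high-degree rows and columns using degree-three trees of path gadgets that force each row to route to exactly one chosen column. The multiplicative constant $c$ collects the common denominators and the gadget overcounts.

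Given this reduction, assume the hypothesis of \cref{thm:amplification} holds only on the restricted class. Fix $\epsilon\in(0,1]$. The key observation is that the disjoint-copy amplification needs only that the class is closed under direct sums and that the permanent is multiplicative on them. The class of zero-one degree-$\le3$ matrices is such a class. So first build $B$ from $A$ with $\beta=\epsilon/8$. Next form $B^{\oplus r}$ with $r=N^{q-1}\lceil 8C/\epsilon\rceil^q$ as in the proof of \cref{thm:amplification}. Then run the assumed algorithm on it, and take $r$th roots via \cref{lemma:root-amplification} with $\alpha=\epsilon/16$. This gives an interval for $\per(B)$ of ratio at most $e^{\epsilon/4}$. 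Dividing by the rational $c$ and widening by $e^{\pm\beta}$, using rounded directed evaluations from \cref{lem:directed-evaluations}, gives an interval for $\per(A)$ of logarithmic width at most $\epsilon/4+2\beta+O(\epsilon/16)$. By rescaling constants this is at most $\log(1+\epsilon)$. The running time is polynomial in $S$ and $\epsilon^{-1}$ because $N$, $r$, and the bit lengths are all polynomial.

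The main obstacle is the reduction itself, specifically keeping the error and the size simultaneously polynomial. Entries like $2^{-b}$ or huge numerators must be encoded with $O(\text{bits})$ gadgets. I would write each integer in binary and use doubling gadgets: a degree-three cycle-of-squares with permanent $2$ in series realizes $2^k$ with $O(k)$ vertices, and parallel composition realizes sums. Denominators I would clear globally by multiplying each row by the product of its denominators, which is absorbed into $c$. This way exact rational weights become integer weights and no approximation is needed except possibly in the bookkeeping of $c$. I would also check carefully that the gadget interfaces contribute no spurious perfect matchings, which is the step most likely to require a careful local argument.
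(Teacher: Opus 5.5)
Your final plan is correct and is essentially the paper's proof via \cref{lemma:gadget}: clear denominators row by row into $c$, encode the integer weights by unweighted gadgets of size linear in their bit lengths, reduce degrees with path gadgets, then run the disjoint-copy argument of \cref{thm:amplification} on copies of the reduced matrix (which stay in the class) and divide the endpoints by $c$; the paper's doubling construction $C\mapsto 2C+b$ is the same idea as your series--parallel gadgets. Drop the approximate $e^{\pm\beta}$ reduction in your first paragraph, since it is unnecessary and ``inverse denominator'' zero-one gadgets cannot exist (such gadgets only contribute nonnegative integer multiplicities), so the exact identity $\per(Q)=c\per(A)$ is the version that works.
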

\begin{proof}
\Cref{lemma:gadget} constructs, in polynomial time, a zero-one matrix $Q$ in this class and a positive rational $c$ such that $\per(Q)=c\per(A)$. Apply the proof of \cref{thm:amplification} to copies of $Q$, which stay in the same class, and divide the final endpoints by $c$. The order of $Q$ and the encoding length of $c$ are polynomial in the input length of $A$, so the amplified algorithm and the final divisions take polynomial time in that length and $\epsilon^{-1}$.
\end{proof}

\paragraph{Why the present bound does not give an FPTAS.}
An approximation guarantee $\exp(g(N))$ applied to $r$ copies gives a logarithmic width bound of $g(rn)/r$ after taking roots. For $g(N)=N/\log^cN$ and $r\leq n^K$, this bound is at least
\[
 \frac{n}{[(K+1)\log n]^c}.
\]
Thus polynomially many copies do not turn a logarithmic saving into a relative guarantee, even at a fixed target accuracy. Likewise, substituting $\lambda\asymp n^{2\delta}$ and $\eta\asymp n^{-\delta}$ into \cref{thm:main} gives the needed error scale $O(n^{1-\delta})$ but a time bound
\[
 \poly(S+n)\exp(O(n^\delta\log^2n)).
\]
The approximation theorem therefore does not meet the hypothesis of \cref{thm:amplification}.

\section{The all-ones example}\label[appendix]{app:all-ones}
The error bound in \cref{lemma:envelope} has the right order for the
optimized objective $\operatorname{OPT}_A$. This already holds for the
all-ones matrix, whose permanent is $n!$.

\begin{proposition}[Error for the all-ones matrix]\label{prop:all-ones-gap}
Let $A$ be the $n\times n$ all-ones matrix and $\Lambda\ge1$.
The uniform logarithmic weights $y_{ij}=\log(\Lambda/n)$ attain
$\operatorname{OPT}_A(\log\Lambda)$. If $n\ge4\sqrt\Lambda$, then
\[
 \operatorname{OPT}_A(\log\Lambda)-\log(n!)
 \ge\frac{\log2}{2}\,\frac n{\sqrt\Lambda}.
\]
\end{proposition}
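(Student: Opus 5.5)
The plan has two parts: first show by symmetry and convexity that the uniform weights are optimal, then bound the objective at those weights from below with an explicit single-term estimate.

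\emph{Optimality of the uniform weights.} For the all-ones matrix, $\mathcal E$ is the complete bipartite edge set, and $\Phi_A$ and $\mathcal D_{\log\Lambda}$ are both invariant under independent permutations of rows and columns. The function $\Phi_A$ is convex by \cref{fact:lipschitz}. The set $\mathcal D_a$ is convex, because each constraint $\sum_{e\ni v}e^{y_e}\le e^a$ is a sublevel set of a convex function.

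By \cref{lemma:convex-value} there is a minimizer $y^*$. Averaging $y^*$ over the group $S_n\times S_n$ gives a feasible point that is no worse, by Jensen. The averaged point is constant, so some uniform vector $c\mathbf 1$ is optimal.

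Along uniform vectors, the derivative of $c\mapsto\Phi_A(c\mathbf 1)$ is $\sum_e\mu_e-n=-d<0$, as in the proof of \cref{lemma:optimizer-rate}. The objective therefore decreases in $c$, so the largest feasible value $c=\log(\Lambda/n)$ is optimal. At that value each vertex has total weight $n\cdot\Lambda/n=\Lambda$.

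\emph{Explicit formula at the uniform weights.} Put $t=\Lambda/n$. Matchings of size $k$ number $\binom nk^2k!$, so $Z=\sum_k\binom nk^2k!\,t^k$. The assignment term equals $-n\log t$. Substituting $k=n-j$ and using $\binom nj^2(n-j)!/n!=\binom nj/j!$ gives
\[
 \operatorname{OPT}_A(\log\Lambda)-\log(n!)
 =\log\sum_{j=0}^n\binom nj\frac{1}{j!}\Bigl(\frac n\Lambda\Bigr)^{j}.
\]
All terms are positive, so it suffices to bound one term from below, choosing $j\asymp n/\sqrt\Lambda$.

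\emph{Single-term estimate.} Use $\binom nj\ge(n-j)^j/j!$ and $j!\le j^j$. The $j$th term is then at least
\[
 \Bigl(\frac{n(n-j)}{\Lambda j^2}\Bigr)^{j}.
\]
With $j=\lfloor n/(2\sqrt\Lambda)\rfloor$, the hypothesis $n\ge4\sqrt\Lambda$ gives $j\ge1$ and $j\le n/(2\sqrt\Lambda)$. Since $\Lambda\ge1$, we also have $(n-j)/n\ge1/2$. The base is therefore at least $4\cdot\tfrac12=2$, and the logarithm is at least $j\log2$.

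\emph{Main obstacle: constants.} The floor costs a constant factor. From $j\ge n/(2\sqrt\Lambda)-1\ge n/(4\sqrt\Lambda)$ one only gets $\tfrac{\log2}{4}\,n/\sqrt\Lambda$, not the claimed $\tfrac{\log2}{2}$. I expect the real work to be tuning $j$ to recover the stated constant. I would first use the sharper bound $j!\le j^j e^{-j}\sqrt{ej}$, which raises the base by roughly $e^2$ and leaves room to take $j\approx n/\sqrt\Lambda$. I would then check that the base stays at least $2$ and that $\lfloor\cdot\rfloor$ loses at most half of $j$ when $n\ge4\sqrt\Lambda$. If that fails, I would instead bound the whole sum from below using the Bessel-type generating function $\sum_j x^j/j!^2$ with $x=n^2/\Lambda$, corrected for $\binom nj\approx n^j/j!$.
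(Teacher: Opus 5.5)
Your symmetrization argument and the closed form $e^{g}=\sum_{j}\binom nj\frac1{j!}(n/\Lambda)^j$ are correct and match the paper. Your derivative argument for taking the largest uniform weight is equivalent to the paper's term-by-term monotonicity. There is, however, a gap in the single-term estimate. As you carry it out, it only gives $g\ge\frac{\log2}{4}\,\frac n{\sqrt\Lambda}$, so the stated constant $\frac{\log2}{2}$ is left unproved.

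The proposed repair also fails as written. The inequality $j!\le j^je^{-j}\sqrt{ej}$ is false for every $j\ge1$: at $j=1$ it reads $1\le e^{-1/2}$, and in general $j!\ge\sqrt{2\pi j}\,(j/e)^j$. The correct form $j!\le e\sqrt j\,(j/e)^j$, applied to both factorials, subtracts $\log(e^2j)$, which you would have to check separately at small $j$. Moreover, taking $j\approx n/\sqrt\Lambda$ makes your factor $(n-j)^j$ collapse when $\Lambda$ is near $1$.

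The loss comes from your binomial bound. After $j!\le j^j$, the bound $\binom nj\ge(n-j)^j/j!$ becomes $((n-j)/j)^j$. This is smaller than $(n/j)^j$ by the factor $(1-j/n)^j$, and bounding $1-j/n\ge\frac12$ is exactly what halves the base from $4$ to $2$. The missing step is the elementary bound $\binom nj\ge(n/j)^j$, which holds because each factor $\frac{n-i}{j-i}$ with $0\le i<j$ is at least $\frac nj$. Combining it with a single use of $j!\le j^j$ gives, for $j=\lfloor n/(2\sqrt\Lambda)\rfloor$,
\[
 \binom nj\frac1{j!}\Bigl(\frac n\Lambda\Bigr)^j\ge\Bigl(\frac{n^2}{\Lambda j^2}\Bigr)^j\ge4^j .
\]
Your own floor estimate $j\ge n/(4\sqrt\Lambda)$ then yields $g\ge j\log4\ge\frac{\log2}{2}\,\frac n{\sqrt\Lambda}$ with no further tuning. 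This is the paper's argument.
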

\begin{proof}
Average any feasible logarithmic weight vector over all row and column
permutations. Convexity and symmetry of $\mathcal D_{\log\Lambda}$ and
$\Phi_A$ show that the averaged vector is feasible, has constant
coordinates, and has no larger objective value. At uniform edge weight
$t$, write $y_{ij}=\log t$. The compensation for changing the weights
is $t^{-n}$, so
\[
 \exp(\Phi_A(y))=\sum_M t^{|M|-n}.
\]
Each term is nonincreasing in $t$. Thus the largest permitted uniform
weight, $t=\Lambda/n$, attains the minimum.

Write $g=\operatorname{OPT}_A(\log\Lambda)-\log(n!)$.
A matching with $j$ unmatched rows is specified by its unmatched rows
and columns and a bijection between the remaining vertices. There are
$\binom nj^2(n-j)!$ such matchings. Hence
\[
 e^g=\frac1{n!}\sum_{j=0}^n
 \binom nj^2(n-j)!\left(\frac n\Lambda\right)^j
 =\sum_{j=0}^n\binom nj\frac1{j!}\left(\frac n\Lambda\right)^j.
\]
Take $k=\lfloor n/(2\sqrt\Lambda)\rfloor$. The hypothesis gives
$n/(4\sqrt\Lambda)\le k\le n/(2\sqrt\Lambda)$.
Using just the term with $j=k$, together with
$\binom nk\ge(n/k)^k$ and $k!\le k^k$, gives
\[
 e^g\ge\left(\frac{n^2}{\Lambda k^2}\right)^k\ge4^k.
\]
Taking logarithms proves the lower bound.
\end{proof}

Together with \cref{lemma:envelope}, this gives logarithmic error
$\Theta(n/\sqrt\Lambda)$ for $1\le\Lambda\le n^2/16$.
The limitation concerns the optimized objective $\operatorname{OPT}_A$, even though
the permanent of this particular matrix is explicit.

\section{Reduction to zero-one matrices of degree at most three}\label[appendix]{app:gadgets}
We give the reduction for perfect matchings used in \cref{cor:subcubic}.
It uses the classical degree reduction in Valiant's proof of
Proposition~8~\cite{Valiant79Algebra} and the arithmetic path construction
from the proof of Theorem~1. We give explicit bipartite constructions
to track the maximum degree and the dependence on binary encoding length.

\begin{lemma}\label{lemma:gadget}
Given $A\in\Q_{\geq0}^{n\times n}$, one can construct in polynomial time a zero-one matrix $Q$ with at most three ones per row and column, and a positive rational $c$, such that $\per(Q)=c\per(A)$.
\end{lemma}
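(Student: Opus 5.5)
We build $Q$ in three stages, tracking at each stage the constant $c$ and the fact that the degree stays at most three. The stages are: clear denominators, replace each integer entry by a zero-one gadget via binary expansion, then reduce degrees by splitting high-degree rows and columns.

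\textbf{Clearing denominators.} Multiply $A$ by the product $D$ of all denominators, giving an integer matrix $A'=DA$ with $\per(A')=D^n\per(A)$. Both $D$ and the entries of $A'$ have encoding length polynomial in $S$. Drop zero entries. Each remaining entry $a=A'_{ij}$ is a positive integer with $O(S)$ bits.

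\textbf{Integer weights as zero-one gadgets.} Work with the bipartite graph, where $\per$ counts weighted perfect matchings. We will replace each edge $ij$ of weight $a$ by a bipartite subgraph attached at $i$ and $j$. The subgraph must satisfy three conditions. When neither endpoint uses the gadget, it has exactly one perfect matching of its internal vertices. When the gadget connects $i$ to $j$, it has exactly $a$ perfect matchings. All other attachment patterns have none.

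Write $a=\sum_k 2^{k}b_k$ with $b_k\in\{0,1\}$. Following Valiant's arithmetic path construction, we use a chain of "doubling" squares: two parallel paths with crossing rungs, so each square offers two alternating choices. Bit positions are then tapped by bypass edges, so that a traversal from $i$ to $j$ through the chain picks up a factor $2^k$ from the part of the chain it uses. Summing over the taps gives $a$.

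Alternating paths must be of odd length and balanced in the bipartition, so we subdivide edges by even-length paths. Such subdivisions contribute a factor $1$ in both states. I expect the main obstacle to be getting this gadget right: the "idle" state must have exactly one internal perfect matching, so that $c$ is a clean product, and every gadget vertex must have degree at most three. I would verify it by a transfer-matrix count along the chain. If an idle-state factor $\beta\ne1$ appears, it is uniform across gadgets and can be absorbed into $c$ as $\prod\beta_e$, provided it is the same whether or not the edge is used. If it is not, normalize by rescaling instead.

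\textbf{Degree reduction.} Gadget vertices already have degree at most three, but an original row vertex $i$ may touch many gadgets. Replace $i$ by a path $i=x_1,y_1,x_2,y_2,\ldots,x_d$ of alternating new vertices, as in Valiant's Proposition~8. Each $x_t$ keeps one original incident edge. In a perfect matching exactly one $x_t$ is matched outward, and the path's remaining vertices are then matched uniquely. This changes no weights and keeps the degree at most three. Do the same for columns.

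The final graph is bipartite, since all constructions respect the sides. Its biadjacency matrix $Q$ is zero-one with at most three ones per row and column, and
\[
 \per(Q)=c\,\per(A),\qquad c=D^n\prod_e\beta_e .
\]
Sizes are polynomial because each gadget has $O(\bits(a))$ vertices. Checking that $Q$ is square follows from the balanced construction of each gadget.
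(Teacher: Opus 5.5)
\textbf{Gap: the integer-weight gadget is never constructed.} Your denominator clearing (with $c=D^n$) and your path-based degree reduction are both fine; the latter is exactly the paper's degree-reduction step. The gap is the middle step, which is the real content of the lemma. You never construct the zero-one gadget for an integer weight $a$, and you explicitly defer checking its idle and active counts.

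The object you sketch, an undirected ladder of $4$-cycles with bypass edges, is the kind of structure for which those counts go wrong. An isolated $4$-cycle already has two perfect matchings in the idle state. In an undirected ladder, alternating paths from $i$ to $j$ can run back and forth across rungs, so there is no reason the ratio of active to idle completions should equal $a$. Your fallback of absorbing an idle factor $\beta_e$ into $c$ works only if the active count is exactly $a\beta_e$ (up to a factor uniform over all edges), which is the very thing left unproved. ``Normalize by rescaling'' is not available once the matrix must be zero-one.

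\textbf{Missing idea: enforce direction.} The paper first builds an acyclic directed graph with exactly $k$ directed $s$--$t$ paths, using $O(\log k)$ vertices and arcs. This follows Horner's rule: a diamond doubles the path count, and an extra arc from $s$ to the new terminal adds one when the current bit is $1$. It then converts this digraph into a bipartite gadget:
\begin{itemize}
\item each directed vertex $z$ becomes a pair $L_z,R_z$ joined by a diagonal edge;
\item each arc $z\to z'$ becomes an edge $L_zR_{z'}$;
\item the gadget is attached by the edges $uR_s$ and $L_tv$.
\end{itemize}
In any perfect matching, each $L_z$ is matched either to $R_z$ or to $R_{z'}$ for an arc $z\to z'$. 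So the off-diagonal edges form directed paths and cycles of the digraph, and acyclicity rules out cycles. Consequently the all-diagonal matching is the unique idle completion. When both attachment edges are used, the off-diagonal edges trace exactly one $s$--$t$ path, giving exactly $k$ completions. Using only one attachment edge is impossible, because it leaves unequal numbers of internal left and right vertices.

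\textbf{Ordering of the steps.} This gadget can have large degree, which is why the paper encodes weights first and then applies the path degree reduction to every vertex. You can do the same, so there is no need to require degree at most three inside the gadget itself.
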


Clearing denominators reduces the task to positive integer edge weights.
We first replace these weights by an unweighted graph, then reduce the
vertex degrees. This order lets the weight construction use vertices
of unrestricted degree.

\begin{lemma}[Encoding integer weights]\label{lemma:integer-weights}
A bipartite graph with equally sized parts and positive integer edge weights can be replaced in polynomial time by an unweighted bipartite graph having the same permanent. The numbers of vertices and edges are bounded by a constant times the original number of vertices plus the total binary length of the edge weights.
\end{lemma}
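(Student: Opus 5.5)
The plan is to replace each edge of integer weight $a\ge1$ by a gadget: an unweighted bipartite subgraph with two distinguished terminals. The gadget should have a unique perfect matching when both terminals are removed. It should have exactly $a$ perfect matchings of the internal vertices when both terminals are present and the gadget is ``used''. Concretely, for an edge $ij$ with weight $a$, we subdivide it. We insert a new column vertex $x$ and a new row vertex $y$, with edges $ix$ and $yj$, together with a bipartite gadget joining $y$ and $x$. Edge $ij$ is then used in a perfect matching of the new graph exactly when $i$ is matched to $x$ and $j$ to $y$, so the gadget interior must be perfectly matched without $x,y$. Edge $ij$ is unused exactly when $x$ and $y$ are matched inside the gadget. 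So we need a gadget $\Gamma_a$ on vertex set containing $x,y$ with two properties. First, the number of perfect matchings of $\Gamma_a$ is $1$. Second, the number of perfect matchings of $\Gamma_a-\{x,y\}$ is $a$. Then every perfect matching of the new graph projects to a perfect matching of the original support, with multiplicity $\prod_{e\in M}a_e$, and the permanent is preserved. We must also check that mixed configurations are impossible, for instance $i$ matched to $x$ but $y$ matched inside the gadget. A parity count on the two sides of $\Gamma_a$ rules these out: the interior of $\Gamma_a$ minus one terminal is unbalanced.

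To build $\Gamma_a$ with size $O(\log a)$, I would use the arithmetic path construction from the binary expansion of $a$. First construct a chain of ladder-like blocks, each of which doubles the count. A $4$-cycle has two perfect matchings, so $k$ disjoint $4$-cycles give $2^k$. Then combine these in a series--parallel fashion realizing Horner's rule: $a=2(\cdots(2b_{k}+b_{k-1})\cdots)+b_0$. Parallel composition adds counts and series composition multiplies them. Here ``parallel'' means alternative internal paths sharing the terminals, and the alternatives are forced to be mutually exclusive by the same parity argument. I would first design these two composition operations on two-terminal bipartite gadgets, where the invariant is the pair (matchings with terminals present, matchings with terminals removed). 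I would verify that series composition multiplies the second coordinate while keeping the first equal to $1$, and that adding a $1$ (a direct path of length one between the inner terminals) adds to the second coordinate. Each bit of $a$ costs $O(1)$ vertices and edges, which gives the claimed linear size bound in the total binary length.

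The main obstacle is designing the parallel (addition) step. The two branches must not be matched simultaneously, and the ``terminals present'' count must stay exactly $1$ so that unused edges contribute weight $1$. I would first try the standard Valiant-style path gadget. There, a directed path $s\to t$ corresponds to the weight via cycle covers, and I would translate it to the bipartite perfect matching setting by splitting each vertex $v$ into a row copy $v^-$ and a column copy $v^+$ joined by a loop edge $v^-v^+$. Cycle covers of a digraph with self-loops correspond exactly to perfect matchings of this bipartite double. In that language, adding alternative $s\to t$ paths with internal vertices carrying self-loops makes the count of cycle covers using an $s\to t$ path equal the number of paths, which is $a$ by the binary layering. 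The count of covers not using the gadget is $1$ (all self-loops). This reformulation should make exclusivity automatic and is the version I would write up.
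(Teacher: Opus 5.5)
Your final version, the one you say you would write up, is the paper's proof. The gadget is the bipartite double of an acyclic digraph with exactly $a$ directed $s$--$t$ paths, with a diagonal edge for every gadget vertex (including $s$ and $t$) and attachment edges $i s^{+}$ and $t^{-} j$; acyclicity forces the unique all-diagonal matching when the edge is unused, and your parity count excludes the mixed cases. The one piece you assert rather than build is the $O(\log a)$ digraph, which the paper constructs by your Horner recursion $C\mapsto 2C+b$: each bit doubles through two fresh intermediate vertices into a fresh terminal, which avoids parallel arcs that a zero-one matrix could not encode, and a $1$-bit also adds an arc from $s$ to that terminal.
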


\begin{lemma}[Reducing vertex degrees]\label{lemma:degree-reduction}
A weighted bipartite graph with $n$ vertices in each part, $m$ edges, and no isolated vertices can be replaced in polynomial time by a weighted bipartite graph of maximum degree three with $2m-n$ vertices in each part. The perfect matchings of the two graphs are in a weight-preserving bijection. The new edge weights are the original weights and ones.
\end{lemma}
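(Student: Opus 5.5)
The construction replaces each vertex of degree $k\ge 4$ by a path of low-degree vertices, in the spirit of Valiant's degree reduction, with the original edges hung off the path. Take a row vertex $i$ of degree $k$ with incident edges $e_1,\ldots,e_k$ going to columns $j_1,\ldots,j_k$. Replace it by row vertices $r_1,\ldots,r_k$ and column vertices $s_1,\ldots,s_{k-1}$, and add unit-weight edges $r_t s_t$ and $s_t r_{t+1}$ for $1\le t\le k-1$. Then attach edge $e_t$ at $r_t$, so that $r_t j_t$ has the weight of $e_t$.

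The counting is done per side. Each column vertex $j$ of degree $d_j$ is handled symmetrically, and $\sum_i\deg(i)=m$. A row vertex of degree $k$ contributes $k$ row vertices and $k-1$ column vertices to the new graph. The path gadgets replacing row vertices therefore add $m-n$ column vertices, and the gadgets replacing column vertices produce $m$ column vertices in total. Altogether each side has $m+(m-n)=2m-n$ vertices, matching the statement. For uniformity I apply the gadget to every vertex, including those of degree $1,2,3$; for $k=1$ it is just $r_1$.

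After this, an original edge $ij$ joins the gadget copy $r_t$ of $i$ to the gadget copy $c_{t'}$ of $j$, where $c_{t'}$ is a column vertex on $j$'s path. Degrees are now bounded: each path vertex has at most two path neighbours plus at most one original edge, so at most three. Inner $s$-vertices have degree two. The graph is bipartite, because path edges alternate between rows and columns and original edges join a row copy to a column copy.

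The main step is the bijection. In the path $r_1 s_1 r_2\cdots s_{k-1} r_k$ there are $k$ row vertices and $k-1$ column vertices. Any perfect matching therefore matches exactly one $r_t$ through an external (original) edge, since only the $r$'s have external edges. Removing $r_t$ leaves two subpaths $r_1\ldots s_{t-1}$ and $s_t\ldots r_k$, and each has a unique perfect matching by unit path edges. So a perfect matching of the new graph selects exactly one original edge at each original vertex, with both endpoints agreeing on it, and the internal completion is forced. This gives a perfect matching of the original graph. Conversely, every original perfect matching extends uniquely. All path edges have weight one, so weights are preserved, and the construction is clearly polynomial time.

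The obstacle I expect is bookkeeping. An original edge must be attached at one copy in each endpoint's gadget simultaneously, and I need to confirm the parity argument on both sides. For a column-vertex gadget the roles swap: it is a column path $c_1 t_1 c_2\cdots$ with $k$ column vertices and $k-1$ row vertices, and the same argument applies. The no-isolated-vertices hypothesis ensures $k\ge1$.
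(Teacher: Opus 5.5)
Your proposal is correct and matches the paper's proof essentially step for step. The paper also replaces every vertex of degree $d$ by the path $L_1-R_1-\cdots-R_{d-1}-L_d$ with the $j$th original edge attached at $L_j$, uses the $d-1$ private vertices on the other side to force exactly one external edge with a unique internal completion, and obtains $2m-n$ vertices per side by the same count. Your note that the gadget must be applied to every vertex, not only those of degree at least four, to get exactly $2m-n$ is also what the paper does.
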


\begin{proof}[Proof of \cref{lemma:gadget}]
Multiply each row by the product of the denominators of its positive entries. This yields an integer matrix $B$ and a positive integer $c$ with $\per(B)=c\per(A)$. The bit length of each multiplier is at most the sum of the bit lengths of the row's denominators, so this step has polynomial cost.

Apply \cref{lemma:integer-weights} to the weighted graph of $B$.
The resulting unweighted graph has polynomially many vertices and edges
and the same permanent. Since it has a perfect matching, it has no
isolated vertices. Apply \cref{lemma:degree-reduction} to this graph.
All weights remain one, and the permanent is unchanged. The resulting
matrix $Q$ has polynomial size, at most three ones per row and column,
and $\per(Q)=c\per(A)$.
\end{proof}

\begin{proof}[Proof of \cref{lemma:integer-weights}]
For each weight $k>1$, we will construct an acyclic directed graph with
source $s$, sink $t$, and exactly $k$ directed $s$--$t$ paths, using
$O(\log k)$ vertices and edges. We first show why such a graph is sufficient.

To replace a weighted bipartite edge $uv$, form a left-right pair $(L_z,R_z)$ for each directed vertex $z$ and add its diagonal edge $L_zR_z$. For every arc $z\to z'$, add $L_zR_{z'}$. Delete $uv$ and attach the gadget by $uR_s$ and $L_tv$, giving all new edges weight one.

If neither attachment edge is selected, the only internal perfect matching consists of all diagonal edges: a nontrivial alternative would give a directed cycle in the original directed graph. If both attachment edges are selected, the selected nondiagonal internal edges correspond to one directed $s$--$t$ path in that graph. The remaining gadget vertices are matched diagonally. There are exactly $k$ such completions. Selecting exactly one attachment leaves different numbers of internal left and right vertices and is impossible. Thus the gadget has multiplicity one when the original edge is absent and $k$ when it is present.

Leave edges of weight one as they are. Each replacement adds equally
many left and right vertices, and its size is proportional to the
number of vertices and arcs in the directed graph. It remains to
construct the directed graphs.

Start with a single arc from $s$ to a current terminal, giving one
directed path. Read the binary digits of $k$ after its leading one.
For each digit $b\in\{0,1\}$, add arcs from the current terminal to
two fresh vertices and from both of them to a fresh terminal.
Every existing path now has two extensions. If $b=1$, also add a
direct arc from $s$ to the new terminal, contributing one more path.
Thus a path count $C$ becomes $2C+b$, and the final terminal $t$ has
exactly $k$ paths from $s$. Each step adds three vertices and at most
five arcs, so both counts are $O(\log k)$. All arcs lead to newly
created vertices, so the graph is acyclic and has no parallel arcs.
This proves the stated size bound for the unweighted graph.
\end{proof}

\begin{proof}[Proof of \cref{lemma:degree-reduction}]
The replacement for an original vertex must use exactly one of that vertex's incident edges in a perfect matching. Replace a left vertex of degree $d$ by the path
\[
 L_1-R_1-L_2-R_2-\cdots-R_{d-1}-L_d,
\]
and attach its $j$th original edge to $L_j$. Give all path edges weight one and each external edge its original weight. The $d-1$ private right vertices force exactly one external edge in every perfect matching. Once the external edge is chosen, there is exactly one way to match the remaining vertices of the path. For every original right vertex, apply the same construction with left and right interchanged. All degrees are now at most three, and the perfect matchings are in a weight-preserving bijection. Summing the sizes of the paths gives $2m-n$ vertices on each side.
\end{proof}

\noindent\textbf{Statement on AI use.}
The main ideas and proof strategies were developed with assistance from OpenAI
Codex using the 5.6 Sol and 6.0 Astra models. The authors studied, refined, simplified,
and verified the resulting material and take responsibility for every claim,
proof, and citation.
\end{document}